\def\eps{\varepsilon}
\def\RR{{\mathbb R}}
\def\XX{{\mathbb X}}
\def\RRR{{\mathcal R}}
\def\MM{{\mathbb M}}
\def \d{{\mathrm d}}
\def \ddim{{\mathrm {ddim}}}
\newcommand{\nnm}{ball\xspace}
\newcommand{\nnms}{balls\xspace}
\newcommand{\ball}{B}
\newcommand{\stadium}{D}
\newcommand{\cylinder}{C}
\newcommand{\ccylinder}{R}
\newcommand{\predVEh}{P_1}
\newcommand{\predVEv}{P_2}
\newcommand{\predSLh}{P_3}
\newcommand{\predSLv}{P_4}
\newcommand{\predEs}{P_5}
\newcommand{\predEe}{P_6}
\newcommand{\predMh}{P_7}
\newcommand{\predMv}{P_8}
\newcommand{\etal}{\textit{e{}t~a{}l.}\xspace}
\DeclareMathOperator{\kde}{\textsc{kde}}
\title{The VC Dimension of Metric Balls under Fr\'echet and Hausdorff Distances}
\author{Anne Driemel}{University of Bonn, Germany}{driemel@cs.uni-bonn.de}{}{Anne Driemel thanks the Hausdorff Center for Mathematics for their generous support and the Netherlands Organization for Scientific Research (NWO) for support under Veni Grant 10019853.}
\author{André Nusser}{Max Planck Institute for Informatics \&\\ Saarbrücken Graduate School of Computer Science, Germany}{anusser@mpi-inf.mpg.de}{}{} 
\author{Jeff M. Phillips}{University of Utah, USA}{jeffp@cs.utah.edu}{}{Jeff Phillips thanks his support from NSF CCF-1350888, ACI-1443046, CNS-1514520, CNS-1564287, and IIS-1816149.  Part of the work was completed while visiting the Simons Institute for Theory of Computing.}
\author{Ioannis Psarros}{National \& Kapodistrian University of Athens, Greece}{ipsarros@di.uoa.gr}{}{This research is co-financed by Greece and the European Union (European Social Fund- ESF) through the Operational Programme $\ll$ Human Resources Development, Education and Lifelong Learning $\gg$ in the context of the project ``Strengthening Human Resources Research Potential via Doctorate Research'' (MIS-5000432), implemented by the State Scholarships Foundation (IKY).} 
\authorrunning{A.\ Driemel, A. Nusser, J.\ M.\ Phillips, I.\ Psarros}%mandatory. First: Use abbreviated first/middle names. Second (only in severe cases): Use first author plus 'et al.'
\keywords{VC dimension, Fr\'echet distance, Hausdorff distance}%mandatory
\begin{document}

\maketitle

\begin{abstract}
The Vapnik-Chervonenkis dimension provides a notion of complexity for systems of sets. If the VC dimension is small, then knowing this can drastically simplify fundamental computational tasks such as classification, range counting, and density estimation through the use of sampling bounds. We analyze set systems where the ground set $X$ is a set of polygonal curves in $\mathbb{R}^d$ and the sets $\RRR$ are metric balls defined by curve similarity metrics, such as the Fr\'echet distance and the Hausdorff distance, as well as their discrete counterparts. We derive upper and lower bounds on the VC dimension that imply useful sampling bounds in the setting that the number of curves is large, but the complexity of the individual curves is small. Our upper bounds are either near-quadratic or near-linear in the complexity of the curves that define the ranges and they are  logarithmic in the complexity of the curves that define the ground set. 

%For metric balls defined on point sets (resp. point sequences) in $\mathbb{R}^d$ we show that the VC dimension is at most near-linear in $dk$, the complexity of the ball centers that define the ranges, and at most logarithmic in $dm$, the complexity of each point set of the ground set $X$. Our lower bounds show that these bounds are almost tight in all parameters $k$, $d$, and $m$. For the Fr\'echet distance defined on continuous polygonal curves in $\mathbb{R}^d$ we show that the VC dimension is at most near-quadratic in $dk$ and at most logarithmic in $dm$. For the Hausdorff distance we are able to extend our analysis to line segments in the plane and we show near-quadratic dependence on $k$ and logarithmic dependence on $m$. 
\end{abstract}

\section{Introduction}

A \emph{range space} $(X,\RRR)$ (also called \emph{set system}) is defined by a ground set $X$ and a set of ranges $\RRR$, where each $r \in \RRR$ is a subset of $X$. 
A data structure for range searching answers queries for the subset of the input data that lies inside the query range. In range counting, we are interested only in the size of this subset. In our setting, a range is a metric ball defined by a curve and a radius. The ball contains all curves that lie within this radius from the center under a specific distance function (e.g., Fr\'echet or Hausdorff distance).

A crucial descriptor of any range space is its VC-dimension~\cite{VC71,She72,Sau72} and related shattering dimension, which we define formally below.  These notions quantify how complex a range space is, and have played fundamental roles in machine learning~\cite{Vap98,AB99}, data structures~\cite{CW89}, and geometry~\cite{H11,BG95}.  For instance, specific bounds on these complexity parameters are critical for tasks as diverse as neural networks~\cite{AB99,KM95}, art-gallery problems~\cite{Val98,GK14,LL17}, and kernel density estimation~\cite{JKPV11}.  

The last five years have seen a surge of interest into data structures for trajectory processing under the Fr\'echet distance, manifested in a series of publications
\cite{deBerg-2013fast, Gudmundsson-fqt-15, Berg-spq-15, AD18, giscup2017, BaldusB2017, DutschV17, BuchinDDM17, Driemel-lshc-17, astefanoaei2018multi, emiris2018}. Partially motivated by the increasing availability and quality of trajectory data from mobile phones, GPS sensors, RFID technology and video analysis~\cite{Laurila_MDC_2012, Zheng-rfid-2016, Gudmundsson-stats-2017}.
Initial results in this line of research, such as the approximate range counting data structure by de Berg, Gudmundsson and Cook~\cite{deBerg-2013fast}, use classical data structuring techniques. Afshani and Driemel extended their results and in addition showed lower bounds on the space-query-time trade-off in this setting~\cite{AD18}. In particular, they showed a lower bound which is exponential in the complexity of the curves for exact range searching. In 2017, ACM SIGSPATIAL, the premier conference for geographic information science, devoted their software challenge (GIS CUP) to the problem of range searching under the Fr\'echet distance~\cite{giscup2017}. Spurring further developments, the most recent results explore the use of heuristics~\cite{bringmannKN19} and randomization~\cite{ceccarelloD019}.

The Fr\'echet distance is a popular distance measure for curves. Intuitively, it can be defined using the metaphor of a person walking a dog, where the person follows one curve and the dog follows the other curve, and throughout their traversal they are connected by a leash of fixed length. The Fr\'echet distance corresponds to the length of the shortest dog leash that permits a traversal in this fashion. The Fr\'echet distance is very similar to the Hausdorff distance for sets, which is defined as the minimal maximum distance of a pair of points, one from each set, under all possible mappings between the two sets. The difference between the two distance measures is that the Fr\'echet distance requires the mapping to adhere to the ordering of the points along the curve. Both distance measures allow flexible associations between parts of the input elements which sets them apart from classical $L_p$ distances and makes them so suitable for trajectory data under varying speeds. 

Our contribution in this paper is a comprehensive analysis of the Vapnik-Chervonenkis dimension of the corresponding range spaces. The resulting VC dimension bounds, while being interesting in their own right, have a plethora of applications through the implied sampling bounds. We detail a range of implications of our bounds in Section~\ref{sec:implications}.

\section{Definitions}\label{prelim}
In this section, we formally define the distances between curves as well as VC-dimension and range spaces, so we can state our main results.  
This basic set up will be enough to prove our results for the discrete variants of the distance measures we consider. The basic proofs in the discrete setting also serve as a template for the proofs in the main part of the paper. Starting in Section \ref{prelim2} we provide more advanced geometric definitions and properties about VC dimension which we then use in our proofs on the continuous variants of the distance measures we consider.

\subsection{Distance measures} 
In the following, we define the Hausdorff distance, the discrete and the continuous Fr\'{e}chet distance, and the Weak Fr\'{e}chet distance. We denote by $\|\cdot\|$ the Euclidean norm $\|\cdot\|_2$.

\begin{definition}[Directed Hausdorff distance.] Let $X$, $Y$ be two  subsets of some metric space $(M,\d)$. The directed Hausdorff distance from $X$ to $Y$ is:
\[
\d_{\overrightarrow{H}} (X,Y ) = \sup_{u\in X} \inf_{v\in Y} \d(u, v).
\]
\end{definition}

\begin{definition}[Hausdorff distance.] Let $X$, $Y$ be two  subsets of some metric space $(M,\d)$. The Hausdorff distance between $X$ and $Y$ is:
\[
\d_{H} (X,Y ) = \max \{d_{\overrightarrow{H}}(X,Y),d_{\overrightarrow{H}}(Y,X) \} .
\]
\end{definition}

 \begin{definition}
 Given polygonal curves $V$ and $U$ with vertices $v_1, \ldots, v_{m_1}$ and $u_1, \ldots, u_{m_2}$ respectively, a traversal $T=(i_1,j_1),\ldots,(i_t,j_t)$ is a sequence of pairs of indices referring to a pairing of vertices from the two curves such that:
\begin{enumerate}
 \item $i_1,j_1=1$, $i_t=m_1$, $j_t=m_2$. 
 \item $\forall (i_k, j_k)\in T:$ $i_{k+1}-i_k \in \{0,1\}$ and $j_{k+1}-j_k \in \{0,1\}$.
 \item $\forall (i_k, j_k)\in T:$ $(i_{k+1}-i_k)+(j_{k+1}-j_k)\geq1$.
\end{enumerate} 
 \end{definition} 

\begin{definition}[Discrete Fr\'{e}chet distance]\label{Ddist}
 Given polygonal curves $V$ and $U$ with vertices $v_1, \ldots, v_{m_1}$ and $u_1, \ldots, u_{m_2}$ respectively, we define the Discrete Fr\'{e}chet Distance between $V$ and $U$ as the following function:
 \[
 \d_{dF}(V,U)= \min_{T\in\mathcal{T}} ~ \max_{(i_k,j_k)\in T} \| v_{i_k}-u_{j_k}\|  ,
 \]
 where $\mathcal{T}$ denotes the set of all possible traversals for $V$ and $U$. 
 \end{definition}
 
Any polygonal curve $V$ with vertices $v_1,\ldots,v_{m_1}$ and edges $\overline{v_1 v_2},\ldots, \overline{v_{m_1-1}v_{m_1}}$  has a 
uniform parametrization that allows us to view it as a parametrized curve $v:~ [0,1]\mapsto \RR^2$.

\begin{definition}[Fr\'{e}chet distance]
Given two parametrized curves $u,v:~ [0,1]\mapsto \RR^2 $, their Fr\'{e}chet distance is defined as follows:
\[
\d_F(u,v)=\min_{f: [0,1] \mapsto [0,1] \atop g:[0,1] \mapsto [0,1]} ~\max_{\alpha\in[0,1]} \|v(f(\alpha))- u (g(\alpha))\|,
\]
where $f$ and $g$ range over all continuous, non-decreasing functions with $f(0) = g(0) = 0$, and $f(1) = g(1) = 1$.
\end{definition}

\begin{definition}[Weak Fr\'{e}chet distance]
Given two parametrized curves $u,v:~ [0,1]\mapsto \RR^2 $, their Weak Fr\'{e}chet distance is defined as follows:
\[
\d_{wF}(u,v)=\min_{f: [0,1] \mapsto [0,1] \atop g: [0,1] \mapsto [0,1]} ~\max_{\alpha\in[0,1]} \|v(f(\alpha))- u (g(\alpha))\|,
\]
where $f$ and $g$ range over all continuous functions with $f(0) = g(0) = 0$, and $f(1) = g(1) = 1$.
\end{definition}

%\hl{Anne: If we want to save some lines, we could merge the two definitions Frechet and weak Frechet distance into one definition. Similarly with directed and undirected Hausdorff distance.}

\subsection{Range spaces} 
Each range space can be defined as a pair of sets $(X,\RRR)$, where $X$ is the \textit{ground set} and $\RRR$ is the \textit{range set}. 
Let $(X,\RRR)$ be a range space. For $Y\subseteq X$, we denote: 
\[
\RRR_{|Y}= \{ R \cap Y \mid R \in \RRR \}.
\]
If $\RRR_{|Y} $ contains all subsets of $Y$, then $Y$ is \textit{shattered} by $\RRR$. 
\begin{definition}[Vapnik-Chernovenkis dimension]
The Vapnik-Chernovenkis dimension~\cite{Sau72, She72, VC71} (VC dimension) of $(X,\RRR)$ is the maximum cardinality of a shattered subset of $X$.
\end{definition}

\begin{definition}[Shattering dimension]
The shattering dimension of $(X,\RRR)$ is the smallest $\delta$ such that, for all m, 
\[ \max_{\substack{B\subset X\\|B|=m}} | \RRR_{|B} | =O(m^\delta).\]
\end{definition}
It is well-known that for a range space $(X,\RRR)$ with VC-dimension $\nu$ and shattering dimension $\delta$ that $\nu \leq O(\delta \log \delta)$ and $\delta = O(\nu)$.  So bounding the shattering dimension and bounding the VC-dimension are asymptotically equivalent within a log factor.  For a proof of this and other basic facts on range spaces we refer the reader to the textbook of Har-Peled~\cite{H11}.

%~\cite{AB99,H11} %Corollary 3.8 in AB99,  H11 Lemma 5.14

\begin{definition}[Dual range space]
Given a range space $(X,\RRR)$, for any $p\in X$, we define 
\[
\RRR_p=\{R \mid R \in \RRR, p \in R \}. 
\]
The dual range space of $(X,\RRR)$ is the range space $\left( \RRR,\{\RRR_p\mid p \in X \} \right) $.
\end{definition}
It is a well-known fact that if a range space has VC dimension $\nu$, then the dual range space has VC dimension $\leq 2^{\nu+1}$ (see e.g.~\cite{H11}).

There are many techniques for bounding the VC dimension of geometric range spaces.  For instance when the ground set is $\RR^d$ and the ranges are defined by inclusion in halfspaces, then the range space and its dual range space are isomorphic and both have VC-dimension and shattering dimension $d$.  When the ranges are defined by inclusion in balls, then the VC-dimension and shattering dimension is $d+1$, and the dual range spaces have bounds of $d$~\cite{H11}.    
It is also for instance known~\cite{BEHW89} that the composition ranges formed as the $k$-fold union or intersection of ranges from a range space with bounded VC-dimension $\nu$ induces a range space with VC-dimension $O(\nu k \log k)$, and it was recently shown by Csik{\'o}s et al.\ that this is tight for even some simple range spaces such as those defined by halfspaces~\cite{CKM18,csikos2019tight}.  
More such results are deferred to Section \ref{prelim2}.  

%Next we state a powerful theorem from Goldberg and Jerrum~\cite{GJ95} as improved and restated by Anthony and Bartlett~\cite{AB99}.  It allows one to easily bound the VC-dimension of geometric range spaces under our simple model of computation.  
%
%\begin{theorem}[Theorem 8.4 \cite{AB99}]
%\label{theorem:predicatevc}
%Suppose $h$ is a function from $\RR^d \times \RR^n$ to $\{0,1\}$ and let 
%\[ H=\{x \mapsto h(\alpha,x):~\alpha \in \RR^d\}
%\]
%be the class determined by $h$. Suppose that $h$ can be computed by an algorithm that takes as input the pair $(\alpha,x) \in \RR^d \times \RR^n$ and returns $h(\alpha,x)$ after no more than $t$ simple operations.  
%Then, the VC dimension of $H$ is $\leq 4d(t+2)$.
%\end{theorem}
%
%An example implication can be seen for geometric sets via Lemma \ref{lem:shapes}.  
%
%\begin{corollary}
%\label{cor:shapes-VC}
%For range spaces defined on $\RR^2$ with geometric sets $D_r(p)$, $\stadium_r(\overline{st})$, $\cylinder_r(\overline{st})$, or $\ccylinder_r(\overline{st})$ as ranges, the VC dimension is $O(1)$.  The same $O(1)$ VC dimension bound holds for the corresponding dual range spaces, with ground sets as the geometric sets, and ranges defined by stabbing using points in $\RR^2$.  
%\end{corollary}
%

\subsection{Range spaces induced by distance measures}

%\begin{definition}
Let $(M,\d)$ be a pseudometric space. We define the \textit{\nnm} of radius $r$ and center $p$, under the distance measure $\d$, as the following set:
\[
b_{\d}(p,r)=\{ x\in M \mid \d(x,p) \leq r \},
\]
where $p\in M$.
%\end{definition}
The doubling dimension of a metric space $(M,\d)$, denoted as $\ddim(M,\d)$, is the smallest integer $t$ such that any ball can be covered by at most $2^t$ balls of half the radius.

In this paper, we study the VC dimension of variants of range spaces $(X,\RRR)$ induced by pseudometric spaces\footnote{While we may use the term \emph{metric} or \emph{pseudometric} to define the range, our methods do not assume any metric properties of the inducing distance measure.} $(M,\d)$ by setting $X=M$ and \[\RRR = \{b_{\d}(p,r) ~|~ r \in \RR, r > 0, p \in M \}.\] 
It is a reasonable question to ask whether the doubling dimension of a metric space influences the VC dimension of the induced range space. In general, a bounded doubling dimension does not imply a bounded VC dimension of the induced range space and vice versa. 
Recently, Huang et al.~\cite{huang2018epsilon} showed that if we allow a small $(1+\eps)$-distortion of the distance function $\d$, the shattering dimension can be upper bounded by $O(\eps^{-O(\ddim(M,\d))})$. It is conceivable that the doubling dimension of the metric space of the Discrete Fr\'echet distance and Hausdorff distance is bounded, as long as the underlying metric has bounded doubling dimension. However, for the continuous Fr\'echet distance, the doubling dimension is known to be unbounded~\cite{DriemelKS16}. Moreover, we will see that much better bounds can be obtained by a careful study of the specific distance measure. 

Specifically, we study an \emph{unbalanced} version of the above range space, in the sense that we distinguish between the complexity of objects of the ground set and the complexity of objects defining the ranges. To this end, we define, for any integers $d$ and $m$, $\XX_m^d := \left(\RR^d\right)^m$ and we treat the elements of this set as ordered sets of points in $\RR^d$ of size $m$. Formally, we study range spaces with ground set $\XX_m^d$ and a range set of the form
 \[\RRR_{\d,k} = \left\{b_{\d}(p,r) \cap \XX^d_{m} ~|~ r \in \RR, r > 0, p \in \XX^d_k \right\}\] 
 under different variants of the Fr\'echet and the Hausdorff distance. We emphasize that the range space consists of ranges of all radii.

\section{Our Results} 
Table~\ref{tab:results} shows an overview of our bounds.
For metric balls defined on point sets (resp. point sequences) in $\mathbb{R}^d$ we show that the VC dimension is at most near-linear in $dk$, the complexity of the ball centers that define the ranges, and at most logarithmic in $dm$, the complexity of point sets of the ground set.
%For metric balls defined on point sets (resp. point sequences) in $\mathbb{R}^d$ we show that the VC dimension is at most near-linear in $dk$, the dimension times the size of the point sets of the ground set, and at most logarithmic in $m$, the complexity of the ball centers that define the ranges.  
Our lower bounds show that these bounds are almost tight in all parameters $k$, $d$, and $m$.  
For the Hausdorff distance, where the ground set $X$ consists of continuous polygonal curves in $\mathbb{R}^d$, we show an upper bound that is quadratic in $k$, quadratic in $d$ and logarithmic in $m$.  The same bound holds for the Fr\'echet distance, where the ground set consists of sets of line segments in $\mathbb{R}^d$.  We obtain slightly better bounds in $k$ for the Weak Fr\'echet distance.
Our lower bounds extend to the continuous case, but are only tight in the dependence on $m$ -- the complexity of the ground set.

\begin{table}[ht]
    \centering\def\arraystretch{1.3}
    \begin{tabular}{|c|c|c|c|}
    \hline
        %$X$, $m$ & $\RRR{}$, $k$ & Upper bound & Lower bound \\\hline\hline
        \multirow{2}{*}{discrete} & 
        \multirow{1}{*}{Hausdorff} & 
        \multirow{2}{*}{ $O(d k\log(d k m))$ (Theorems~\ref{ThmHaus},\ref{ThmDFD}) }
          & \multirow{5}{3.6cm}{\centering $\Omega(\max(d k \log k, \log d m))$\newline ($d \geq 4$, Theorem~\ref{ThmLbsExt})\\ \vspace{0.5\baselineskip}
           $\Omega(\max(k, \log m))$\newline ($ d\geq 2$, Theorem~\ref{ThmLbs})
          } \\
        \cline{2-2}
        & \multirow{1}{*}{Fr\'echet} & & \\
        \cline{1-3}
        \multirow{3}{1.5cm}{\centering continuous} &
        \multirow{1}{*}{ Hausdorff }& $O(d^2 k^2\log(dkm))$ (Theorem \ref{Thaus3}) & \\
        \cline{2-3} & weak Fr\'echet & $O(d^2 k\log(dkm))$ (Theorem \ref{ThmwF}) & \\
        \cline{2-3}
        & Fr\'echet & $O(d^2 k^2\log(dkm))$ (Theorem~\ref{TFr}) & \\
        \hline
    \end{tabular}
    \caption{Our bounds on the VC dimension of range spaces of the form  $(\XX^d_m,\RRR{}_{\d,k})$, for $\d$ being the distance measures in the table. In the first column we distinguish between $\XX^d_m$ consisting of \emph{discrete} point sequences vs.\ $\XX^d_m$  consisting of \emph{continuous} polygonal curves. The lower bounds hold for all distance measures in this table.}
    %The lower bound in in Theorem~\ref{ThmLbsExt} also applies to other distance measures. %Additional upper bounds on the range space under the directed Hausdorff distance are stated in Theorems~\ref{Thaus1} and~\ref{Thaus2}.}
    \label{tab:results}
\end{table}

While the VC dimension bounds for the discrete Hausdorff and Fr\'echet metric balls may seem like an easy implication of composition theorems for VC dimension~\cite{BEHW89,CKM18}, we still find three things about these results remarkable: 
\begin{enumerate}
\item First, for Fr\'echet variants, there are $\Theta(2^k 2^m)$ valid alignment paths in the free space diagram.  And one may expect that these may materialize in the size of the composition theorem.  Yet by a simple analysis of the shattering dimension, we show that they do not.  
\item Second, the VC dimension only has logarithmic dependence on the size $m$ of the curves in the ground set, rather than a polynomial dependence one would hope to obtain by simple application of composition theorems.  This difference has important implications in analyzing real data sets where we can query with simple curves (small $k$), but may not have a small bound on the size of the curves in the data set (large $m$).  
\item Third, for the continuous variants, the range spaces can indeed be decomposed into problems with ground sets defined on line segments.  However, we do not know of a general $d$-dimensional bound on the VC-dimension of range space with a ground set of segments, and ranges defined by segments within a radius $r$ of another segment.  We are able to circumvent this challenge with circuit-based methods to bound the VC-dimension and careful predicate design.
%for the Fr\'{e}chet distance, but for Hausdorff distance are only able to prove a bound in $\RR^2$.  
\end{enumerate}

\section{Our Approach}

Our methods use the fact that both the Fr\'echet distance and the Hausdorff distance are determined by one of a discrete set of events, where each event involves a constant number of simple geometric objects. For example, it is well known that the Hausdorff distance between two discrete sets of points is equal to the distance between two points from the two sets. The corresponding event happens as we consider a value $\delta>0$ increasing from $0$ and we record which points of one set are contained in which balls of radius $\delta$ centered at points from the other set.
The same phenomenon is true for the discrete Fr\'echet distance between two point sequences. In particular, the so-called free-space matrix which can be used to decide whether the discrete Fr\'echet distance is smaller than a given value $\delta$ encodes exactly the information about which pairs of points have distance at most $\delta$. 
The basic phenomenon remains true for the continuous versions of the two distance measures if we extend the set of simple geometric objects to include line segments and if we also consider triple intersections. 
Each type of event can be translated into a range space of which we can analyze the VC dimension. Together, the product of the range spaces encodes the information, which curves lie inside which metric balls, in the form of a set system. This representation allows us to prove bounds on the VC dimension of metric balls under these distance measures.

\section{Basic Idea: Discrete Fr\'{e}chet and Hausdorff}

In this section we prove our upper bounds in the discrete setting.
Let $\XX_m^d = \left(\RR^d\right)^m$; we treat the elements of this set as ordered sets of points in $\RR^d$ of size $m$. The range spaces that we consider in this section are defined over the ground set $\XX_m^d$ and the range set of \nnms{} under either the Hausdorff or the Discrete Fr\'{e}chet distance.  
The proofs in the proceeding sections all follow the basic idea of the proof in the discrete setting.

\begin{theorem} \label{ThmHaus}
Let 
$(\XX^d_m,\RRR_{dH,k})$ be the range space with $\RRR_{dH,k}$ being the set of all \nnms{} under the Hausdorff distance centered at point sets in $\XX_k^d$. The VC dimension is $O\left(dk \log(dk m)\right)$.
\end{theorem}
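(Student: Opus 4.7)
The plan is to bound the shattering dimension directly by a sign-pattern counting argument, and then deduce the VC dimension. Fix any candidate set $B = \{P_1,\ldots,P_N\} \subseteq \XX_m^d$, with $P_i = (p_{i,1},\ldots,p_{i,m})$. A range in $\RRR_{dH,k}$ is specified by a center $Q = (q_1,\ldots,q_k) \in \XX_k^d$ together with a radius $r > 0$, which I will view as a single point $(Q,r) \in \RR^{dk+1}$. The goal is to upper bound the number of distinct traces $\RRR_{dH,k}\big|_{B}$ and then compare against $2^N$.

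The key observation is that, unrolling the definition of Hausdorff distance, $P_i \in b_{\d_H}(Q,r)$ is a \emph{fixed} Boolean formula in the $mk$ atomic predicates
\[
\phi_{i,l,j}(Q,r) \;:=\; \bigl[\,\|p_{i,l}-q_j\|^2 \le r^2\,\bigr], \qquad l \in [m],\ j \in [k],
\]
namely
\[
P_i \in b_{\d_H}(Q,r)
\;\Longleftrightarrow\;
\Bigl(\bigwedge_{l=1}^m \bigvee_{j=1}^k \phi_{i,l,j}(Q,r)\Bigr)
\;\wedge\;
\Bigl(\bigwedge_{j=1}^k \bigvee_{l=1}^m \phi_{i,l,j}(Q,r)\Bigr).
\]
Each $\phi_{i,l,j}$ expands to a polynomial inequality of degree $2$ in the $dk+1$ real variables $(q_1,\ldots,q_k,r)$, and in total across all $i,l,j$ there are only $Nmk$ such polynomials.

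I would then invoke the standard Warren/Milnor--Thom bound on the number of sign patterns realized by $n$ real polynomials of degree at most $2$ on $\RR^{D}$, which is $O((n/D)^{D})$. Applied here with $n = Nmk$ and $D = dk+1$, this gives at most $(cNmk)^{dk+1}$ sign patterns for an absolute constant $c$. Because the Boolean formula above shows that $\RRR_{dH,k}\big|_{B}$ is completely determined by the sign pattern of the $\phi_{i,l,j}$'s, we obtain $|\RRR_{dH,k}\big|_{B}| \le (cNmk)^{dk+1}$. If $B$ is shattered then $2^N \le (cNmk)^{dk+1}$; taking logarithms yields $N \le (dk+1)\log_2(cNmk)$, and a standard bootstrap shows $N = O(dk\log(dkm))$, bounding the VC dimension as claimed.

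The only real choices to make are (i) the granularity of the atomic predicates -- one quadratic inequality per triple $(p_{i,l},q_j,r)$ -- and (ii) the parametrization of the range family so that membership is a bounded-degree semialgebraic condition. I expect the main conceptual point (rather than technical obstacle) to be the dimension count in the arrangement: the exponent is $dk+1$ coming from the $k$ center points and the radius, and crucially \emph{not} from the $m$ or $N$ ground-set points; this is exactly why the final bound is linear in $dk$ but only logarithmic in $m$.
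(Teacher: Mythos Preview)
Your proposal is correct and gives the claimed bound. Conceptually it matches the paper's argument: both reduce Hausdorff-ball membership to the pattern of point-in-Euclidean-ball predicates $[\|p_{i,l}-q_j\|\le r]$, observe that this pattern determines the trace on $B$, and then count realizable patterns. The difference is only in the counting step. The paper does not invoke Warren/Milnor--Thom; instead it notes that the trace on $\{P_1,\dots,P_N\}$ is determined by the $k$-tuple $(B_r(q_1)\cap S,\dots,B_r(q_k)\cap S)$ where $S=\bigcup_i P_i$ has $Nm$ points, and then uses the elementary fact that Euclidean balls in $\RR^d$ have shattering dimension $d+1$ to bound the number of such tuples by $O((Nm)^{(d+1)k})$. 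Your route packages all $dk+1$ parameters together and applies the sign-pattern bound in one shot, obtaining exponent $dk+1$ rather than $(d+1)k$; this is marginally sharper (it correctly accounts for the shared radius) but both are $O(dk)$, so the final bound is identical. The paper's version is slightly more elementary in that it avoids real-algebraic tools, while yours is more uniform and would extend more directly if the atomic predicates had higher degree.
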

\begin{proof}
Let $\{S_1,\ldots,S_t\} \subseteq \XX^d_m$ and $S=\bigcup_i S_i$; we define $S$ so that it ignores the ordering with each $S_i$ and is a single set of size $tm$.  
Any intersection of a Hausdorff ball with $\{S_1,\ldots,S_t\}$ is uniquely defined by a set $\{ B_1\cap S,\ldots,B_k \cap S \}$, where $B_1,\ldots,B_k$ are balls in $\RR^d$. To see that, notice that the discrete Hausdorff distance between two sets of points is uniquely defined by the distances between points of the two sets. 

Consider the range space $(\RR^d, \mathcal{B})$, where $\mathcal{B}$ is the set of balls in $\RR^d$. We know that the shattering dimension is $d+1$ \cite{H11}. Hence,
\[
\max_{S\subseteq \RR^d,|S|=tm} |\mathcal{B}_{|S}| = O((tm)^{d+1}).
\]
This implies that
$
|\{  \{B_1\cap S,\ldots,B_k \cap S\}  \mid B_1,\ldots,B_k \text{ are balls in }\RR^d \}| \leq  O((tm)^{(d+1)k}),$
and hence\footnote{for $u>\sqrt{e}$ if $x/\ln(x) \leq u$ then $x\leq 2u \ln u$. Hence, if $tm /\log(tm) \leq dkm $, then $tm =O( dkm \log (dkm))$.},
\[
%2^t\leq 2^{O(dk \log (tm))} 
2^t \leq  O((tm)^{(d+1)k})
\implies t = {O}\left(dk \log (dkm)\right). \qedhere
\]
\end{proof}

\begin{theorem}\label{ThmDFD}
Let 
$(\XX_m^d,\RRR_{dF,k})$ be the range space with  $\RRR_{dF,k}$ being the set of all \nnms under the Discrete Fr\'{e}chet distance centered at polygonal curves in $\XX_k^d$. The VC dimension is $O\left(dk \log(dk m)\right)$.
\end{theorem}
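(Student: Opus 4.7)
The plan is to follow the template of the proof of Theorem~\ref{ThmHaus}, with a single key additional observation: just like the Hausdorff distance between two point sets, the discrete Fr\'echet distance between a curve with vertices $p_1,\dots,p_k$ and a curve with vertices $s_1,\dots,s_m$ is determined entirely by the set of \emph{pairwise} vertex distances. More precisely, for any radius $r>0$, whether $d_{dF}(P,S_i)\leq r$ depends only on the $k\times m$ Boolean ``free-space matrix'' $M$ with $M_{ab}=1$ iff $\|p_a-s_b\|\leq r$, since by Definition~\ref{Ddist} we have $d_{dF}(P,S_i)\leq r$ exactly when $M$ admits a monotone staircase traversal from $(1,1)$ to $(k,m)$. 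Hence the freeness of each entry, i.e.\ membership of each vertex of $S_i$ in each Euclidean ball $B_j:=B(p_j,r)$, completely determines whether $S_i$ lies in the range $b_{dF}(P,r)$.

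With this in hand, let $\{S_1,\dots,S_t\}\subseteq \XX^d_m$ be any subset, and let $S=\bigcup_i S_i$ be the union of all their vertices viewed as a set of at most $tm$ points in $\RR^d$. For any center $P\in \XX^d_k$ and any radius $r>0$, the range $b_{dF}(P,r)\cap\{S_1,\dots,S_t\}$ is, by the preceding observation, uniquely determined by the $k$-tuple $(B_1\cap S,\dots,B_k\cap S)$, where $B_j$ is the Euclidean ball of radius $r$ centered at $p_j$. Using once more that the shattering dimension of the range space of Euclidean balls in $\RR^d$ is $d+1$, the number of distinct such tuples is at most $O((tm)^{(d+1)k})$, and therefore
\[
\bigl|\,\RRR_{dF,k}{}_{|\{S_1,\dots,S_t\}}\bigr| \;\leq\; O\bigl((tm)^{(d+1)k}\bigr).
\]

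If $\{S_1,\dots,S_t\}$ is shattered, then $2^t$ must be at most this quantity, giving $2^t\leq O((tm)^{(d+1)k})$ and hence $t=O(dk\log(dkm))$ by the same calculation as in the proof of Theorem~\ref{ThmHaus}. The main conceptual step -- indeed the only place where the argument differs from the Hausdorff case -- is verifying that membership of $S_i$ in $b_{dF}(P,r)$ is a function solely of the free-space matrix; once this is isolated, the counting argument and the shattering-dimension bound for balls carry over verbatim.
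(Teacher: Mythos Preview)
Your proof is correct and follows essentially the same approach as the paper: both argue that the intersection of a discrete Fr\'echet ball with $\{S_1,\dots,S_t\}$ is determined by the ordered $k$-tuple $(B_1\cap S,\dots,B_k\cap S)$ of Euclidean-ball intersections, then bound the number of such tuples by $O((tm)^{(d+1)k})$ via the shattering dimension of balls. Your explicit invocation of the free-space matrix and the monotone staircase traversal makes the key step slightly more transparent than the paper's version, but the argument is otherwise identical.
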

\begin{proof}
Let $\{S_1,\ldots,S_t\} \subseteq \XX_m^d$ and $S=\bigcup_i S_i$.  
Any intersection of a Discrete Fr\'{e}chet ball with $\{S_1,\ldots,S_t\}$ is uniquely defined by a sequence $ B_1\cap S,\ldots,B_k \cap S $, where $B_1,\ldots,B_k$ are balls in $\RR^d$. The number of such sequences can be bounded by $ O((tm)^{(d+1)k})$ as in the proof of Theorem~\ref{ThmHaus}.
Enforcing that a sequence contains a valid alignment path only reduces the number of possible distinct sets formed by $t$ curves, and it can be determined using these intersections and the two orderings of $B_1, \ldots, B_k$ and of vertices within some $S_j \in \XX_m^d$.  
\end{proof}

%%%%%%%%%%%%%%%%%%%%%%%%%%%%%%%%%%%%%%%%%%%%%%%%%%%%%%

%\section{Advanced preliminaries for continuous curves}
\section{Preliminaries}
\label{prelim2}

In this section, we provide a more advanced set of geometric primitives and other technical known results about the VC-dimension.  We also derive some simple corollaries.  Additionally, we provide some basic results about the distances which will couple with the geometric primitives in our proofs for continuous distance measures.  

We again consider a ground set $\XX_m^d = \left(\RR^d\right)^m$ which we treat as a set of  polygonal curves with points in $\RR^d$ of size $m$.  Given such a curve $s \in \XX_m^d$, let $V(s)$ be its ordered set of vertices and $E(s)$ its ordered set of edges.

\subsection{A simple model of computation}
We consider a model of computation that will be useful for modeling primitive geometric sets, and in turn bounding the VC-dimension of an associated range space.  These will be useful in that they allow the invocation of powerful and general tools to describe range spaces defined by distances between curves.  We allow the following operations, which we call \emph{simple operations}:
\begin{itemize}
    \item the arithmetic operations $+,-,\times,$ and $/$ on real numbers,
    \item jumps conditioned on $>, \geq,< , \leq ,=,$ and $\neq$ comparisons of real numbers, and 
    \item output $0$ or $1$.
\end{itemize}
We say a function requires $t$ simple operations if it can be computed with a circuit of depth $t$ composed only of these simple operations.  Notably, the lack of a square-root operator creates some challenges when dealing with non-linear geometric objects. Therefore, we prove the following technical lemma showing that we can compare certain expressions involving square roots without computing them explicitly, i.e., only simple operations are needed for the comparison.

\begin{lemma}\label{lem:comparison}
Consider values $a,b,c,d \in \RR$ with $b,d \geq 0$. We can compute the truth value of $a + \sqrt{b} \leq c + \sqrt{d}$ and $a + \sqrt{b} \geq c + \sqrt{d}$ using $O(1)$ simple operations.
\end{lemma}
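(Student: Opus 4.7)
The plan is to reduce the comparison $a + \sqrt{b} \leq c + \sqrt{d}$ to a chain of polynomial inequalities in $a,b,c,d$ by isolating one square root at a time and squaring, while branching on signs so that squaring preserves the inequality. Since squaring is only monotone on non-negative reals, the only technical work is a careful case analysis to decide when we are allowed to square and in which direction the inequality points.

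Concretely, I would rewrite the inequality as $\sqrt{d} + u \geq \sqrt{b}$, where $u := c - a$ is computable with one simple operation. A first branch asks whether $\sqrt{d} + u \geq 0$: this fails exactly when $u < 0$ and $u^2 > d$, both checkable directly. If it fails, the inequality is false (the left side is negative, the right side non-negative). Otherwise both sides are non-negative, and squaring yields the equivalent inequality $2u\sqrt{d} \geq b - d - u^2$. Now branch on the sign of $u$ and on the sign of $b - d - u^2$ to strip the remaining square root: in each of the resulting constant-many subcases, either the inequality resolves immediately from sign considerations, or both sides become non-negative and a second squaring gives a polynomial comparison $4 u^2 d \gtreqless (b - d - u^2)^2$, whose truth value requires only $O(1)$ simple operations to compute.

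Summing up, each branch is handled with at most two squarings and a constant number of additions, multiplications and comparisons, so the overall circuit has constant depth. The case $a + \sqrt{b} \geq c + \sqrt{d}$ is handled by swapping the roles of $(a,b)$ and $(c,d)$, so no additional work is needed. The main obstacle is purely bookkeeping: making sure every squaring step is applied to a side that has been certified non-negative, and that inequalities whose sides have opposite sign are resolved directly rather than being squared.
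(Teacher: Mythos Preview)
Your proposal is correct and takes essentially the same approach as the paper: isolate a square root, branch on signs to guarantee non-negativity, and square (twice). The only difference is cosmetic organization---the paper first branches on the sign of $u=c-a$ (handling $a=c$, $a<c$, $c<a$ separately) and then on the sign of the remaining expression, whereas you first test whether $\sqrt{d}+u\geq 0$ via the polynomial condition $u<0 \wedge u^2>d$ and then branch on the signs of $u$ and $b-d-u^2$; both routes arrive at the same final squared comparison $4u^2 d \gtreqless (b-d-u^2)^2$.
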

\begin{proof}
It suffices to prove the case of $a + \sqrt{b} \leq c + \sqrt{d}$, as $a + \sqrt{b} \geq c + \sqrt{d}$ is analogous. We simply show that this comparison is equivalent to a comparison involving only a constant number of simple operations starting from the values $a, b, c, d$. If $a = c$, then $\sqrt{b} \leq \sqrt{d}$ is equivalent to $b \leq d$ and we are done. Assuming $a < c$, we get
\begin{align*}
a + \sqrt{b} \leq c + \sqrt{d} &\iff \sqrt{b} \leq (c-a) + \sqrt{d}\\
&\iff b \leq (c-a)^2 + 2(c-a)\sqrt{d} + d\\
&\iff b - (c-a)^2 - d \leq 2(c-a)\sqrt{d}
\end{align*}
The second equivalence holds because both sides are at least $0$. Now, note that the right side of the last inequality is at least $0$ and thus, if the left side is negative (which we can check using $O(1)$ simple operations), we are done. Thus, assume the left side is at least $0$. Then we can square both sides and obtain a comparison involving only simple operations.
Now, if $c < a$, we can do an analogous calculation, where we subtract $c$ instead of $a$ in the first equivalence. As testing $c < a$ is a simple operation, we can determine which case we are in.
\end{proof}

\subsection{Geometric primitives} 
For any $p\in \RR^d$ we denote by $\ball_r(p)$ the ball of radius $r$, centered at $p$. 
For any two points $s,t \in \RR^d$, we denote by $\overline{st}$ the line segment from $s$ to $t$. Whenever we store such a line segment, for technicalities within the lemma below, we store the coordinates of its endpoints $s$ and $t$.  
For any two points $s,t \in \RR^d$, we define the stadium centered at $\overline{st}$,  $\stadium_r(\overline{st})=\left\{x\in \RR^d \mid \exists p \in \overline{st}, ~\|p-x\|\leq r \right\}$. 
For any two points $s,t \in \RR^d$, we define a cylinder
$\cylinder_{r}(\overline{st})=\left\{x\in \RR^d \mid \exists p \in \ell(\overline{st}),~ \|p-x\|\leq r  \right\}$, where $\ell(\overline{st})$ denotes the line supporting the edge $\overline{st}$. 
Finally, for any two points $s,t \in \RR^d$, we define the capped cylinder centered at $\overline{st}$: 
$\ccylinder_r(\overline{st})=\{p+u \mid p \in \overline{st} \text{ and } u \in \RR^d \text{ s.t. } \|u\|\leq r \text{ and } \langle t-s,u\rangle =0\}$.  
%For a set $A$, we denote by $\partial A$ the boundary of $A$, e.g.~$C_r(p)=\partial \ball_r(p)$. 

\begin{figure}
\label{fig:shapes}
\includegraphics[width=\linewidth]{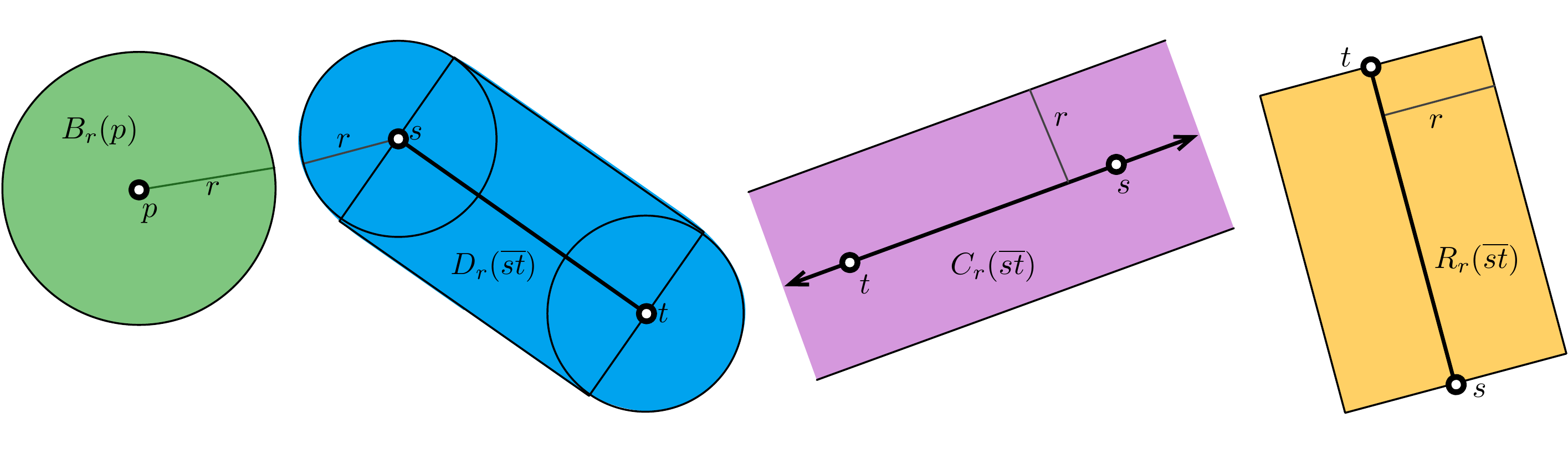}
\caption{Illustration of basic shapes in $\RR^2$, from left to right: a ball $\ball_r(p)$, a stadium $\stadium_r(\overline{st})$, a cylinder $\cylinder_r(\overline{st})$, and a capped cylinder $\ccylinder_r(\overline{st})$. }
\end{figure}

For each of these geometric sets, we can determine if a point $x \in \RR^d$ is in the set with a constant number of operations under a simple model of computation.  

\begin{lemma}
\label{lem:shapes}
For a point $x \in \RR^d$, and any set of the form $\ball_r(p)$, $\stadium_r(\overline{st})$, $\cylinder_r(\overline{st})$, or $\ccylinder_r(\overline{st})$, we can determine if $x$ is in that set (returns $1$, otherwise $0$) using $O(d)$ simple operations.  
\end{lemma}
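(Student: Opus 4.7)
My plan is to handle each of the four shapes separately, and in every case reduce the containment test to an inequality between polynomials in the input coordinates so that no square roots are needed. Throughout, set $u = x-s$ and $v = t-s$, noting that $\|u\|^2 = \langle u,u\rangle$, $\|v\|^2 = \langle v,v\rangle$, and $\langle u,v\rangle$ are all computable from the coordinates of $x,s,t$ using $O(d)$ additions, subtractions, and multiplications.

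For the ball, $x \in \ball_r(p)$ iff $\|x-p\|^2 \leq r^2$, and both sides are computable with $O(d)$ simple operations, so a single comparison finishes the job. For the infinite cylinder, $x \in \cylinder_r(\overline{st})$ iff the squared perpendicular distance from $x$ to the line $\ell(\overline{st})$ is at most $r^2$; clearing the (positive) denominator $\|v\|^2$ this becomes
\[
\|u\|^2\,\|v\|^2 - \langle u,v\rangle^2 \;\leq\; r^2\,\|v\|^2,
\]
again an $O(d)$ polynomial comparison (and if $s=t$ we simply treat the cylinder as a ball).

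For the capped cylinder $\ccylinder_r(\overline{st})$, a point lies in it iff (i) the orthogonal projection of $x$ onto $\ell(\overline{st})$ falls inside $\overline{st}$, and (ii) the perpendicular distance from $x$ to that line is at most $r$. Writing the projection parameter as $\lambda^\star = \langle u,v\rangle/\|v\|^2$, condition (i) becomes $0 \leq \langle u,v\rangle \leq \|v\|^2$, which is two sign checks on quantities we have already computed. Condition (ii) is precisely the cylinder test above. Hence $O(d)$ simple operations suffice.

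For the stadium $\stadium_r(\overline{st})$, $x$ lies in the stadium iff its distance to the segment $\overline{st}$ is at most $r$. Using the projection test, I would split into three cases driven by the sign of $\langle u,v\rangle$ and the sign of $\langle u,v\rangle - \|v\|^2$: if $\langle u,v\rangle < 0$ then the closest point on the segment is $s$ and we fall back to the ball test $\|x-s\|^2 \leq r^2$; if $\langle u,v\rangle > \|v\|^2$ the closest point is $t$ and we test $\|x-t\|^2 \leq r^2$; otherwise the closest point is interior to the segment and we invoke the cylinder inequality above. Each branch uses $O(d)$ simple operations and exactly one comparison against $r^2$, so the whole test uses $O(d)$ simple operations. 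The only potential obstacle — avoiding a square-root to compute the true Euclidean distance — is sidestepped by always comparing squared quantities; Lemma~\ref{lem:comparison} is not even required here, but would provide a uniform backup if a formulation with sums of square roots arose.
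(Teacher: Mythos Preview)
Your proof is correct and follows essentially the same approach as the paper: reduce each containment test to comparisons of squared distances so that no square roots are needed. The only cosmetic differences are that the paper computes the projection point $\pi_{\overline{st}}(x)$ explicitly (using a division) rather than clearing the denominator $\|v\|^2$, and handles the stadium as the union $\ccylinder_r(\overline{st}) \cup \ball_r(s) \cup \ball_r(t)$ rather than branching on the projection parameter; both variants cost $O(d)$ simple operations.
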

\begin{proof}
For ball $\ball_r(p)$ we can compute a distance $\|x-p\|^2$ in $O(d)$ time, and determine inclusion with a comparison to $r^2$.  
For cylinder $\cylinder_r(\overline{st})$ we can compute the closest point to $x$ on this line as 
\[\pi_{\overline{st}}(x) = t + \frac{(s-t) \langle (s-t), x \rangle}{\|s-t\|^2} .\]  
Then we can determine inclusion by comparing $\|\pi_{\overline{st}}(x)-x\|^2$ to $r^2$.  
For capped cylinder $\ccylinder_r(\overline{st})$ we also need to compare $\|\pi_{\overline{st}}(x)  - t\|^2$ and $\|\pi_{\overline{st}}(x) -s\|^2$ to see if either of these terms is greater than $\|s-t\|^2$.
For stadium $\stadium_r(\overline{st})$ we determine inclusion if any $x$ is in any of $\ccylinder_r(\overline{st})$, $\ball_r(s)$ or $\ball_r(t)$.  
\end{proof}

\subsection{Bounding the VC-Dimension} 
For range spaces defined on continuous curves, our proofs use a powerful theorem from Goldberg and Jerrum~\cite{GJ95} as improved and restated by Anthony and Bartlett~\cite{AB99}.  It allows one to easily bound the VC-dimension of geometric range spaces under our simple model of computation.  

\begin{theorem}[Theorem 8.4 \cite{AB99}]
\label{theorem:predicatevc}
Suppose $h$ is a function from $\RR^d \times \RR^n$ to $\{0,1\}$ and let 
\[ H=\{x \mapsto h(\alpha,x):~\alpha \in \RR^d\}
\]
be the class determined by $h$. Suppose that $h$ can be computed by an algorithm that takes as input the pair $(\alpha,x) \in \RR^d \times \RR^n$ and returns $h(\alpha,x)$ after no more than $t$ simple operations.  
Then, the VC dimension of $H$ is $\leq 4d(t+2)$.
\end{theorem}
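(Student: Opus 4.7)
The plan is to use a sign-pattern counting argument in the style of Goldberg and Jerrum. Fix a candidate shattered set $\{x_1,\ldots,x_m\} \subseteq \RR^n$. For each fixed $x_i$, view $\alpha \mapsto h(\alpha, x_i)$ as a function of the parameters alone, obtained by running the $t$-operation algorithm with $x_i$ substituted. The control flow and output of the computation on $(\alpha,x_i)$ are determined entirely by the signs of the quantities tested at the comparisons; after clearing denominators introduced by $/$, each such comparison reduces to testing the sign of a polynomial in $\alpha$. Thus, across all $m$ inputs, the at most $mt$ sign polynomials partition the parameter space $\RR^d$ into cells on which the entire dichotomy $\bigl(h(\alpha,x_1),\ldots,h(\alpha,x_m)\bigr)$ is constant.

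Next I would bound the degree of these polynomials and the number of cells they can produce. Each of the operations $+,-,\times,/$ at most doubles the formal degree of the intermediate rational function, so after at most $t$ operations every sign polynomial has degree bounded by $2^t$. By a Milnor-Warren-type bound, the number of sign patterns realized by $N$ real polynomials in $d$ variables of degree at most $D$ is at most $(cND/d)^d$ for an absolute constant $c$. Plugging in $N \leq mt$ and $D \leq 2^t$ gives at most $(c\, mt\, 2^t / d)^d$ cells, and if the $m$ inputs are shattered there must be at least $2^m$ distinct cells, hence
\[ 2^m \;\leq\; \left(\frac{c\, m t\, 2^t}{d}\right)^{\!d}. \]
Taking logarithms, $m \leq d\bigl(t + \log_2(c\,mt/d)\bigr)$, and a standard manipulation (absorbing the slowly growing log-term into the dominant $dt$ contribution) yields $m = O(dt)$. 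A careful constant-level accounting, as carried out by Anthony and Bartlett, sharpens this to the stated $m \leq 4d(t+2)$.

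The main obstacle is the degree-bookkeeping through division: the operation $a/b$ is not polynomial in $\alpha$, so each comparison between rational functions must be recast as a polynomial sign test by multiplying through by the accumulated denominators. This must be done carefully, since the sign of the resulting polynomial inequality can flip with the sign of the multiplier; one handles this by introducing an auxiliary sign test on each denominator and casing on its sign, which increases the polynomial count and the degree bound only by constant factors that are absorbed into the $4$ in $4d(t+2)$. A secondary subtlety is tightening the Milnor-Warren constant (as in the Goldberg-Jerrum refinement, or via a Pollack-Roy type bound) to extract the explicit prefactor $4$ rather than a larger absolute constant; the qualitative shape $O(dt)$, however, drops directly out of the counting sketched above.
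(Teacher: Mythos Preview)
The paper does not prove this theorem at all: it is quoted verbatim as Theorem~8.4 of Anthony and Bartlett~\cite{AB99} (itself a refinement of Goldberg--Jerrum~\cite{GJ95}) and used as a black box throughout. There is therefore no ``paper's own proof'' to compare against.

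That said, your sketch is a faithful outline of the actual Goldberg--Jerrum argument behind the cited result: reduce each comparison in the computation to a polynomial sign test in the $d$ parameters, bound the degree by $2^{O(t)}$ via the doubling effect of $\times$ and $/$, and then apply a Warren/Milnor-type bound on the number of sign patterns of $N$ polynomials of degree $D$ in $d$ variables to get at most $(O(ND/d))^d$ dichotomies, forcing $m=O(dt)$. Your handling of division (clearing denominators and casing on their signs) is the standard maneuver. The only thing to flag is that extracting the explicit constant $4$ in $4d(t+2)$ requires the sharper sign-pattern bound and the careful accounting in~\cite{AB99}; your sketch correctly defers that to the reference rather than claiming it falls out of the crude estimate.
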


An example implication can be seen for geometric sets via Lemma \ref{lem:shapes}.  Note that this implies any VC dimension upper bound proved in this approach applies to both the range space and its dual range space because the function $h$ is unchanged and the ranges can still be described by $O(d)$ real coordinates.

\begin{corollary}
\label{cor:shapes-VC}
For range spaces defined on $\RR^d$ with geometric sets $\ball_r(p)$, $\stadium_r(\overline{st})$, $\cylinder_r(\overline{st})$, or $\ccylinder_r(\overline{st})$ as ranges, the VC dimension is $O(d^2)$.  
The same $O(d^2)$ VC dimension bound holds for the corresponding dual range spaces, with ground sets as the geometric sets, and ranges defined by stabbing using points in $\RR^d$.  
\end{corollary}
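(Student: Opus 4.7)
The plan is to apply Theorem \ref{theorem:predicatevc} directly, using the inclusion tests provided by Lemma \ref{lem:shapes} as the underlying circuit. For each of the four shape families I will write down the natural parameterization and then invoke Theorem \ref{theorem:predicatevc} once. Concretely, a ball $\ball_r(p)$ is described by a parameter vector $(p,r) \in \RR^{d+1}$, while the stadium $\stadium_r(\overline{st})$, cylinder $\cylinder_r(\overline{st})$, and capped cylinder $\ccylinder_r(\overline{st})$ are each described by $(s,t,r) \in \RR^{2d+1}$. In all four cases Lemma \ref{lem:shapes} yields a $t=O(d)$ simple-operation algorithm that takes as input the parameter vector together with a point $x \in \RR^d$ and outputs the inclusion bit, so Theorem \ref{theorem:predicatevc} with parameter dimension $n = O(d)$ and operation count $t = O(d)$ gives the desired bound $4n(t+2) = O(d^2)$ on the VC dimension of the primal range space.

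For the dual range space I would rely on the symmetry remarked on immediately after Theorem \ref{theorem:predicatevc}: the same function $h(\alpha,x)$ encodes inclusion, and swapping the roles of $\alpha$ and $x$ converts the primal bound into a dual bound. In this dual view, the ground set is the family of shape-parameter vectors (each of length $O(d)$) and each range is indexed by a stabbing point $x \in \RR^d$. Applying Theorem \ref{theorem:predicatevc} to the very same circuit, but now with $x$ as the $d$-dimensional parameter and the shape description as the $O(d)$-dimensional input, still costs $t = O(d)$ simple operations, so the theorem again delivers $4 \cdot d \cdot (t+2) = O(d^2)$.

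The only step that needs any care is verifying that the algorithm from Lemma \ref{lem:shapes} genuinely fits the simple-operations model of Theorem \ref{theorem:predicatevc}. This is not a serious obstacle: inspecting the proof of Lemma \ref{lem:shapes}, every inclusion test is phrased as a comparison of squared distances to $r^2$, using only additions, subtractions, multiplications, divisions, and numerical comparisons, so no square roots ever appear and Lemma \ref{lem:comparison} is not even required. Consequently the hypothesis of Theorem \ref{theorem:predicatevc} is satisfied uniformly across the four shape families, and the $O(d^2)$ bound on both the primal and dual VC dimensions follows at once.
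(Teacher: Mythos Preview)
Your proposal is correct and follows essentially the same approach as the paper: the corollary is deduced directly from Lemma~\ref{lem:shapes} (giving an $O(d)$ simple-operation inclusion test) together with Theorem~\ref{theorem:predicatevc} applied with $O(d)$ real parameters, and the dual bound follows by swapping the roles of the parameter and the input in the same function $h$. The paper's justification is exactly this, stated in the text surrounding the corollary rather than as a separate proof.
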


Note that these bounds are not always tight.  Specifically, because the VC-dimension for ranges defined geometrically by balls $\ball_r(p)$ is $O(d)$~\cite{H11}.  Moreover, the VC-dimension of range spaces defined by cylinders $\cylinder_r(\overline{st})$ is known to be $O(d)$~\cite{AIKU10}.  The ranges defined by capped cylinders $\ccylinder_r(\overline{st})$ are the intersection of a cylinder and two halfspaces, each with VC-dimension $O(d)$ and hence by the composition theorem~\cite{BEHW89}, this full range spaces also has VC-dimension $O(d)$.  Finally, the stadium $\stadium_r(\overline{st})$ is defined by the union of a capped cylinder $\ccylinder_r(\overline{st})$ and two balls $\ball_r(s)$ and $\ball_r(t)$; hence again by the composition theorem~\cite{BEHW89}, its VC-dimension is $O(d)$.  

However, it is not clear that these improved bounds hold for the dual range spaces, aside for the case of $\ball_r$.  
Moreover, when the ground set $X$ of the range space $(X, \RRR)$ is not $\RR^d$, then we need to be cautious in using the $k$-fold composition theorem~\cite{BEHW89}, which bounds the VC-dimension of complex range spaces derived as the logical intersection or union of simpler range spaces with bounded VC-dimension.  In the case of a ground set $X=\RR^d$, logical and geometric intersections are the same, but for other ground sets (like dual objects, or line segments $\XX^d_2$) this is not necessarily the case.  For instance, a line segment $e \in \XX^d_2$ may intersect a ball $\ball_r$ and also a halfspace $H$ while not intersecting the intersection $\ball_r \cap H$.  

%Moreover, we know by the $k$-fold composition theorem~\cite{BEHW89} that we can define functions that require the intersection or union of a constant number of any of these geometric primitives or for instance lines, line segments, or circles and the resulting range space (and dual range space) still has a VC-dimension bound of $O(d^2)$.  

\subsection{Representation by predicates}
\label{predicates}

In order to prove bounds on the VC dimension of range spaces defined on continuous curves, we establish sets of geometric predicates which are sufficient to determine if two curves have distance at most $r$ to each other.  Analyzing the range spaces associated with these predicates (over all possible radii $r$) allows us to compose them further and to establish VC dimension bounds for the range space induced by the corresponding distance measure.  
For the Fr\'{e}chet and Weak Fr\'{e}chet distance, the predicates mirror those used in range searching data structures~\cite{AD18,arxAD17}.  And for the Hausdorff distance on continuous curves, the predicates are derived from the Voronoi diagram~\cite{Alt-HausdorffDistance-1995}.  The technical challenges for each case are similar, but require different analyses.

\section{The Hausdorff distance}
\label{section:haus}
We consider the range space $(\XX^d_m, \RRR_{H_k}^r)$, where $\RRR_{H_k}^r$ denotes the set of all \nnms, of radius $r$ centered at curves in $\XX^d_k$, under the symmetric Hausdorff distance.\footnote{The proofs in this section are written for polygonal curves in $\XX^d_{m}$, but they readily extend to (not-necessarily connected) sets of line segments in $\RR^d$ of cardinality $m'=\frac{m-1}{2}$.}  
We also consider the same problems under both directed versions of the Hausdorff distance, and their induced range spaces 
$(\XX^d_m, \RRR_{\overrightarrow{H}_k}^r)$ and $(\XX^d_m, \RRR_{\overleftarrow{H}_k}^r)$.  

\subsection{Hausdorff distance predicates}
\label{predicates-hausdorff}

According to Alt, Behrends and Bl\"omer \cite{Alt-HausdorffDistance-1995}, the critical points for directed Hausdorff distance $\d_{\overrightarrow{H}}(A,B)$ of two pairwise disjoint sets of line segments $A$ and $B$ is either at some vertex of $A$ or at some intersection point of $A$ with the boundary of a Voronoi cell of $B$.
%\footnote{Their arguments are in $\RR^2$ but extend directly to the $\RR^d$ case.}.  
% We can re-use part of the predicates from the previous section for encoding the first type of event where the distance is assumed at a vertex of $A$. We need to derive a new set of predicates for the second type of event.
Thus, we need a predicate for encoding the first type of event where the distance is assumed at a vertex of $A$.
Additionally, we need a predicate for testing if a line supporting an edge intersects the intersection of two stadiums; see Figure~\ref{fig:triple_intersection} for an illustration in $\RR^2$.

\begin{figure}
    \centering
    \includegraphics{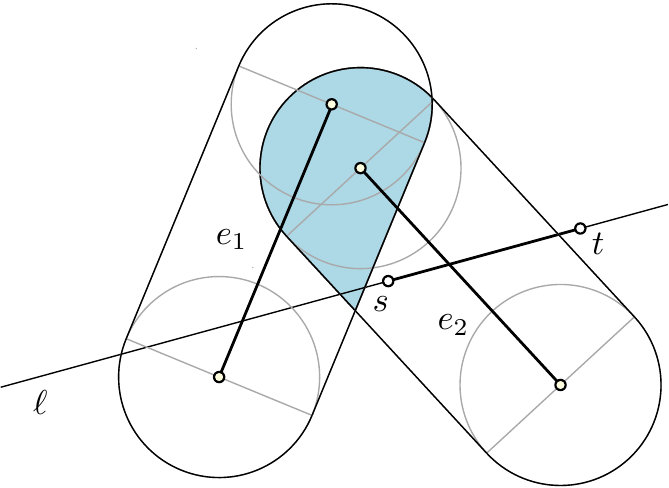}
    \caption{
    Illustration of the predicate $\predSLh$ in $\RR^2$: The predicate evaluates to true if and only if the triple intersection of the line $\ell$ supporting $\overline{st}$ with the two stadiums centered at edges $e_1$ and $e_2$ is non-empty. Note that $\overline{st}$ may lie outside of the intersection.}
    \label{fig:triple_intersection}
\end{figure}

Consider any two polygonal curves $s \in \XX^d_m$ and $q \in \XX^d_k$. In order to encode the intersection of polygonal curves with metric balls under the Hausdorff metric, we will first  define a subset of $\RR^d$, a \emph{double-stadium}, defined by two line segments $\{e_1, e_2\}$ and a radius $r$ as
\[
 D_{r,2}(e_1, e_2) = \stadium_r(e_1) \cap \stadium_r(e_2).
\]
We use the notation $\overline{st} \in D_{r,2}(e_1,e_2)$ to indicate that the line $\ell(\overline{st})$ which extends $\overline{st}$ intersects with the double stadium, i.e.\ it fulfills $\ell(\overline{st}) \cap D_{r,2}(e_1,e_2) \neq \emptyset$.

We will make use of the following predicates:
\begin{enumerate}
\item[$\predVEh$] \emph{(Vertex-edge (horizontal))} Given an edge of $s$, $\overline{s_j
s_{j+1}}$, and a vertex $q_i$ of $q$, this predicate returns true iff there
exist a point $p \in \overline{s_j s_{j+1}}$, such that $\|p-q_i\| \leq r$.
\item[$\predVEv$] \emph{(Vertex-edge (vertical))} Given an edge of $q$, $\overline{q_i
q_{i+1}}$, and a vertex $s_j$ of $s$, this predicate returns true iff there
exist a point $p \in \overline{q_i q_{i+1}}$, such that $\|p-s_j\| \leq r$.
\item[$\predSLh$] \emph{($d$-stadium-line (horizontal))} given an edge of $q$,  $\overline{q_i q_{i+1}}$, and two edges of $s$, $\{e_1, e_2\} \subset E(s)$, this predicate is equal to $\overline{q_i q_{i+1}} \in D_{r,2}(e_1, e_2)$.   
\item[$\predSLv$] \emph{($d$-stadium-line (vertical))} given one edge of $s$,  $\overline{s_j s_{j+1}}$, and two edges of $q$, $\{e_1, e_2\} \subset E(q)$, this predicate is equal to $\overline{s_j s_{j+1}} \in D_{r,2}(e_1,e_2)$.   
\end{enumerate}

\begin{lemma}
\label{lemma:HausPrCorr}
For any two polygonal curves $s$, $q$, 
given the truth values of the predicates $\predVEh$, $\predSLh$ one can determine whether $\d_{\overrightarrow{H}}(q,s)\leq r$. Similarly, given the truth values of the predicates $\predVEv$, $\predSLv$ one can determine whether $\d_{\overrightarrow{H}}(s,q)\leq r$.
\end{lemma}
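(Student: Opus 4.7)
The plan is to use the critical-point characterisation of Alt, Behrends, and Blömer recalled just above: the maximum of $\d(\cdot,s)$ over $q$ is realised at either a vertex of $q$ or at a point where an edge of $q$ crosses the boundary of a Voronoi cell of $s$. Consequently $\d_{\overrightarrow{H}}(q,s)\le r$ holds iff every critical point of these two types has distance at most $r$ to $s$, so I would check the two families in turn.

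For a vertex $q_i$ of $q$, the condition $\d(q_i,s)\le r$ is the same as $q_i\in\bigcup_{e\in E(s)}D_r(e)$, which is exactly the disjunction of $\predVEh(\overline{s_j s_{j+1}},q_i)$ over the edges of $s$, so this family is encoded directly by the $\predVEh$ truth values. For a Voronoi boundary intersection $p$ on an edge $\overline{q_iq_{i+1}}$ between features $e_1,e_2\in E(s)$, the common distance $\delta:=\d(p,e_1)=\d(p,e_2)$ equals $\d(p,s)$, and the critical condition is $p\in D_r(e_1)\cap D_r(e_2)$. I would relate this to $\predSLh$ by studying the convex function $f(p')=\max(\d(p',e_1),\d(p',e_2))$ on the line $\ell(\overline{q_iq_{i+1}})$, since by construction $\predSLh$ is true iff $\min_{p'\in\ell}f(p')\le r$.

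The heart of the argument will be to prove that when $p$ is a local maximum of $\d(\cdot,s)$ along the edge, which is the situation in which this Voronoi event can actually witness $\d_{\overrightarrow{H}}(q,s)$, one has $\min_\ell f=\delta$, so that $\predSLh$ flips from false to true precisely at $r=\delta$. I would obtain this from a one-sided directional derivative analysis at $p$: labelling so that $\d=\d(\cdot,e_1)$ on one side of $p$ and $\d=\d(\cdot,e_2)$ on the other, the local-maximum condition on $\d$ turns into sign requirements on the one-sided derivatives of $\d(\cdot,e_1)$ and $\d(\cdot,e_2)$ along the edge that coincide with $p$ being a local minimum of $f$, and convexity of $f$ as a max of two convex functions promotes this to a global minimum on $\ell$.

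The main obstacle will be the Voronoi boundary intersections that are only kinks of $\d(\cdot,s)$, at which $\predSLh$ can already become true at some $r^{\ast\ast}<\delta$ because the line-double-stadium optimum is attained at a point distinct from $p$. These kinks never realise the directed Hausdorff distance, so they may be ignored in the reconstruction: the value of $\d$ at such a kink is strictly dominated by an adjacent local-maximum critical event whose threshold is itself captured by a $\predVEh$ or $\predSLh$ event. Assembling the two parts, the joint truth values of $\predVEh$ and $\predSLh$ suffice to decide $\d_{\overrightarrow{H}}(q,s)\le r$; the symmetric argument, using $\predVEv$ and $\predSLv$ and exchanging the roles of $q$ and $s$, proves the statement for $\d_{\overrightarrow{H}}(s,q)$.
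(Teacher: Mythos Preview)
Your argument establishes a correspondence between the critical events of $\d_{\overrightarrow{H}}(q,s)$ and the thresholds at which certain predicates flip, but it stops short of what the lemma actually demands: a Boolean function $\Phi$, depending only on $m$ and $k$, such that $[\d_{\overrightarrow{H}}(q,s)\le r]=\Phi(\text{all }\predVEh,\predSLh\text{ truth values})$. Your conclusion reads ``$\d_{\overrightarrow{H}}(q,s)\le r$ iff every vertex $q_i$ satisfies $\bigvee_e\predVEh(e,q_i)$ and, for every pair $(e_1,e_2)$ \emph{that happens to form a local-max Voronoi crossing on the edge}, $\predSLh(\overline{q_iq_{i+1}},e_1,e_2)$ holds.'' The italicised quantifier ranges over a geometrically defined subset of $E(s)\times E(s)$, and nothing in the predicate truth values tells you which pairs belong to it. Two instances $(q,s,r)$ and $(q',s',r')$ can share all $\predVEh,\predSLh$ values while the pair $(e_1,e_2)$ is a local-max crossing in one and a kink (or no crossing at all) in the other; your rule would then consult $\predSLh(\cdot,e_1,e_2)$ in the first instance and ignore it in the second, so it is not a function of the predicates alone. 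Showing that every critical threshold coincides with some predicate flip is necessary but not sufficient; there are many extraneous $\predSLh$ flips (for non-adjacent Voronoi features, or for pairs whose double-stadium meets $\ell$ far from the segment), and you give no mechanism to filter them out.

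The paper closes exactly this gap by exhibiting an explicit $\Phi$: for a single edge $\overline{q_1q_2}$ one has $\d_{\overrightarrow{H}}(\overline{q_1q_2},s)\le r$ iff there is a \emph{sequence} $e_{j_1},\dots,e_{j_v}\in E(s)$ with $\predVEh(e_{j_1},q_1)$, $\predVEh(e_{j_v},q_2)$, and $\predSLh(\overline{q_1q_2},e_{j_i},e_{j_{i+1}})$ for all $i$. Equivalently, build a graph on $E(s)$ with an edge between $e,e'$ whenever $\predSLh(\overline{q_1q_2},e,e')$ holds, and ask whether some stadium containing $q_1$ is connected to some stadium containing $q_2$. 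This is purely combinatorial in the predicate values and is what your argument is missing. Your local-max analysis could be salvaged as an alternative proof of the paper's equivalence (the local-max crossings are precisely where a covering chain must ``hand off''), but as written it does not yield the required formula.
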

\begin{proof}
We first assume for the sake of simplicity that $q$ is a line segment with endpoints $q_1$ and $q_2$.
We claim that $\d_{\overrightarrow{H}}(q,s)\leq r$ if and only if there exists a sequence of edges $\overline{s_{j_1}s_{({j_1}+1)}},$ $\overline{s_{j_2}s_{(j_{2}+1)}},\ldots ,$ $\overline{s_{j_v}s_{(j_{v}+1)}}$ for some integer value $v$,
such that the predicates $\predVEh(q_1,\overline{s_{j_1}s_{(j_{1}+1)}}) $, $\predVEh(q_2,\overline{s_{j_v}s_{(j_{v}+1)}}) $
both evaluate to true and the conjugate 
\[
    \bigwedge_{i=1}^{v-1} \predSLh(\overline{q_1,q_2},\overline{s_{j_i} s_{{(j_{i}+1)}}},\overline{s_{j_{i+1}} s_{(j_{i+1}+1)}})
\]
evaluates to true. 
Assume such a sequence of edges exists.  In this case, there exists a sequence of points $p_1,\dots,p_{v}$ on the line supporting $q$, with $p_1=q_1$, $p_{v}=q_2$ and such that for $1 \leq i< v$:
$p_i, p_{i+1} \in \stadium_r(\overline{s_{j_i} s_{j_{i+1}}})$. That is, two consecutive points of the sequence are contained in the same stadium.
Indeed, for $i=1$ we have $p_1=q_1$ and $q_1, p_2 \in \overline{s_{j_1}s_{(j_{1}+1)}}$ by the corresponding $\predVEh$ and $\predSLh$ predicates:
\[
\predVEh(q_1,\overline{s_{j_1}s_{(j_{1}+1)}}), \quad \predSLh(\overline{q_1,q_2},\overline{s_{j_1} s_{{(j_{1}+1)}}},\overline{s_{j_{2}} s_{(j_{2}+1)}}).
\]

Likewise, for $i=v-1$, it is implied by the corresponding predicates $\predVEh(q_2,\overline{s_{j_v}s_{(j_{v}+1)}}) $ and $\predSLh(\overline{q_1,q_2},\overline{s_{j_{v-1}} s_{{(j_{v-1}+1)}}},\overline{s_{j_{v}} s_{(j_{v}+1)}})$. For the remaining $1 < i < v-1$, it follows from the conditions given by the specified $\predSLh$ predicates.
Now, since each stadium is a convex set, it follows that each line segment connecting two consecutive points of this sequence $p_i$, $p_{i+1}$ is contained in one of the stadiums. Note that the set of line segments obtained this way forms a connected polygonal curve which fully covers the line segment $q$.  It follows that 
\[ q \subseteq \bigcup_{0 \geq i < v } \overline{p_i p_{i+1}} \subseteq \bigcup_{0 \geq i < v }  \stadium_r(\overline{s_{j_i} s_{j_{i+1}}}) \subseteq \bigcup_{e \in E(s)} \stadium_r(e). \]
Therefore, any point on $q$ is within distance $r$ of some point on $s$ and thus $\d_{\overrightarrow{H}}(q,s)\leq r$.

Now, for the other direction of the proof, assume that  $\d_{\overrightarrow{H}}(q,s)\leq r$. The definition of the directed Hausdorff distance implies that 
\[ 
q \in \bigcup_{e \in E(s)} \stadium_r(e), 
\]
since any point on the line segment $q$ must be within distance $r$ to some point on the curve $s$.
Consider the intersections of the line segment $q$ with the boundaries of stadiums 
\[ 
q \cap \bigcup_{e \in E(s)} \partial\left( \stadium_r(e)\right).
\] 
Let $w$ be the number of intersection points and let $v=w+2$.
We claim that this implies that there exists a sequence of edges $\overline{s_{j_1}s_{({j_1}+1)}},$ $\overline{s_{j_2}s_{(j_{2}+1)}},\ldots ,$ $\overline{s_{j_v}s_{(j_{v}+1)}}$ with the properties stated above. Let $p_1=q_1$ and let $p_v=q_2$ and let $p_i$ for $1<i<v$ be the intersection points ordered in the direction of the line segment $q$. By construction, it must be that each $p_i$ for $1<i<v$ is contained in the intersection of two stadiums, since it is the intersection with the boundary of a stadium and the entire edge is covered by the union of stadiums. Moreover, two consecutive points $p_i$, $p_{i+1}$ are contained in exactly the same subset of stadiums---otherwise there would be another intersection point with the boundary of a stadium in between $p_i$ and $p_{i+1}$. This implies a set of true predicates of type $\predSLh$ with the properties defined above. The predicates of type $\predVEh$ follow trivially from the definition of the directed Hausdorff distance. This concludes the proof of the other direction.

%We invoke the observation in \cite{Alt-HausdorffDistance-1995}, restricted in the case of polygonal curves, stating that the directed Hausdorff distance $\d_{\overrightarrow{H}}(q,s)$ is realized either at some vertex of $q$ or at some intersection point of $q$ with a Voronoi-edge of the Voronoi-diagram of a set of pairwise disjoint line segments representing $s$. To this end, we split each edge of $s$ that intersects another edge of $s$ at the intersection point in order to obtain a set of pairwise disjoint line segments $E'$ which represent $s$. The sequence of Voronoi cells of the Voronoi-diagram of $E'$ that are intersected by $q$, induce a sequence of edges of $s$ with the desired properties. Indeed, the mapping between points on both curves induced by the Voronoi diagram is optimal (minimizes the maximum distance between two points), therefore the corresponding predicates evaluate to true.

In general, for any polygonal curve $q \in \XX^d_k$ with vertices $q_1,\ldots,q_k$, we have that 
\begin{equation*}
\d_{\overrightarrow{H}}(q,s)\leq r \iff \bigwedge_{i=1}^{k-1}\left[ \d_{\overrightarrow{H}}(\overline{q_iq_{i+1}},s)\leq r \right].
\end{equation*}
Thus, we can apply the arguments above to each edge of $q$ individually. Similarly, we can prove that given the truth values of the predicates $\predVEv$, $\predSLv$ one can determine whether $\d_{\overrightarrow{H}}(s,q)\leq r$, by an argument symmetric to the above.
\end{proof}

\subsection{Hausdorff distance VC dimension bound}

Now, we want to show that we can compute a representation of the interval of intersection of a line and a capped cylinder using only $O(d)$ simple operations. This representation then allows us to compare such intervals using Lemma \ref{lem:comparison}.
The appropriate ground set is over two points $q_j, q_t \in \RR^d$, where for notational simplicity we reuse $\XX_2^d$. Furthermore, for each segment $\overline{st} \in \XX_2^d$, recall that $\ell(\overline{st})$ is the line that supports it.

\begin{lemma}\label{lem:cylinder_simple}
	Given a line $\ell(\overline{st})$ with $\overline{st} \in \XX_2^d$ and a capped cylinder $R_r(\overline{uv})$ with $\overline{uv} \in \XX_2^d$, the intersection $\ell(\overline{st}) \cap R_r(\overline{uv})$ of those two objects is either
\[
	 \left\{s + (t-s)x \; \middle| \; x \in [a + \sqrt{b}, c + \sqrt{d}] \subseteq \RR\right\},
\]
where $a, b, c, d \in \RR$ can be computed using $O(d)$ simple operations, or it is empty.
\end{lemma}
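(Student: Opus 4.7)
The plan is to parameterize the line as $y(x) = s + (t-s)x$, $x \in \RR$, and decompose the capped cylinder as the intersection of the infinite cylinder $\cylinder_r(\overline{uv})$ with the two halfspaces orthogonal to $\mu := v-u$ through $u$ and through $v$. Membership $y(x) \in \ccylinder_r(\overline{uv})$ then becomes one quadratic inequality (from the cylinder) together with two linear inequalities (from the caps) in the scalar $x$, and the desired intersection is just the set of $x$ satisfying all three.

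Set $M = \|\mu\|^2$ and $\lambda(x) = (\langle s-u,\mu\rangle + x\langle t-s,\mu\rangle)/M$. The halfspace conditions are equivalent to $\lambda(x) \in [0,1]$, which after clearing $M$ is a pair of linear inequalities whose feasible set (when nonempty) is an interval whose endpoints are rational functions of the coordinates of $s,t,u,v$, computable from $O(1)$ inner products in $O(d)$ simple operations. For the cylinder, let $\alpha = (s-u) - \frac{\langle s-u,\mu\rangle}{M}\mu$ and $\beta = (t-s) - \frac{\langle t-s,\mu\rangle}{M}\mu$ be the components of $s-u$ and $t-s$ orthogonal to $\mu$. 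Then $y(x) - \pi_{\overline{uv}}(y(x)) = \alpha + x\beta$, so the cylinder condition reduces to $A x^2 + Bx + C \leq 0$ where $A = \|\beta\|^2$, $B = 2\langle\alpha,\beta\rangle$, $C = \|\alpha\|^2 - r^2$; each of $A,B,C$ is just an $O(d)$-dimensional inner product (no square roots).

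When $A > 0$ and $\Delta := B^2 - 4AC \geq 0$, the cylinder inequality holds precisely on the interval with endpoints $-B/(2A) \pm \sqrt{\Delta/(4A^2)}$, matching the prescribed $a \pm \sqrt{b}$ form with $a = -B/(2A)$ and $b = \Delta/(4A^2) \geq 0$, both computed from $A,B,C$ in $O(1)$ further simple operations. If $\Delta < 0$, the intersection with the cylinder (and therefore with the capped cylinder) is empty. If $A = 0$ (the line is parallel to the cylinder axis), the cylinder condition collapses to the affine inequality $Bx + C \leq 0$, whose solution is $\RR$, $\emptyset$, or a ray with a rational endpoint, depending on the signs of $B$ and $C$. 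Intersecting the cylinder-interval with the halfspace-interval yields the final interval in $x$; each of its two endpoints is then either a rational endpoint inherited from the halfspace constraint (expressible as $a + \sqrt{b}$ with $b=0$) or a root of the quadratic (expressible in the $a + \sqrt{b}$ form with $b \geq 0$, absorbing the sign of the square root into the choice of $a$ and $b$). The $\min$ and $\max$ needed to carry out this final intersection of intervals can be evaluated via Lemma \ref{lem:comparison} using only simple operations, so no square root is ever computed explicitly.

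The main obstacle is the case analysis: cleanly enumerating the degenerate configurations (line parallel to the axis so $A=0$, line perpendicular to the axis so $\langle t-s,\mu\rangle=0$, vanishing discriminant, empty halfspace slab, and the various combinations thereof) and verifying that in every non-empty case the output interval admits the stated $[a+\sqrt{b},c+\sqrt{d}]$ representation with $a,b,c,d$ obtained purely from arithmetic, comparisons, and inner products on the coordinates of $s,t,u,v,r$. The underlying algebra is otherwise routine and directly parallels the membership computations already carried out in Lemma \ref{lem:shapes}.
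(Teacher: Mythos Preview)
Your proposal is correct and follows the same overall decomposition as the paper: the capped cylinder is the intersection of the infinite cylinder $\cylinder_r(\overline{uv})$ with the slab between the two hyperplanes through $u$ and $v$ orthogonal to $\mu=v-u$, yielding one quadratic and two linear constraints in the line parameter, whose intersection is then computed via Lemma~\ref{lem:comparison}.

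The one noteworthy difference is in how the quadratic is obtained. You project $s-u$ and $t-s$ onto $\mu^{\perp}$ to get $\alpha,\beta$ and write the distance-to-axis condition directly as $\|\alpha+x\beta\|^{2}\le r^{2}$, reading off $A,B,C$ from inner products. The paper instead writes $\|g(y)-f(x)\|^{2}=r^{2}$ as a quadratic in the axis parameter $x$ for fixed $y$, and observes that the discriminant of \emph{that} quadratic vanishes precisely when $g(y)$ is at distance $r$ from the axis; solving $h(y)=0$ then gives the endpoints. Your route is more direct and makes the degenerate case $A=0$ (line parallel to the axis) explicit, which the paper glosses over; note, however, that in your parameterization $A=\|\beta\|^{2}=0$ forces $\beta=0$ and hence $B=2\langle\alpha,\beta\rangle=0$ as well, so the ``ray'' sub-case you list cannot actually occur---the cylinder constraint is then either all of $\RR$ or empty. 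This is harmless for correctness. Both derivations produce endpoints of the form $\alpha\pm\sqrt{\beta}$, and both rely on Lemma~\ref{lem:comparison} (after the obvious rearrangement to move the minus sign across) to intersect with the slab interval without ever extracting a root.
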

\begin{proof}
	We first compute the intersection of the infinite cylinder $C_r(\overline{uv})$ with the line $\ell(\overline{st})$. Let $f(x) = u - (v - u) \cdot x$ be the line $\ell(\overline{uv})$ parametrized by $x \in \RR$ and $g(y) = s + (t-s) \cdot y$ the line $\ell(\overline{st})$ parametrized by $y \in \RR$. We describe all values $x,y$ parameterizing points in this intersection by quantifying the boundaries of this set.  All points in the intersection of $\ell(\overline{st})$ with the boundary of the infinite cylinder $C_r(\overline{uv})$ are described by
\begin{align*}
	&\quad \|g(y) - f(x) \|^2 = r^2 \\
\Longleftrightarrow &\quad \|s + (t-s) \cdot y - u - (v - u) \cdot x \|^2 = r^2 \\
\Longleftrightarrow &\quad \sum_{i=1}^d \big(s_i + (t_i - s_i) \cdot y - u_i - (v_i - u_i) \cdot x \big)^2 = r^2.
\end{align*}
Let $z_i(y) = s_i + (t_i - s_i) \cdot y - u_i$, we obtain
\begin{align*}
&\quad \sum_{i=1}^d \big(z_i(y) - (v_i - u_i) \cdot x \big)^2 = r^2 \\
\Longleftrightarrow &\quad \left(\sum_{i=1}^d (v_i-u_i)^2 \cdot x^2 - 2 z_i(y) (v_i-u_i) \cdot x + z_i(y)^2\right) - r^2 = 0.
\end{align*}
For any fixed $y$, this is a quadratic equation in $x$ and the discriminant is
\[
	h(y) = \left(\sum_{i=1}^d 2 z_i(y) (v_i-u_i)\right)^2 - 4 \left(\sum_{i=1}^d (v_i-u_i)^2\right) \left(\sum_{i=1}^d z_i(y)^2\right) - r^2.
\]
Note that the quadratic equation has one solution exactly for those points on $\ell(\overline{st})$ which have distance $r$ from $\ell(\overline{uv})$, because the ball around those points intersects $\ell(\overline{uv})$ exactly once. Those are also the points which define the boundary of $\ell(\overline{st}) \cap R_r(\overline{uv})$.
Thus, we want to solve $h(y) = 0$. As $z_i(y)$ is linear in $y$, we obtain a quadratic equation in $y$. Note that all coefficients of the quadratic equation can be computed in $O(d)$ simple operations.
%We obtain an expression of the form $a + \sqrt{b}$.
Both solutions of this equation are of the form $\alpha \pm \sqrt{\beta}$. If $\beta < 0$ then the intersection is empty. Otherwise, we obtain an intersection interval $[\alpha - \sqrt{\beta}, \alpha + \sqrt{\beta}]$ for the infinite cylinder.

%Compute the intersection of the line with the plane through $s$ and orthogonal to $\overline{st}$. This gives us a real value.
To obtain the intersection with the original cylinder, we first compute the intersection of $\ell(\overline{st})$ with the top and bottom hyperplane of the cylinder. The two planes are given by all $p \in \RR^d$ which satisfy $(p-u)\cdot(v-u) = 0$ and $(p-v)\cdot(v-u) = 0$, respectively. By plugging the line equation into the hyperplane formulas, we get the intersection points. For the first plane we thereby obtain
\[
(g(y) - u) \cdot (v-u) = 0 \quad \Longleftrightarrow \quad (s + (t-s) \cdot y - u) \cdot (v-u) = 0 \quad \Longleftrightarrow \quad y = -\frac{(s-u)\cdot(v-u)}{(t-s)\cdot(v-u)}.
\]
The intersection with the second plane is analogous. Both intersections can be computed with $O(d)$ simple operations. We now compare the intersection interval of the planes and of the infinite cylinder using Lemma \ref{lem:comparison} to obtain the final intersection interval.
\end{proof}

Additionally, the following lemma holds, which says that we can express an intersection of a ball and a line with an interval of the form as in the previous lemma.
\begin{lemma}\label{lem:ball_simple}
Given a line $\ell(\overline{st})$ with $\overline{st} \in \XX_2^d$ and a ball $B_r(c)$ centered at $c$, the intersection $\ell(\overline{st}) \cap B_r(c)$ of those two objects is either
\[
	 \left\{s + (t-s)x \; \middle| \; x \in [a + \sqrt{b}, c + \sqrt{d}] \subseteq \RR\right\},
\]
where $a, b, c, d \in \RR$ can be computed using $O(d)$ simple operations, or it is empty.
\end{lemma}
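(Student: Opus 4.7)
The plan is to mimic the (simpler) first half of the proof of Lemma \ref{lem:cylinder_simple}: parametrize the line, substitute into the inequality defining the ball, obtain a quadratic in a single real variable, and read off the endpoints from the quadratic formula. No plane intersections are needed, which is why this case is strictly easier than the capped cylinder.

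Concretely, first I would write the line as $g(y) = s + (t-s)\cdot y$ for $y \in \RR$. A point $g(y)$ lies in $\ball_r(c)$ iff
\[
    \|g(y) - c\|^2 \le r^2,
\]
and the boundary of the intersection is characterized by equality. Expanding coordinate-wise,
\[
    \sum_{i=1}^d \bigl( s_i + (t_i - s_i) y - c_i \bigr)^2 = r^2,
\]
which, setting $w_i = s_i - c_i$ and $v_i = t_i - s_i$, becomes
\[
    \Bigl(\sum_{i=1}^d v_i^2\Bigr) y^2 + 2\Bigl(\sum_{i=1}^d v_i w_i\Bigr) y + \Bigl(\sum_{i=1}^d w_i^2 - r^2\Bigr) = 0.
\]
All three coefficients are computable in $O(d)$ simple operations. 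Call them $A, B, C$.

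Next I would apply the quadratic formula. If $A = 0$, the line is the constant point $s = t$ and the claim is vacuous (we may take $a = c = 0$ and $b = d = 0$ when $s \in \ball_r(c)$, or otherwise declare the intersection empty by testing $\|s - c\|^2 \le r^2$ using $O(d)$ simple operations). Assuming $A > 0$, the discriminant is $\Delta = B^2 - 4AC$; if $\Delta < 0$ (a test which is a simple operation once $A,B,C$ have been computed), the intersection is empty. Otherwise the two roots are
\[
    y_{\pm} = \frac{-B}{2A} \pm \frac{\sqrt{\Delta}}{2A} = \alpha \pm \sqrt{\beta},
\]
with $\alpha = -B/(2A)$ and $\beta = \Delta/(4A^2)$, both obtainable in $O(1)$ further simple operations once $A,B,C,\Delta$ are known. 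Because $\|g(y) - c\|^2$ is a convex quadratic in $y$, the ball-intersection interval is exactly $[y_-, y_+] = [\alpha - \sqrt{\beta}, \alpha + \sqrt{\beta}]$, which matches the desired form with $a = \alpha$, $b = \beta$, $c = \alpha$, $d = \beta$ (overloading $c,d$ as in the lemma statement).

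There is no real obstacle: the sole subtlety is to ensure we never take an explicit square root in the simple-operation model. We never do — $\alpha$ and $\beta$ are rational expressions in the input coordinates of $s,t,c$ and $r$, and the $\sqrt{\cdot}$ appears only symbolically in the output representation, which is precisely what the lemma statement permits. Comparisons of intervals of this form are then handled downstream by Lemma \ref{lem:comparison}.
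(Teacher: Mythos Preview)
Your proof is correct and follows essentially the same approach as the paper: parametrize the line, substitute into $\|s+(t-s)x-c\|^2\le r^2$, obtain a quadratic in the parameter, and read off an interval $[\alpha-\sqrt{\beta},\alpha+\sqrt{\beta}]$ from the quadratic formula. The paper's own proof is a three-sentence sketch of exactly this; you have simply filled in the details (coefficient computation, discriminant test, the degenerate $A=0$ case) that the paper leaves implicit.
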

\begin{proof}
The intersection is given by the $x$ fulfilling $\|s + (t-s) \cdot x - c\|^2 \leq r^2$.  To determine the extremal values for $x$ which satisfy this inequality is a quadratic equation in $x$.  Solving it, we obtain an intersection interval as required.
\end{proof}

Having proven those technical lemmas, we are now ready to start our argument for bounding the VC dimension. We argue that the truth values for predicate $\predVEh$ over all possible inputs are uniquely defined by the  set $\predVEh^r(q,s)$. Similarly, the truth values for predicate $\predVEv$ are uniquely defined by the set $\predVEv^r(q,s)$. 

Then the predicates $\predSLh$ and $\predSLv$ induce sets (where effectively $\predSLv(q,s) = \predSLh(s,q)$):
\begin{itemize}
\item $\predSLh^r(q,s) = \{ (e_1, e_2, e_3) \in E(s) \times E(s) \times E(q) \mid  e_3 \in D_{r,2}(e_1,e_2)\}$.
\item $\predSLv^r(q,s) = \{ (e_1, e_2, e_3) \in E(q) \times E(q) \times E(s) \mid  e_3 \in D_{r,2}(e_1,e_2)\}$.
\end{itemize}

We require a technical proof, bounding the VC dimension of the range space defined on segments with ranges defined by double-stadiums.  
To this end, let 
\[
\mathcal{D}^d_{2} = \Big\{\{\overline{st} \in \XX_2^d \mid \ell(\overline{st}) \in D_{r,2}(e_1, e_2)\} \;\Big|\; e_1, e_2 \in \XX_2^d, \; r > 0\Big\}
\] 
be the families of subsets of line segments $\overline{st} \in \XX_2^d$ whose supported lines $\ell(\overline{st})$ intersect a common double stadium $D_{r,2}(e_1,e_2)$.  We are now ready to state and prove the following lemma.

% We do not have a proof for $(\XX^d_2, \mathcal{D}^d_{2})$ for general $d$; the main technical hurdle seems to be the VC-dimension of $(\XX^d_2, \mathcal{RB}^d)$, the range spaces on segments, where a valid subset satisfies that each line $\ell$ extending a segment intersects a common region defined by the intersection of a capped cylinder and a ball.  It does not appear possible to directly invoke Theorem \ref{theorem:predicatevc} since resolving the intersection requires a square root.  However, we can provide a bound for $d=2$.  

\begin{lemma} \label{lem:VC-SSl}
The VC dimension of the range space $(\XX^d_2, \mathcal{D}^d_{2})$ and of the associated dual range space is $O(d^2)$.  
\end{lemma}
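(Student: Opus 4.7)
The plan is to apply Theorem \ref{theorem:predicatevc} to the decision function $h(\alpha,x) \in \{0,1\}$ that outputs $1$ iff $\ell(\overline{st}) \cap D_{r,2}(e_1,e_2) \neq \emptyset$, where $\alpha \in \RR^{4d+1}$ encodes the two edges $e_1, e_2 \in \XX_2^d$ together with the radius $r$, and $x \in \RR^{2d}$ encodes the endpoints of the segment $\overline{st}$. If I can implement $h$ with $t = O(d)$ simple operations, then Theorem \ref{theorem:predicatevc} immediately gives a VC dimension bound of $4(4d+1)(t+2) = O(d^2)$ for the primal, and applying the same theorem to the same $h$ with the roles of $\alpha$ and $x$ exchanged yields $4(2d)(t+2) = O(d^2)$ for the dual.

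The first step is to exploit the convexity of each stadium. Writing $e_i = \overline{u_i v_i}$, the identity
\begin{equation*}
\stadium_r(e_i) = \ccylinder_r(e_i) \cup \ball_r(u_i) \cup \ball_r(v_i)
\end{equation*}
together with the fact that $\stadium_r(e_i)$ is the Minkowski sum of a segment and a ball, and hence convex, implies that $\ell(\overline{st}) \cap \stadium_r(e_i)$ is a single (possibly empty) interval $I_i$ along the parameterization of $\ell(\overline{st})$. Distributing the intersection with $\ell(\overline{st})$ over the three pieces and invoking Lemmas \ref{lem:cylinder_simple} and \ref{lem:ball_simple}, I obtain at most three candidate sub-intervals per stadium, each either empty or represented in the form $[a+\sqrt{b},\, c+\sqrt{d}]$ with $a,b,c,d$ computable in $O(d)$ simple operations. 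By convexity the union of these at most three intervals is itself a single interval, so its left endpoint is the minimum of the non-empty sub-intervals' left endpoints and its right endpoint is their maximum; these constant-sized min/max selections are performed with $O(1)$ invocations of the square-root-free comparison of Lemma \ref{lem:comparison}, never materializing any radical explicitly.

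The second step is to decide whether $I_1 \cap I_2 \neq \emptyset$, which amounts to checking that $\max(\text{left endpoints}) \leq \min(\text{right endpoints})$; since all four quantities are of the form $a+\sqrt{b}$ with $b \geq 0$, Lemma \ref{lem:comparison} handles each comparison in $O(1)$ simple operations. Combining everything, $h$ is computed by a circuit of depth $t = O(d)$, and two applications of Theorem \ref{theorem:predicatevc} yield the $O(d^2)$ bound for both the primal and the dual range space.

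The main obstacle is Step 1: ensuring that the three independently computed sub-intervals per stadium can be merged into the convex-union interval using only Lemma \ref{lem:comparison}, without ever extracting a square root. The convexity of the stadium is what makes this clean, since otherwise the union of three $[a+\sqrt{b}, c+\sqrt{d}]$-intervals could be disconnected and could not be summarized by a single pair of such endpoints, which would jeopardize the $O(d)$ circuit-depth needed for Theorem \ref{theorem:predicatevc}.
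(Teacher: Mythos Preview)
Your proof is correct and uses the same core machinery as the paper: compute, via Lemmas~\ref{lem:cylinder_simple} and~\ref{lem:ball_simple}, intervals with endpoints of the form $a+\sqrt{b}$, compare them square-root-free via Lemma~\ref{lem:comparison}, obtain an $O(d)$-depth circuit, and apply Theorem~\ref{theorem:predicatevc} in both directions to get the primal and dual $O(d^2)$ bounds.

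The one organizational difference is in the decomposition. The paper distributes first: writing each stadium as $\ccylinder_r \cup \ball_r \cup \ball_r$, it expands $D_{r,2}(e_1,e_2)$ into $O(1)$ pairwise pieces (ball$\cap$ball, cylinder$\cap$cylinder, ball$\cap$cylinder), computes for each piece the line's two intersection intervals, tests them for overlap, and takes a logical OR. You instead merge first: for each stadium you form the three sub-intervals, use convexity of the stadium to collapse their union into a single interval $I_i$ by min/max on the endpoints, and then test $I_1\cap I_2\neq\emptyset$. Both are valid; your route needs the extra convexity observation but yields a slightly cleaner final comparison, while the paper's route avoids that observation at the price of enumerating more (still $O(1)$) sub-predicates.
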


% TODO: maybe rename P_{RR} because it's not rectangle but cylinder.
\begin{proof}
The predicate which determines whether a line $\ell$ intersects a double stadium $D_{r,2}(e_1,e_2)$ can be implemented by taking the logical-or over $O(1)$ calls to the following predicates (see Figure~\ref{fig:triple_intersection-sets} for an illustration):
\begin{itemize}
    \item $P_{BB}:$ checks whether $\ell$ intersects $D_{r,2}(e_1,e_2)$ in the intersection of two radius $r$ balls,
    \item $P_{RR}:$ checks whether $\ell$ intersects $D_{r,2}(e_1,e_2)$ in the intersection of two radius $r$ cylinders,
    \item $P_{RB}:$ checks whether $\ell$ intersects $D_{r,2}(e_1,e_2)$ in the intersection of one ball and one cylinder, both of radius $r$. 
\end{itemize}

% Predicate $P_{BB}$ is a simpler (without the ordering constraint) variant of $P_5$/$P_6$, and it can be implemented in $O(1)$ simple operations as in the proof of Lemma \ref{lem:M-VC}. 

\begin{figure}
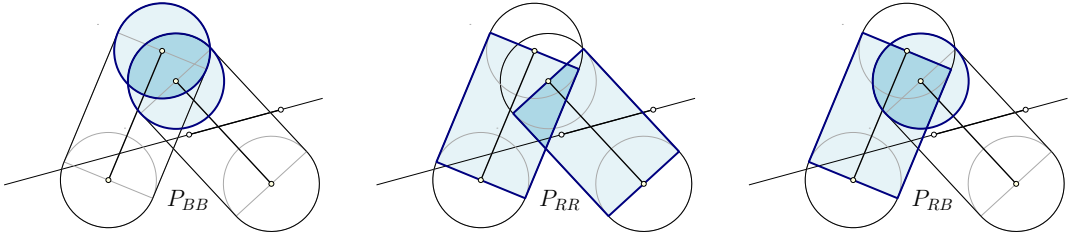

    \centering
    \includegraphics[page=2,width=0.3\textwidth]{StadiumStadiumLine}\hfill
    \includegraphics[page=3,width=0.3\textwidth]{StadiumStadiumLine}\hfill
    \includegraphics[page=4,width=0.3\textwidth]{StadiumStadiumLine}
    \caption{Illustration in $\RR^2$ of predicates used in the proof of Lemma~\ref{lem:VC-SSl} for the example given in Figure~\ref{fig:triple_intersection}.}
    \label{fig:triple_intersection-sets}
\end{figure}

For all predicates we first compute the intersection interval of the cylinder or ball using Lemma \ref{lem:cylinder_simple} or Lemma \ref{lem:ball_simple}. Applying Lemma \ref{lem:comparison}, we can then compute the intersection of these two intersection intervals by comparing their bounds, obtaining an interval of the form $[a + \sqrt{b}, c + \sqrt{d}]$. Again using Lemma \ref{lem:comparison}, we test if $a + \sqrt{b} \leq c + \sqrt{d}$, thereby checking if the intersection is non-empty. Thus, all three of the above predicates can be computed in $O(d)$ simple operations. Because each predicate requires $O(d)$ simple operations, and we need to perform a logical-or over $O(1)$ of these predicates, it implies range inclusion $e \in B_{r,2}$ and can be determined with $O(d)$ simple operations.  Hence by Theorem \ref{theorem:predicatevc} the VC dimension is $O(d^2)$.  Since an element of the dual range space is also defined by $O(d)$ real values, and the same operations can be applied, the dual range space also has VC dimension $O(d^2)$.  
\end{proof}

Using the above lemmas, we now get the following theorems.

\begin{theorem}
\label{Thaus1}
Let $\overrightarrow\RRR_{H,k}$ be the set of all \nnms, under the directed Hausdorff distance {\em from} polygonal curves in $\XX^d_k$. The VC dimension of $(\XX^d_m,\overrightarrow\RRR_{H,k})$ is $O(d^2 k^2 \log (dkm))$. 
\end{theorem}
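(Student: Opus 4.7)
The plan is to bound the shattering dimension of the range space $(\XX^d_m,\overrightarrow\RRR_{H,k})$ by combining Lemma \ref{lemma:HausPrCorr} with Theorem \ref{theorem:predicatevc} applied to two joint range spaces built from the atomic predicates $\predVEh$ and $\predSLh$. I fix an arbitrary $\{s_1,\dots,s_t\}\subseteq \XX^d_m$ and set $E=\bigcup_i E(s_i)$, so $|E|\leq tm$. Lemma \ref{lemma:HausPrCorr} tells us that the subset $\{i : \d_{\overrightarrow{H}}(q,s_i)\leq r\}$ cut out by a query $(q,r)\in\XX^d_k\times\RR_+$ is fully determined by the joint truth values of $\predVEh(q_j,e)$ over $(j,e)\in[k]\times E$ together with $\predSLh^r(\overline{q_j q_{j+1}},e_1,e_2)$ over $(j,(e_1,e_2))\in[k-1]\times\bigsqcup_i E(s_i)^2$. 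So it suffices to upper-bound the number of distinct joint truth-value patterns as $(q,r)$ varies.

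For the $\predVEh$ block I use the ground set $[k]\times E$ of size $O(ktm)$ with ranges parameterized by the full query $(q,r)\in\RR^{dk+1}$: an element $(j,e)$ lies in the range iff $q_j\in\stadium_r(e)$. The membership predicate runs in $O(d+k)$ simple operations --- $O(k)$ conditional jumps to index into $q$ and extract $q_j$, followed by the $O(d)$-cost stadium-inclusion test of Lemma \ref{lem:shapes}. Theorem \ref{theorem:predicatevc} then bounds the VC dimension of this joint range space by $O(dk(d+k))=O(d^2 k^2)$, and Sauer's lemma gives at most $(ktm)^{O(d^2 k^2)}$ distinct $\predVEh$ signatures.

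The $\predSLh$ block is analogous: I take the ground set $[k-1]\times\bigsqcup_i E(s_i)^2$ of size $O(ktm^2)$, ranges parameterized by the same $(q,r)\in\RR^{dk+1}$, with element $(j,(e_1,e_2))$ in the range iff $\ell(\overline{q_j q_{j+1}})\cap D_{r,2}(e_1,e_2)\neq\emptyset$. Lemma \ref{lem:VC-SSl}, together with Lemmas \ref{lem:cylinder_simple}, \ref{lem:ball_simple}, and \ref{lem:comparison}, implements the line-through-double-stadium membership test in $O(d)$ simple operations; adding $O(k)$ operations for index-selection on $q$, Theorem \ref{theorem:predicatevc} again gives VC dimension $O(d^2 k^2)$ and thus at most $(ktm^2)^{O(d^2 k^2)}$ distinct $\predSLh$ signatures.

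Multiplying the two signature counts upper-bounds the number of subsets of $\{s_1,\dots,s_t\}$ that directed Hausdorff balls can cut out by $(ktm^2)^{O(d^2 k^2)}$. Imposing the shattering condition $2^t\leq (ktm^2)^{O(d^2 k^2)}$ and unwinding exactly as in the proof of Theorem \ref{ThmHaus} yields $t=O(d^2 k^2 \log(dkm))$. I expect the main obstacle to be careful bookkeeping in the simple-operation model: in particular, justifying that extracting the coordinate $q_j$ of the query indexed by a ground-set integer $j$ really costs $\Theta(k)$ simple operations in the Goldberg--Jerrum model, which is exactly what converts the naive bound $O(d^2 k\log(dkm))$ (that one would obtain by handling each vertex/edge of $q$ independently with Sauer applied $k$ times over a shared radius) into the stated $O(d^2 k^2 \log(dkm))$ bound. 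A secondary subtlety is making sure the $\predSLh$ ground set is taken as the disjoint union $\bigsqcup_i E(s_i)^2$ rather than $E^2$, so pairs drawn from different curves are not spuriously tested against each other.
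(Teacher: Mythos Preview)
Your argument is correct and establishes the stated bound. It assembles the predicate range spaces differently from the paper, however: the paper treats each vertex (for $\predVEh$) and each edge (for $\predSLh$) of the query $q$ as giving its own range space with only $O(d)$ real parameters, applies the $O(d^2)$ VC-dimension bounds from Corollary~\ref{cor:shapes-VC} and Lemma~\ref{lem:VC-SSl} (and their duals), and then multiplies the resulting Sauer bounds over the $O(k)$ indices. You instead fold the index $j$ into the ground set and apply Theorem~\ref{theorem:predicatevc} once to a single range space whose parameter vector is the entire query $(q,r)\in\RR^{dk+1}$; the $\Theta(k)$ index-selection cost then yields VC dimension $O(dk(d{+}k))\le O(d^2k^2)$ per block.

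One correction to your closing remarks: the per-index product approach you call ``naive'' actually gives the \emph{tighter} exponent $O(d^2 k)$, not a weaker one. The shared radius is no obstruction---the joint truth table over all $j$ injects into the Cartesian product of the per-$j$ truth tables, so multiplying the per-$j$ Sauer bounds is a valid upper bound with no independence assumption needed. For $\d_{\overrightarrow{H}}(q,s)$ both $\predVEh$ and $\predSLh$ involve only a single vertex or edge of $q$, so the product runs over $O(k)$ terms and delivers $(tm)^{O(d^2 k)}$; the $\Theta(k)$ index-selection cost is therefore an artifact of your packaging rather than a genuine source of complexity, and the extra factor of $k$ in the stated bound is slack for this particular direction. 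Your secondary concern about $\bigsqcup_i E(s_i)^2$ versus $E^2$ is also harmless: enlarging the ground set can only increase the signature count, so either choice yields a valid upper bound.
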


\begin{proof} 
Let $S\subset \XX^d_m$ be a set of $t$ polygonal curves and let $q\in \XX^d_k$. By Lemma \ref{lemma:HausPrCorr}, the set $\{s\in S \mid \d_{\overrightarrow{H}}(q,s)\leq r\}$ 
is uniquely defined by the sets:
\[
\bigcup_{s\in S}\predVEh^r(q,s),\bigcup_{s \in S}P^r_7(q,s).
\]
By Lemma~\ref{lem:ball_simple}, the number of all possible sets $\bigcup_{r\geq0} \bigcup_{s\in S} P^r_3(q,s)$ is bounded by $\left(tm\right)^{O(d^2 k)}$. 
Furthermore, by Lemma \ref{lem:VC-SSl}, we are able to bound the number of all possible sets $ \bigcup_{r\geq0}
\bigcup_{s\in S} P^r_7(q,s)$ as $\left(tm\right)^{O(d^2 k^2)}$.  The $k^2$ term arises because we consider $\Theta(k^2)$ pairs $s_j,s_t$ for predicate $\predSLh$.  
Hence,
\[
	2^t\leq O((tm)^{d^2 k^2}) \implies t = {O}\left(d^2 k^2 \log(dk m)\right). \qedhere
\]
\end{proof}

\begin{theorem}
\label{Thaus2}
Let $\overleftarrow\RRR_{H,k}$ be the set of all \nnms, under the directed Hausdorff distance {\em to} polygonal curves in $\XX^d_k$. The VC dimension of $(\XX^d_m,\overleftarrow\RRR_{H,k})$ is $O(d^2 k \log (dkm))$. 
\end{theorem}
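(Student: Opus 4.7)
The proof will mirror the structure of Theorem~\ref{Thaus1} but replace the predicates with their ``vertical'' counterparts. By the symmetric form of Lemma~\ref{lemma:HausPrCorr}, the truth of $\d_{\overrightarrow{H}}(s,q)\le r$ is determined by the predicates $\predVEv$ (a vertex of $s$ lies in the stadium of an edge of $q$) and $\predSLv$ (the line supporting an edge of $s$ meets the double stadium of two edges of $q$). Fix an arbitrary $S\subset \XX^d_m$ with $|S|=t$. I will bound the number of distinct subsets of $S$ that arise as $(q,r)$ ranges over $\XX^d_k\times \RR_{>0}$ by $(tm)^{O(d^2 k)}$, which, after solving $2^t\le (tm)^{O(d^2 k)}$, yields $t=O(d^2 k\log(dkm))$.

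For the $\predVEv$ contribution, the predicate carves out, for each edge $e$ of $q$ and each radius $r$, a subset of the at most $tm$ vertices in $\bigcup_{s\in S}V(s)$ whose inclusion in the stadium $\stadium_r(e)$ is decidable in $O(d)$ simple operations by Lemma~\ref{lem:shapes}. Theorem~\ref{theorem:predicatevc} then gives VC dimension $O(d^2)$ for the primal range space (points versus stadiums parameterized by $(e,r)\in\RR^{O(d)}$), so a single stadium cuts $(tm)^{O(d^2)}$ subsets. Since the $k$ vertices of $q$ are free points in $\RR^d$, we may treat the $k-1$ edges as independent range centers, so jointly the $\predVEv$ truth tables take at most $(tm)^{O(d^2 k)}$ values.

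The crux of the argument is the $\predSLv$ contribution, where the improvement of a factor of $k$ over Theorem~\ref{Thaus1} arises because each invocation of $\predSLv$ involves only one edge of $s$ (whereas $\predSLh$ ranged over pairs of edges of $s$). The plan is to reduce the joint truth table over all $O(k^2)$ pairs of edges of $q$ to a product over the $k-1$ single edges of $q$. Concretely, for a fixed edge $e$ of $s$, its intersection with each stadium $\stadium_r(e_i)$ of an edge $e_i$ of $q$ is an interval on the supporting line of $e$ whose two endpoints, by Lemmas~\ref{lem:cylinder_simple} and~\ref{lem:ball_simple}, are expressible as $a+\sqrt{b}$; by Lemma~\ref{lem:comparison}, $\predSLv(e,e_i,e_j)$ is then a constant-depth Boolean combination of endpoint comparisons of these intervals, so the $\predSLv$ truth table is fully recovered once we know, for each edge of $q$, which points on the supporting lines of the $tm$ edges of $S$ lie in its stadium. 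Each edge of $q$ thus defines a single range in the stadium-line range space covered by Lemma~\ref{lem:VC-SSl} (restricted to a single stadium, still of VC dimension $O(d^2)$), yielding $(tm)^{O(d^2)}$ configurations per edge of $q$ and $(tm)^{O(d^2 k)}$ jointly.

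The main obstacle I expect to be the formalization of this last reduction: one must check carefully that knowing the per-edge-of-$q$ intervals on the supporting lines of all $tm$ edges of $S$, together with the $\predVEv$ bits, really suffices to recover the truth values of $\predSLv$ on every pair of edges of $q$ and hence the decision $\d_{\overrightarrow H}(s,q)\le r$ for each $s\in S$---otherwise one is forced to treat the $\Theta(k^2)$ pairs of edges of $q$ as independent parameters and loses the gain. Once this is done, multiplying the $\predVEv$ and $\predSLv$ counts gives $(tm)^{O(d^2 k)}$ distinct induced subsets, and solving $2^t\le (tm)^{O(d^2 k)}$ as in Theorem~\ref{ThmHaus} yields $t=O(d^2 k\log(dkm))$.
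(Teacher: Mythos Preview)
Your outline correctly identifies the relevant predicates ($\predVEv$, $\predSLv$) via Lemma~\ref{lemma:HausPrCorr} and handles the $\predVEv$ contribution exactly as the paper does. The difficulty you flag as ``the main obstacle'' is real, and your proposed resolution does not close it. You observe (correctly) that $\predSLv(e_3,e_i,e_j)$ is determined by whether the intervals $\ell(e_3)\cap\stadium_r(e_i)$ and $\ell(e_3)\cap\stadium_r(e_j)$ overlap, hence by the positions of those intervals on $\ell(e_3)$. But the shatter-function bound you then invoke---``each edge of $q$ defines a single range in the stadium-line range space, yielding $(tm)^{O(d^2)}$ configurations per edge of $q$''---counts only the $tm$ \emph{binary} incidences ``does $\ell(e_3)$ meet $\stadium_r(e_i)$?''. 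It does not record where along $\ell(e_3)$ the intersection interval sits. Two queries $(q,r)$ and $(q',r')$ can induce identical single-stadium incidence patterns on every edge of $S$ and still produce different overlap patterns among the intervals, hence different $\predSLv$ truth tables. So the product $(tm)^{O(d^2)}\cdots(tm)^{O(d^2)}=(tm)^{O(d^2k)}$ taken over the $k-1$ single stadiums does not upper-bound the number of $\predSLv$ configurations; your decomposition into single stadiums throws away exactly the pairwise information you need to recover.

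The paper takes a different route for $\predSLv$: it never decomposes the double stadium into single stadiums. It applies Lemma~\ref{lem:VC-SSl} directly, with the $\sim tm$ edges of curves in $S$ as the ground set of the primal range space $(\XX^d_2,\mathcal{D}^d_2)$ and each full double stadium $D_{r,2}(e_i,e_j)$ (for $e_i,e_j\in E(q)$ and radius $r$) as a single range. The distinction the paper draws with Theorem~\ref{Thaus1} is on the ground-set side---$\predSLv$ involves only single edges of $s$, whereas $\predSLh$ involves pairs of edges of $s$---and this is what the remark ``we only need to consider each of $k$ segments $s_j$'' is pointing at. In particular, the paper's linear-in-$k$ count is not obtained by the single-stadium reduction you propose.
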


\begin{proof} 
Let $S\subset \XX^d_m$ be a set of $t$ polygonal curves and let $q\in \XX^d_k$. By Lemma \ref{lemma:HausPrCorr}, the set $\{s\in S \mid \d_{\overrightarrow{H}}(q,s)\leq r\}$ 
is uniquely defined by the sets:
\[
\bigcup_{s\in S}\predVEv^r(q,s),\bigcup_{s \in S}P^r_8(q,s).
\]
By Lemma~\ref{lem:ball_simple}, the number of all possible sets $\bigcup_{r\geq0} \bigcup_{s\in S} P^r_4(q,s)$ is bounded by $\left(tm\right)^{O(d^2 k)}$. 
Furthermore, by Lemma \ref{lem:VC-SSl}, we are able to bound the number of all possible sets $ \bigcup_{r\geq0}
\bigcup_{s\in S} \predSLv^r(q,s)$ as $\left(tm\right)^{O(d^2 k)}$.  This is only linear in $k$ since 
we only need to consider each of $k$ segments $s_j$ for predicates $\predSLv$.  
Now, 
\[
	2^t\leq O((tm)^{d^2 k}) \implies t = {O}\left(d^2 k \log(dkm)\right). \qedhere
\]
\end{proof}

\begin{theorem}
\label{Thaus3}
Let $\RRR_{H,k}$ be the set of all \nnms, under the symmetric Hausdorff distance in $\XX^d_k$. The VC dimension of $(\XX^d_m,\RRR_{H,k})$ is $O(d^2 k^2 \log (dkm))$. 
\end{theorem}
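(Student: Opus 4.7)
The plan is to reduce to the two directed cases already handled in Theorems~\ref{Thaus1} and~\ref{Thaus2}. By definition, $\d_H(q,s) = \max\{\d_{\overrightarrow{H}}(q,s),\d_{\overrightarrow{H}}(s,q)\}$, so $\d_H(q,s)\le r$ holds if and only if both $\d_{\overrightarrow{H}}(q,s)\le r$ and $\d_{\overrightarrow{H}}(s,q)\le r$. Hence the symmetric Hausdorff ball of radius $r$ around $q$ equals the intersection of the two directed balls of the same radius. By Lemma~\ref{lemma:HausPrCorr}, membership of $s$ in the first is determined by the truth values of the predicates $\predVEh$ and $\predSLh$, and membership in the second by the truth values of $\predVEv$ and $\predSLv$; thus the membership of each $s$ in a symmetric Hausdorff ball is uniquely determined by the joint truth values of all four predicate families evaluated at $(q,s)$ with threshold $r$.

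Next, I would fix a set $S\subset\XX^d_m$ of $t$ polygonal curves and argue, following the templates of Theorems~\ref{Thaus1} and~\ref{Thaus2}, that as $q$ ranges over $\XX^d_k$ and $r>0$, the induced 4-tuple
\[
\Bigl(\bigcup_{s\in S}\predVEh^r(q,s),\;\bigcup_{s\in S}\predSLh^r(q,s),\;\bigcup_{s\in S}\predVEv^r(q,s),\;\bigcup_{s\in S}\predSLv^r(q,s)\Bigr)
\]
takes only few distinct values. Lemma~\ref{lem:ball_simple} bounds the number of possible vertex-edge components in both directions by $(tm)^{O(d^2 k)}$. Lemma~\ref{lem:VC-SSl}, via Sauer's lemma applied to the range space $(\XX^d_2,\mathcal{D}^d_2)$ and its dual, bounds $\bigcup_{r\ge 0}\bigcup_{s\in S}\predSLh^r(q,s)$ by $(tm)^{O(d^2 k^2)}$ (the extra factor of $k$ in the exponent coming from the quadratic number of edge pairs handled by $\predSLh$ in the proof of Theorem~\ref{Thaus1}) and $\bigcup_{r\ge 0}\bigcup_{s\in S}\predSLv^r(q,s)$ by $(tm)^{O(d^2 k)}$ (linear in $k$ as in Theorem~\ref{Thaus2}).

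Multiplying these four counts, the number of distinct 4-tuples is at most
\[
(tm)^{O(d^2 k)}\cdot (tm)^{O(d^2 k^2)}\cdot (tm)^{O(d^2 k)}\cdot (tm)^{O(d^2 k)} \;=\; (tm)^{O(d^2 k^2)},
\]
and this upper-bounds the number of distinct subsets of $S$ that can be cut out by symmetric Hausdorff balls. If $S$ were shattered, we would need $2^t\le (tm)^{O(d^2 k^2)}$, which yields $t = O(d^2 k^2\log(dkm))$ by the same logarithmic inversion used in the proofs of Theorems~\ref{ThmHaus} and~\ref{Thaus1}.

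I do not foresee a real obstacle here: all the technical work (the predicate framework, the quadratic bound via double stadiums in Lemma~\ref{lem:VC-SSl}, and the Sauer-style counting) is already in place. The only point that needs care is to verify that the symmetric ball really is determined jointly by all four predicate families — handled by Lemma~\ref{lemma:HausPrCorr} applied to each direction — and that taking a product of the two directional counts is legitimate, which it is because each directed ball on $S$ is itself determined by its pair of predicate sets. The bound is inherited from the worse (directed-from) direction, giving the claimed $O(d^2 k^2\log(dkm))$.
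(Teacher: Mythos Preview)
Your proposal is correct and follows essentially the same approach as the paper's proof. The paper's argument is simply a terse version of what you wrote: it invokes Lemma~\ref{lemma:HausPrCorr} to conclude that $\{s\in S\mid \d_H(q,s)\le r\}$ is determined by the four predicate families $\predVEh^r,\predVEv^r,\predSLh^r,\predSLv^r$, and then says that bounding the number of possible such sets ``as we did in the proofs of Theorems~\ref{Thaus1} and~\ref{Thaus2}'' yields the result---which is exactly the product-of-counts argument you spell out.
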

\begin{proof}
Lemmas \ref{lemma:HausPrCorr} imply that 
the set $\{s\in S \mid \d_{{H}}(q,s)\leq r\}$ 
is uniquely defined by the sets:
\[
\bigcup_{s\in S}\predVEh^r(q,s),\bigcup_{s\in S}\predVEv^r(q,s),\bigcup_{s\in S}\predSLh^r(q,s),\bigcup_{s\in S}\predSLv^r(q,s).
\]
Now bounding the number of all possible such sets, as we did in the proofs of Theorems \ref{Thaus1} and \ref{Thaus2}, implies the statement.
\end{proof}

\section{The Fr\'echet distance}
\label{predicates-frechet}
We consider the range spaces $(\XX^d_m, \RRR_{F_k})$ and $(\XX^d_m, \RRR_{wF_k})$, where $\RRR_{F_k}$ (resp. $\RRR_{wF_k}$) denotes the set of all \nnms, centered at curves in $\XX^d_k$, under the Fr\'{e}chet distance (resp. weak Fr\'{e}chet) distance.  

\subsection{Fr\'{e}chet distance predicates}
It is known that the Fr\'echet distance between two polygonal curves can be attained, either at a distance between their endpoints, at a distance between a vertex and a line supporting an edge, or at the common distance of two vertices with a line supporting an edge. The third type of event is sometimes called monotonicity event, since it happens when the Weak Fr\'echet distance is smaller than the Fr\'echet distance.
In this sense, our representation of the \nnm  of radius $r$ under the Fr\'echet distance is based on the following predicates, some of which we already used in the last section.
Let $s\in \XX_m^d$ with vertices $s_1,\ldots,s_{m}$ and $q\in \XX_k^d$ with vertices $q_1,\ldots,q_{k}$.

%\footnote{This representation was earlier derived in the context of data structures for range searching under the Fr\'echet distance (see \cite{AD18,arxAD17}). We repeat the relevant definitions and lemmas here.} 
%First, we repeat the definitions of the predicates $P_1-\predMv$ from~\cite{AD18,arxAD17}.
\begin{enumerate} 
\item[$\predVEh$] \emph{(Vertex-edge (horizontal))} As defined in Section \ref{section:haus}.

\item[$\predVEv$] \emph{(Vertex-edge (vertical))} As defined in Section \ref{section:haus}.

\item[$\predEs$] \emph{(Endpoints (start))} This
predicate returns true if and only if $\|s_1-q_1\| \leq r$. \label{ep}

\item[$\predEe$] \emph{(Endpoints (end))} This predicate returns true if and only if
$\|s_{m}-q_{k}\| \leq r$. \label{ep2}

\item[$\predMh$] \emph{(Monotonicity (horizontal))} Given two vertices of $s$, $s_j$ and
$s_t$ with $j<t$ and an edge of $q$, $\overline{q_i q_{i+1}}$, this predicate
returns true if there exist two points $p_1$ and $p_2$ on the {\em line\/} supporting
the directed edge, such that $p_1$ appears before $p_2$ on this line, and such
that $\|p_1 - s_j\| \leq r$ and $\|p_2-s_t\| \leq r$.  \label{hmp}

\item[$\predMv$] \emph{(Monotonicity (vertical))} Given two vertices of $q$, $q_i$ and
$q_t$ with $i<t$ and a directed edge of $s$, $\overline{s_j s_{j+1}}$, this
predicate returns true if there exist two points $p_1$ and $p_2$ on the {\em line\/}
supporting the directed edge, such that $p_1$ appears before $p_2$ on this
line, and such that $\|p_1 - q_i\| \leq r$ and $\|p_2-q_t\| \leq r$.
\label{vmp} \end{enumerate}

\begin{lemma}[Lemma 9, \cite{arxAD17}]
\label{lemma:lem9}
Given the truth values of all predicates
$\predVEh,\predVEv, \predEs, \predEe, \predMh, \predMv$ of two curves $s$ and $q$ for a fixed value of $r$, one can
determine if $d_{F}(s,q) \leq r$.  \end{lemma}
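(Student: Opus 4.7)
The plan is to simulate the classical Alt--Godau free-space decision algorithm purely in terms of the six predicate families. For a fixed $r$, define the free space $F_r = \{(\alpha,\beta)\in[0,1]^2 \mid \|s(\alpha)-q(\beta)\|\leq r\}$; by the Alt--Godau theorem, $d_F(s,q)\leq r$ iff there is a coordinate-wise monotone path in $F_r$ from $(0,0)$ to $(1,1)$. Partition $[0,1]^2$ into the $(m-1)\times(k-1)$ grid of cells indexed by pairs of edges $(\overline{s_j s_{j+1}},\overline{q_i q_{i+1}})$; inside each cell, $F_r$ is the intersection of an ellipse with the cell rectangle, so on every cell side it cuts out a single (possibly empty) sub-interval. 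This reduces the lemma to showing that the combinatorial data of $F_r$ on the full free-space diagram, together with the reachability dynamic program run over it, is determined by the predicates.

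The first step identifies each type of combinatorial event with a predicate family. The predicates $\predEs$ and $\predEe$ decide whether the source $(0,0)$ and target $(1,1)$ lie in $F_r$, a necessary prerequisite for $d_F(s,q) \leq r$. On any vertical (respectively horizontal) cell side, the free-space sub-interval is non-empty iff a vertex of one curve is within distance $r$ of the supporting line of the opposite curve's edge, which is exactly the condition tested by $\predVEh$ (respectively $\predVEv$), and the sub-interval endpoints are pinned down combinatorially by the same predicate evaluated at the two endpoints of the adjacent side together with $\predEs/\predEe$ at corners. Thus the non-emptiness and qualitative structure of every boundary interval of every cell is recoverable from the predicates alone.

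The remaining step is Alt--Godau's reachability propagation: in row-major order over cells, compute the sub-interval of each outgoing (top/right) side reachable by monotone paths originating at $(0,0)$, given the analogous sub-intervals on the incoming (left/bottom) sides. Because $F_r$ is convex inside each cell, the propagation rule is elementary, with the one non-trivial case being the \emph{monotonicity event}: a free point on the outgoing side may be unreachable because every reachable free point on the incoming sides lies strictly ``later'' along the parameterization, forcing a truncation. This event reduces precisely to asking, for two vertices of one curve and an edge of the other, whether two footpoints respecting their order exist on the supporting line within distance $r$, which is exactly the semantics of $\predMh$ and $\predMv$. Composing these tests through the standard dynamic program yields a truth value for ``$(1,1)$ is reachable'', equivalent to $d_F(s,q)\leq r$. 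The main subtlety to verify is that the \emph{line-supporting} phrasing of $\predMh/\predMv$ (rather than a segment-restricted one) is the correct propagation test, because the restriction to the cell rectangle is already separately encoded by the interval endpoints derived from $\predVEh/\predVEv$; the two predicate classes are complementary rather than redundant, and together they suffice.
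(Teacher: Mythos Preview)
The paper does not prove this lemma; it imports it verbatim as Lemma~9 of \cite{arxAD17}. Your sketch follows exactly the argument one finds there: interpret the predicates as the combinatorial events of the Alt--Godau free-space diagram and run the reachability dynamic program on top of them. So in substance your approach coincides with the cited source.

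Two small points to tighten. First, when you say a boundary interval is non-empty iff a vertex is within distance $r$ of the \emph{supporting line} of the opposite edge, that is not what $\predVEh,\predVEv$ test: they test distance to the \emph{segment}, and that is also the correct free-space condition on a cell side. Second, the sentence ``the sub-interval endpoints are pinned down combinatorially by the same predicate evaluated at the two endpoints of the adjacent side'' is not quite right; the endpoints of the free intervals are real numbers and are not determined by $\predVEh,\predVEv$. What the dynamic program actually needs is only the \emph{relative order} of these endpoints across parallel cell sides in the same row or column, and that order information is precisely what $\predMh,\predMv$ supply. The clean way to phrase the invariant is: if the reachable sub-interval on a side is non-empty, its lower endpoint equals the lower endpoint of the free interval on some earlier side in the same row/column (namely the side through which the monotone path last entered that row/column); checking whether this value lies below the upper endpoint of the free interval on the next side is exactly one evaluation of $\predMh$ or $\predMv$. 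Making this invariant explicit closes the only real gap between your sketch and a complete proof.
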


Predicates $\predVEh,\predVEv, \predEs, \predEe$ are sufficient for representing metric balls under the weak Fr\'echet distance. We include a proof for the sake of completeness.

\begin{lemma}
Given the truth values of all predicates $\predVEh,\predVEv, \predEs, \predEe$ of two curves $s$ and $q$ for a fixed value of $r$, one can determine if $\d_{wF}(s,q)\leq r$.
\end{lemma}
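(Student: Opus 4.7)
The plan is to translate the problem into the language of the free-space diagram and reduce it to a graph connectivity question that is fully decidable from the listed predicates. Recall that $\d_{wF}(s,q)\leq r$ if and only if there exists a continuous (but not necessarily monotone) path in the free space $F_r=\{(\alpha,\beta)\in[0,1]^2 : \|s(\alpha)-q(\beta)\|\leq r\}$ from $(0,0)$ to $(1,1)$; this is the standard characterization of the weak Fr\'echet distance via free-space connectivity. The diagram is partitioned into $(m-1)(k-1)$ axis-aligned cells $C_{i,j}$, one per edge pair $(\overline{s_i s_{i+1}},\overline{q_j q_{j+1}})$, and within each cell the free space is the intersection of an ellipsoidal region with the unit square, hence convex (and therefore connected when non-empty).

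First I would build a purely combinatorial cell-adjacency graph $G$ with one node per cell $C_{i,j}$. I put an edge between horizontally adjacent cells $C_{i,j}$ and $C_{i+1,j}$ exactly when $\predVEv(s_{i+1},\overline{q_j q_{j+1}})$ holds, because this vertical boundary segment, at $s$-parameter equal to vertex $s_{i+1}$, carries a free point iff some point of $\overline{q_j q_{j+1}}$ is within distance $r$ of $s_{i+1}$. Symmetrically, I connect vertically adjacent cells $C_{i,j}$ and $C_{i,j+1}$ iff $\predVEh(q_{j+1},\overline{s_i s_{i+1}})$ holds. The corners of the diagram $(0,0)$ and $(1,1)$ lie in cells $C_{1,1}$ and $C_{m-1,k-1}$, and the predicates $\predEs$ and $\predEe$ test exactly whether these two corner points belong to $F_r$. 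The proposed decision procedure is then: output yes iff $\predEs$ and $\predEe$ both hold and $C_{1,1}$ and $C_{m-1,k-1}$ lie in the same connected component of $G$.

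To prove correctness, I argue both directions. If the procedure says yes, then starting from $(0,0)\in F_r$ (by $\predEs$) and following the BFS path of cells in $G$, I can concatenate: a straight segment inside each cell's convex free space between the two relevant boundary points, and a trivial zero-length step when passing through a shared free boundary point. This produces a continuous curve in $F_r$ ending in the free space of $C_{m-1,k-1}$, which I then extend by a straight segment (convexity) to $(1,1)\in F_r$ (by $\predEe$), witnessing $\d_{wF}(s,q)\leq r$. Conversely, any continuous path $\gamma\subset F_r$ from $(0,0)$ to $(1,1)$ forces $\predEs,\predEe$ to be true, and whenever $\gamma$ crosses from $C_{i,j}$ to an orthogonally adjacent cell it crosses their shared boundary at a free point, which makes the corresponding $\predVEh$ or $\predVEv$ predicate true and puts the edge in $G$.

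The one subtlety I expect to handle is the case where $\gamma$ passes through an internal grid corner $(s_i,q_j)$ with $2\leq i\leq m-1$, $2\leq j\leq k-1$, which is a meeting point of four cells and not directly tested by $\predEs$ or $\predEe$. This is harmless: if such a corner lies in $F_r$ then $\|s_i-q_j\|\leq r$, and since $q_j$ is an endpoint of $\overline{q_{j-1}q_j}$ and $\overline{q_j q_{j+1}}$ (respectively for $s_i$), all four adjacent boundary predicates $\predVEv(s_i,\cdot)$ and $\predVEh(q_j,\cdot)$ automatically evaluate to true, so the four incident cells are already pairwise connected in $G$ through their shared edges. Hence no extra predicate is needed, and the equivalence above is complete, proving the lemma.
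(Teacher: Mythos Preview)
Your proposal is correct and follows essentially the same approach as the paper: both reduce to connectivity in the cell-adjacency (free-space) grid graph of Alt and Godau, with graph edges determined by $\predVEh,\predVEv$ and the endpoint conditions by $\predEs,\predEe$. The paper's proof is a terse pointer to Alt--Godau's bottleneck-shortest-path algorithm, whereas you spell out the graph construction, both directions of correctness, and the grid-corner subtlety explicitly; the underlying argument is the same.
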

\begin{proof}
Alt and Godau \cite{altgodau-95} describe an algorithm for computing the Weak Fr\'echet distance which can be used here. In particular, one can construct an edge-weighted grid graph on the cells (edge-edge pairs) of the parametric space of the two polygonal curves and subsequently compute a bottleneck-shortest path from the pair of first edges to the pair of last edges along the two curves. We can use edge weights in $\{0,1\}$ to encode if the corresponding vertex-edge pair has distance at most $r$, as given by the predicates $\predVEh$ and $\predVEv$. If and only if there exists a bottleneck shortest path of cost $0$ and the endpoint conditions are satisfied (as given by the predicates $\predEs$ and $\predEe$), the Weak Fr\'echet distance between $q$ and $s$ is at most $r$.
\end{proof}

\subsection{Fr\'{e}chet distance  VC dimension bounds}\label{sec:weak:frechet}
\label{SSwF}
We first consider the range space ($\XX_m^d,\RRR_{wF,k})$, where $\RRR_{wF,k}$ is the set of all \nnms under the Weak Fr\'{e}chet distance centered at curves in $\XX_k^d$.  
The main task is to translate the predicates $\predVEh,\predVEv, \predEs, \predEe$ into simple range spaces, and then bound their associated VC dimensions.  
Consider any two polygonal curves $s \in \XX_m^d$ and $q \in \XX_k^d$. In order to encode the intersection of polygonal curves with metric balls, we will make use of the following sets:
\begin{itemize}
\item $\predVEh^r(q,s)=\left \{\stadium_r(\overline{s_i s_{i+1}})\cap V(q) \mid \overline{s_i s_{i+1}}\in E(s)     \right\} $,
\item $\predVEv^r(q,s)=\left \{\stadium_r(\overline{q_i q_{i+1}})\cap V(s) \mid  \overline{q_i q_{i+1}}\in E(q)     \right\} $.
\item  $\predEs^r(q,s)=\ball_r(q_1)\cap V(s) $,
\item $\predEe^r(q,s)= \ball_r(q_k)\cap V(s) $,
\end{itemize}

%\subsection{VC dimension bound}

%Next, we prove our bound on the VC dimension of the resulting range space.

\begin{theorem}\label{ThmwF}
Let $\RRR_{wF,k}$ be the set of  
\nnms under the Weak Fr\'echet metric centered at polygonal curves in $\XX_k^d$. The VC dimension of  
$(\XX_m^d,\RRR_{wF,k})$ is ${O}\left(d^2 k \log (dkm)\right)$.
\end{theorem}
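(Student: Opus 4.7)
The plan is to follow the template of Theorems \ref{ThmHaus}, \ref{ThmDFD}, \ref{Thaus1}, and \ref{Thaus2}: use the preceding characterization of $\d_{wF}$ to reduce the shattering of $t$ curves to counting distinct configurations of the four predicate families, and bound each family via Theorem \ref{theorem:predicatevc} together with Lemma \ref{lem:shapes}. Let $S=\{s_1,\ldots,s_t\} \subset \XX_m^d$ and write $V(S)=\bigcup_s V(s)$ and $E(S)=\bigcup_s E(s)$, each of size at most $tm$. By the preceding lemma, the subset $\{s\in S:\d_{wF}(s,q)\leq r\}$ is uniquely determined by the quadruple
\[
\left(\bigcup_{s\in S}\predVEh^r(q,s),\ \bigcup_{s\in S}\predVEv^r(q,s),\ \bigcup_{s\in S}\predEs^r(q,s),\ \bigcup_{s\in S}\predEe^r(q,s)\right),
\]
so $2^t$ is upper-bounded by the number of distinct such quadruples as $(q,r)$ varies over $\XX_k^d\times\RR_{>0}$. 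I will bound each of the four factors separately.

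For $\predEs$ and $\predEe$, the union equals $\ball_r(q_1)\cap V(S)$ (resp.\ $\ball_r(q_k)\cap V(S)$), so since the shattering dimension of Euclidean balls in $\RR^d$ is $d+1$, the number of distinct values is at most $(tm)^{O(d)}$.

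The main work is on $\predVEh$ and $\predVEv$. For $\predVEh$, I fix a vertex index $j\in\{1,\ldots,k\}$ and consider the range space with ground set $E(S)$ whose ranges are $\{e\in E(S):q_j\in\stadium_r(e)\}$, parametrized by $(q_j,r)\in\RR^{d+1}$. By Lemma \ref{lem:shapes}, the stadium-membership predicate is computable in $O(d)$ simple operations, so Theorem \ref{theorem:predicatevc} gives VC dimension $O(d^2)$ and hence shattering function at most $(tm)^{O(d^2)}$. Taking the product over the $k$ choices of $j$ bounds the number of distinct values of $\bigcup_{s}\predVEh^r(q,s)$ by $(tm)^{O(d^2 k)}$. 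The argument for $\predVEv$ is completely symmetric: the ground set is now $V(S)$, parametrized by the edge endpoints and radius $(q_i,q_{i+1},r)\in\RR^{2d+1}$ for each of the $k-1$ edges of $q$, again yielding $(tm)^{O(d^2 k)}$.

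Multiplying the four bounds gives at most $(tm)^{O(d^2 k)}$ distinct quadruples, so $2^t \leq (tm)^{O(d^2 k)}$; solving this inequality as in the proof of Theorem \ref{ThmHaus} yields $t = O(d^2 k\log(dkm))$, matching the claim. I do not anticipate a serious obstacle here: the only subtlety is that the global radius $r$ is shared across the $k$ vertex/edge slices of $q$, but since I only need an upper bound I may freely decouple these slices and take the product, so no cross-slice interaction needs to be controlled. This is strictly simpler than the Hausdorff setting of Theorems \ref{Thaus1}--\ref{Thaus3}, which additionally required the double-stadium predicates $\predSLh,\predSLv$ and the more delicate Lemma \ref{lem:VC-SSl}.
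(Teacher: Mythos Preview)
Your proposal is correct and follows essentially the same approach as the paper's proof: both reduce to the four predicate families $\predVEh,\predVEv,\predEs,\predEe$, bound the endpoint predicates by $(tm)^{O(d)}$ via Euclidean balls, and bound the vertex--edge predicates by $(tm)^{O(d^2 k)}$ by invoking the $O(d^2)$ VC bound for stadium ranges (primal and dual) and taking a product over the $k$ vertices or $k-1$ edges of $q$. The only cosmetic difference is that you appeal directly to Theorem~\ref{theorem:predicatevc} with Lemma~\ref{lem:shapes}, whereas the paper cites the packaged Corollary~\ref{cor:shapes-VC}; the underlying computation is identical.
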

\begin{proof}
If $S$ is a set of $t$ polygonal curves of complexity $m$, the set $\{s\in S\mid   \d_{wF}(s,q)\leq r\}$ is uniquely defined by 
the sets 
\[\bigcup_{s\in S} \predVEh^r(q,s),\bigcup_{s\in S} \predVEv^r(q,s),
\bigcup_{s\in S} \predEs^r(q,s),\bigcup_{s\in S} \predEe^r(q,s) .\]

The number of all possible sets $\bigcup_{r\geq0}\bigcup_{s\in S} \predVEh^r(q,s)$ and 
the number of all possible sets $\bigcup_{r\geq0}\bigcup_{s\in S} \predVEv^r(q,s)$ are both bounded by $(tm)^{O(d^2k)} $ by 
Corollary \ref{cor:shapes-VC} using set $\stadium_r(\overline{st})$, 
and by considering the dual range space, respectively.
%Lemma \ref{LemSegm} and 
%Corollary \ref{CorDualSegm} respectively.

Notice that the number of all possible sets $\bigcup_{r\geq0}\bigcup_{s\in S} \predEs^r(q,s)$ is bounded by $(tm)^{O(d)}$. The same holds for the number of all possible sets $\bigcup_{r\geq0}\bigcup_{s\in S} \predEe^r(q,s)$. 

Hence,  
%$2^t\leq 2^{O(d^2k \log (tm))} 
\[2^t\leq (tm)^{O(d^2 k)} 
\implies t = {O}\left(d^2k \log(dk m)\right). \qedhere \]
\end{proof}

%\textbf{Jeff:  I think we can go from $d^2$ to $d$ here, but depends on VC-dimension of dual to cylinders.}

%\subsection{Fr\'{e}chet Distance VC Dimension Bounds}
\label{SSF}
We now consider the range space $(\XX^d_m, \RRR_{F,k})$, where $\RRR_{F,k}$ denotes the set of all \nnms, centered at curves in $\XX^d_k$, under the Fr\'{e}chet distance. 
The approach is the same as with the Weak Fr\'{e}chet distance, except we also need to bound VC dimension of range spaces associated with predicates $\predMh$ and $\predMv$ to encode monotonicity.  
While there exists geometric set constructions that are used in the context of range searching~\cite{AD18,arxAD17} we can simply appeal to Theorem \ref{theorem:predicatevc}.   

We need to define a set to represent predicates $\predMh$ and $\predMv$.
To this end, we again use $\XX_2^d$ to represent the set of all segments in $\RR^d$.
Then the ranges $\mathcal{M}$ are defined by sets $M_r(\overline{s t}) \in \mathcal{M}$, defined with respect to radii $r\geq0$ and line segments $\overline{st}$.  Specifically, 
$M_r(\overline{st}) \subset \XX_2^d$ so any $\{q_1,q_2\} \in M_r(\overline{st})$ satisfies that 
\begin{itemize}
\item  $\exists p_1, p_2 \in \ell$ where $\overline{st}$ supports $\ell$ such that: 
\item $\|p_1-q_1\|\leq r$ and $\|p_2-q_2\| \leq r$;
and 
\item $p_1$ is less than $p_2$ along the line as $\langle p_1, t-s \rangle \leq \langle p_2, t-s\rangle$.  
\end{itemize}
The predicate $\predMh$ is satisfied if $s_j, s_t \in M_r(\overline{q_i q_{i+1}})$ and predicate $\predMv$ is satisfied if $q_i, q_t \in M_r(\overline{s_j s_{j+1}})$.  

\begin{figure}[t]
    \centering
    \includegraphics[page=1]{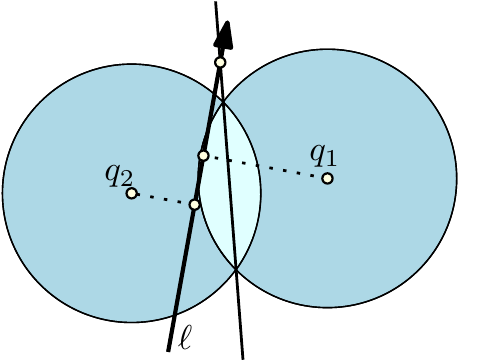}\quad\quad
    \includegraphics[page=2]{p5predicateproof}
    \caption{
    Illustration of predicate $\predMh$ in $\RR^2$ with line $\ell$ and the two disks centered at $q_1$ and $q_2$. In these examples, the projection of $q_2$ onto $\ell$ appears before the projection of $q_1$ onto $\ell$ along the direction of $\ell$ and the intersection of $\ell$ with the bisector lies outside of the lens formed by the two disks. On the left, the predicate is satisfied by setting $p_1=p_2=\pi_{\overline{st}}(q_1)$. On the right, the predicate evaluates to false.  
    }
    \label{fig:p5predicateproof}
\end{figure}

\begin{corollary}
\label{lem:M-VC}
The VC dimension of the range space $(\XX_2^d, \mathcal{M})$, and of the associated dual range space, is $O(d^2)$.
\end{corollary}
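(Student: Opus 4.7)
The plan is to apply Theorem~\ref{theorem:predicatevc}, so it suffices to show that, given a pair $\{q_1,q_2\}\in\XX_2^d$ and a range descriptor $(\overline{st},r)\in\RR^{2d+1}$, membership $\{q_1,q_2\}\in M_r(\overline{st})$ can be decided using $O(d)$ simple operations. Since both the pair and the descriptor are specified by $O(d)$ real numbers, this immediately yields a VC dimension bound of $4\cdot O(d)\cdot O(d)=O(d^2)$ for the primal range space; for the dual we invoke the same theorem with the roles of $\alpha$ and $x$ swapped, which is valid because the parameter bound in Theorem~\ref{theorem:predicatevc} is symmetric in the coordinate-counts of $\alpha$ and $x$ (both sides here have $O(d)$ coordinates and the computation $h$ is unchanged).

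The key geometric observation is the following reformulation of membership. Let $\ell=\ell(\overline{st})$, and for $i\in\{1,2\}$ let $I_i=\ell\cap \ball_r(q_i)$. Each $I_i$ is an interval on the line $\ell$ (possibly empty). I claim that $\{q_1,q_2\}\in M_r(\overline{st})$ if and only if both $I_1$ and $I_2$ are non-empty and the starting endpoint of $I_1$ along the direction $t-s$ is at most the terminal endpoint of $I_2$ along that direction. Indeed, the existence of witnesses $p_1\in I_1,\,p_2\in I_2$ with $\langle p_1,t-s\rangle\le\langle p_2,t-s\rangle$ is equivalent to $\min I_1\le \max I_2$ (using the natural ordering along the directed line), which in turn requires the intervals themselves to be non-empty.

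By Lemma~\ref{lem:ball_simple}, each $I_i$ is either empty or of the form $\{s+(t-s)x\mid x\in[a_i+\sqrt{b_i},\,c_i+\sqrt{d_i}]\}$, with the four real numbers $a_i,b_i,c_i,d_i$ computable from $(s,t,q_i,r)$ in $O(d)$ simple operations. Using Lemma~\ref{lem:comparison}, the non-emptiness test $a_i+\sqrt{b_i}\le c_i+\sqrt{d_i}$ and the crucial inequality
\[
a_1+\sqrt{b_1}\ \le\ c_2+\sqrt{d_2}
\]
(which, because $t-s$ fixes the orientation of the parametrization, encodes exactly the ordering condition) can each be decided in $O(1)$ simple operations. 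Composing a constant number of such comparisons with a logical AND finishes the test in $O(d)$ simple operations total, and Theorem~\ref{theorem:predicatevc} yields the claim.

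The main obstacle I expect is bookkeeping around the ordering: Lemma~\ref{lem:ball_simple} delivers the endpoints of $I_i$ as a pair of expressions $a+\sqrt{b}$ and $c+\sqrt{d}$, but does not a priori tell us which endpoint is the "start" and which is the "end" along the direction $t-s$. Since both endpoints lie on the directed line parametrized by $x\mapsto s+(t-s)x$, the ordering along the line is just the ordering of the scalar parameters, so one more application of Lemma~\ref{lem:comparison} resolves which is which; this preserves the $O(d)$ bound on the total number of simple operations and lets the argument go through uniformly whether or not the projection of $q_1$ precedes that of $q_2$.
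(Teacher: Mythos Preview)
Your proof is correct and follows essentially the same approach as the paper: compute the two line--ball intersection intervals via Lemma~\ref{lem:ball_simple}, compare their endpoints via Lemma~\ref{lem:comparison}, and invoke Theorem~\ref{theorem:predicatevc} (with the dual handled by symmetry of the coordinate counts). The paper's own proof is a one-line remark that this ``follows from Lemma~\ref{lem:VC-SSl} by collapsing the stadiums to circles,'' which amounts to exactly the interval-computation-and-comparison argument you spell out.
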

\begin{proof}

The corollary directly follows from Lemma \ref{lem:VC-SSl} by collapsing the stadiums to circles.

\end{proof}

We define sets to correspond with predicates $\predMh$ and $\predMv$:
\begin{itemize}
\item $\predMh^r(q,s) = \{\{s_j, s_t\} \in V(s) \times V(s) \mid (s_j,s_t) \in M_r(\overline{q_i q_{i+1}}) \text{ and } \overline{q_i q_{i+1}} \in E(q)\}$.
\item $\predMv^r(q,s) = \{\{q_i, q_t\} \in V(q) \times V(q) \mid (s_i,s_t) \in M_r(\overline{s_j s_{j+1}}) \text{ and } \overline{s_j s_{j+1}} \in E(s)\}$.
\end{itemize}

\begin{theorem}
\label{TFr}
Let $\RRR_{F,k}$ be the set of all \nnms, under the  
Fr\'{e}chet distance, centered at polygonal curves in $\XX^d_k$. The VC dimension of 
$(\XX_m^d,\RRR_{F,k})$ is ${O}\left(d^2 k^2 \log (d km)\right)$.
\end{theorem}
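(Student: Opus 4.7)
The plan is to follow the template of Theorem \ref{ThmwF}, augmented with the two monotonicity predicates $\predMh$ and $\predMv$ that Lemma \ref{lemma:lem9} requires for the (non-weak) Fr\'{e}chet distance. Fix a set $S \subset \XX^d_m$ of $t$ polygonal curves. By Lemma \ref{lemma:lem9}, for every $(q,r) \in \XX^d_k\times\RR_{>0}$, the subset $\{s\in S \mid \d_F(s,q)\le r\}$ is uniquely determined by the six aggregated predicate-sets $\bigcup_{s\in S}\predVEh^r(q,s)$, $\bigcup_{s\in S}\predVEv^r(q,s)$, $\bigcup_{s\in S}\predEs^r(q,s)$, $\bigcup_{s\in S}\predEe^r(q,s)$, $\bigcup_{s\in S}\predMh^r(q,s)$, and $\bigcup_{s\in S}\predMv^r(q,s)$. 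If $S$ is shattered, the number of distinct combined six-tuples must be at least $2^t$, so I will bound the number of distinct values each signature attains separately and then multiply.

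For the four predicates shared with the weak Fr\'{e}chet case, the counts are exactly those of Theorem \ref{ThmwF}: $\predVEh$ and $\predVEv$ each contribute $(tm)^{O(d^2 k)}$ via Corollary \ref{cor:shapes-VC} applied to stadium ranges (once in the primal and once in the dual direction), while the endpoint-ball predicates $\predEs$ and $\predEe$ each contribute only $(tm)^{O(d)}$. The real new work is in bounding $\predMh$ and $\predMv$, where I would invoke Corollary \ref{lem:M-VC}, which supplies VC-dimension $O(d^2)$ for both $(\XX_2^d,\mathcal{M})$ and its dual. For $\predMh$, the $k-1$ edges of $q$ serve as combinatorial slots for the range parameter $\overline{q_i q_{i+1}}$; each slot realizes at most $(tm^2)^{O(d^2)}$ subsets of the $O(tm^2)$ ordered vertex-pairs drawn from curves in $S$, so taking a product over the $O(k)$ slots yields $(tm)^{O(d^2 k)}$. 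The roles swap for $\predMv$: the $\binom{k}{2}$ ordered vertex-pairs of $q$ are the slots, and for each slot the dual form of Corollary \ref{lem:M-VC} bounds the number of realizable subsets of the $O(tm)$ edges in $S$ by $(tm)^{O(d^2)}$, so taking a product over $\Theta(k^2)$ slots yields $(tm)^{O(d^2 k^2)}$.

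Multiplying the six contributions, the number of distinct subsets of $S$ realizable as a Fr\'{e}chet ball intersected with $S$ is at most $(tm)^{O(d^2 k^2)}$, with the $\predMv$ term dominating. Demanding that this be at least $2^t$ and solving with the arithmetic trick from the footnote of Theorem \ref{ThmHaus} gives $t = O(d^2 k^2 \log(dkm))$, matching the claimed bound. The main obstacle I expect is the asymmetric accounting between $\predMh$ and $\predMv$: one might hope to push $\predMv$ down to $(tm)^{O(d^2 k)}$ and save a factor of $k$ in the theorem, but the quadratic slot count is intrinsic to this approach because the center curve $q$ contributes $\Theta(k^2)$ unordered vertex-pairs as range parameters for $M_r$, and Corollary \ref{lem:M-VC} does not yield anything better than $(tm)^{O(d^2)}$ subsets per slot without a finer analysis exploiting the fact that the $k$ vertex positions are shared across all $\binom{k}{2}$ slots.
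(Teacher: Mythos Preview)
Your proposal is correct and follows essentially the same route as the paper: reduce via Lemma~\ref{lemma:lem9} to the six predicate signatures, reuse the weak-Fr\'echet counts for $\predVEh,\predVEv,\predEs,\predEe$, and invoke Corollary~\ref{lem:M-VC} for the monotonicity predicates, with the dominant $(tm)^{O(d^2k^2)}$ term forcing $t=O(d^2k^2\log(dkm))$.

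One point worth noting: you attribute the $k^2$ to $\predMv$ (because $q$ contributes $\Theta(k^2)$ ordered vertex-pairs as range parameters, each acting on $O(tm)$ ground edges), whereas the paper's proof text attributes it to $\predMh$ with the phrase ``$\Theta(k^2)$ pairs $s_j,s_t$''. Your accounting is the internally consistent one: in $\predMh$ the query curve $q$ supplies only $k-1$ edge slots and the $\Theta(m^2)$ vertex-pairs come from the ground-set curves, so the per-slot count $(tm^2)^{O(d^2)}=(tm)^{O(d^2)}$ multiplied over $O(k)$ slots gives $(tm)^{O(d^2k)}$; it is $\predMv$ whose $\binom{k}{2}$ query-side slots produce the $k^2$ exponent. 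The paper's explanation appears to have the roles of the two predicates swapped (the same swap occurs in the Hausdorff proofs of Theorems~\ref{Thaus1} and~\ref{Thaus2}), but since both monotonicity predicates are retained the final bound is unaffected.
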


\begin{proof}
Due to Lemma~\ref{lemma:lem9}, if $S \subset \XX_m^d$ is a set of $t$ polygonal curves and $q\in \XX^d_k$, the set $\{s\in S\mid   \d_{F}(s,q)\leq r\}$ is uniquely defined by 
the sets 
\[
\bigcup_{s\in S} \predVEh^r(q,s),\bigcup_{s\in S} \predVEv^r(q,s) 
,
\bigcup_{s\in S} \predEs^r(q,s),\bigcup_{s\in S} \predEe^r(q,s),\bigcup_{s\in S} \predMh^r(q,s), 
\bigcup_{s \in S} \predMv^r(q,s)
.\]
As in the proof of Theorem~\ref{ThmwF}, the number of all possible sets \[\left(
\bigcup_{r\geq0} \bigcup_{s\in S} \predVEh(q,s), \bigcup_{r\geq0} \bigcup_{s\in S} \predVEv(q,s),
\bigcup_{r\geq0} \bigcup_{s\in S}
\predEs(q,s), \bigcup_{r\geq0} \bigcup_{s\in S} \predEe(q,s) \right)\] is bounded by $\left(tm\right)^{O(d^2 k)}$. 

By Corollary \ref{lem:M-VC} we are able to bound the number of all possible sets $ \bigcup_{r\geq0}
\bigcup_{s\in S} \predMh^r(q,s)$ as $\left(tm\right)^{O(d^2 k^2)}$.  The $k^2$ term arises because we consider $\Theta(k^2)$ pairs $s_j,s_t$ for predicate $\predMh$.  
And because this bound is proven using Theorem \ref{theorem:predicatevc}, then it applies to the dual range space, and we also bound the number of possible sets in $\bigcup_{r\geq0} \bigcup_{s\in S} \predMv^r(s,q)$ as also $\left(tm\right)^{O(d^2 k^2)}$.  
So ultimately, 
\[
%2^t\leq 2^{O(d^2 k^2 \log (tm))} 
2^t\leq (tm)^{O(d^2 k^2)} 
\implies t = {O}\left(d^2 k^2 \log(d k m)\right). \qedhere
\]
\end{proof}

\section{Lower bounds}

Our lower bounds are constructed in the simplified setting that either $k=1$ or $m=1$, i.e., either the ground set or the curves defining the metric ball consist of one vertex only. In this case, all of our considered distance measures (except for one direction of the directed Hausdorff distance) are equal:

\begin{lemma}\label{lem:allEqual}
Let $p \in \RR^d$, $q \in \XX^d_k$, let
$ r = \min_{s \in V(q)} \|s-p\| $. It holds that
\[ r = d_{dH}(q,p) = d_{\overrightarrow{H}}(q,p) = d_{dF}(q,p) = d_{F}(q,p) = d_{wF}(q,p) = d_{dH}(p,q).\]
\end{lemma}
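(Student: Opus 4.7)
The plan is to verify each equality in the chain by separately evaluating each distance measure, exploiting the fact that one of the ``curves'' is the single point $p$, which collapses every matching, traversal, and reparametrization appearing in the Fr\'echet-type definitions.

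First, I would handle the Fr\'echet variants. For $\d_{dF}(q,p)$, since $p$ has only one vertex, the unique valid traversal pairs that vertex with every vertex of $q$, so $\d_{dF}(q,p) = \max_{v \in V(q)}\|v-p\|$. For the continuous $\d_F(q,p)$ and the weak $\d_{wF}(q,p)$, the parametrization of $p$ is constant, while any continuous reparametrization $f:[0,1]\to[0,1]$ of $q$ with $f(0)=0$ and $f(1)=1$ is surjective onto $[0,1]$ by the intermediate value theorem (for both monotone and non-monotone $f$). The defining min-over-reparametrizations of a max therefore collapses to $\sup_{x \in q}\|x-p\|$. By convexity of $x \mapsto \|x-p\|$ on each line segment of $q$, this supremum is attained at a vertex, so it agrees with the discrete expression.

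Second, I would dispatch the Hausdorff terms. The identification $\d_{\overrightarrow{H}}(q,p) = \sup_{x \in q}\|x-p\|$ follows immediately from the definition, and the same convexity argument reduces it to a vertex-based quantity. When one argument of the discrete Hausdorff is the singleton $\{p\}$, the defining $\sup\inf$ expression reduces directly to a single $\min$ or $\max$ over $V(q)$, yielding $\d_{dH}(q,p)$ and $\d_{dH}(p,q)$ in closed form. Plugging in then verifies each equality in the chain against the common value $r$. The ``exception'' flagged in the preamble is $\d_{\overrightarrow{H}}(p,q) = \inf_{x \in q}\|x-p\|$, which can be strictly smaller than the nearest-vertex distance whenever the closest point of $q$ lies in the interior of an edge; this is precisely why this direction is excluded from the chain.

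The main technical ingredient is the convexity step identifying the continuous supremum with the discrete maximum over vertices; everything else is careful unwinding of the definitions and tracking which direction of each distance produces the relevant $\min$ or $\max$ expression. No step is especially difficult, so the proof should reduce to a short case-by-case check.
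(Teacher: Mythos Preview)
Your approach is correct and essentially identical to the paper's: direct inspection of the definitions for the discrete variants, and a convexity argument for the continuous ones (your ``the norm is convex on each segment, so the sup is attained at a vertex'' is exactly the contrapositive of the paper's ``the Euclidean ball is convex, so the curve lies inside iff its vertices do''). One caveat worth flagging in your write-up: the common value all of your computations produce is $\max_{v\in V(q)}\|v-p\|$, not the $\min$ appearing in the displayed definition of $r$; this is a typo in the statement, as both the paper's own proof (``iff all vertices of $q$ are contained in this ball'') and the downstream use of the lemma (ranges as intersections of balls) make clear.
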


\begin{proof}
In the discrete case we interpret $q \in \XX^d_k$ as an ordered or unordered sequence of points in $\RR^d$. In this case, the proof follows directly from Definitions  (Section~\ref{prelim}). In the continuous case we interpret $q \in \XX^d_k$ as a continuous polygonal curve. In this case, the proof follows directly from the definitions and from the convexity of the Euclidean ball of radius $r$ centered at the point $p$. If and only if all vertices of $q$ are contained in this ball, the distance is less or equal $r$.
\end{proof}

Because of the above lemma, any lower bound that we prove for the Hausdorff distance in the discrete setting automatically extends to the other distance measures.

\begin{lemma}\label{lem:lbHaus1}
Let $\RRR_{dH,k}$ be the set of all \nnms, under the Hausdorff distance, centered at point sets in $\XX^2_k$. 
The VC-dimension of the range space $(\XX^2_m,\RRR_{dH,k})$ is $\Omega(k)$.
\end{lemma}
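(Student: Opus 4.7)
The plan is to shatter $k$ elements of $\XX_m^2$ by Hausdorff balls whose centers come from $\XX_k^2$, reducing the problem to separating $k$ points in $\RR^2$ via a \emph{common-radius} intersection of $k$ Euclidean balls. For $i=1,\ldots,k$, pick distinct points $p_i\in\RR^2$ and let $P_i\in\XX_m^2$ be the constant $m$-tuple $(p_i,\ldots,p_i)$, which as a set equals $\{p_i\}$. For any center $C=(c_1,\ldots,c_k)\in\XX_k^2$ and any radius $r>0$, unpacking the symmetric discrete Hausdorff distance gives
\[
  \d_{dH}(P_i,C)\;=\;\max\!\Big(\min_{j}\|p_i-c_j\|,\,\max_{j}\|p_i-c_j\|\Big)\;=\;\max_{j\leq k}\|p_i-c_j\|,
\]
so $P_i\in b_{\d_{dH}}(C,r)$ if and only if $p_i\in\bigcap_{j=1}^{k}\ball_r(c_j)$. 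It therefore suffices to exhibit $k$ points in $\RR^2$ that can be shattered by common-radius intersections of $k$ Euclidean balls.

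I would place $p_1,\ldots,p_k$ at distinct positions on the unit circle. For the full subset $S=\{1,\ldots,k\}$, take $C_S$ to be $k$ copies of the origin and $r_S=1$; then $\bigcap_j\ball_1(0)=\ball_1(0)$ captures every $p_i$. For a proper subset $S\subsetneq\{1,\ldots,k\}$, fix a parameter $L>0$ and, for each $j\notin S$, introduce the outer center $o_j:=-L\,p_j$. A direct expansion gives $\|o_j-p_j\|=L+1$ and $\|o_j-p_i\|^{2}=L^{2}+2L\langle p_j,p_i\rangle+1$ for $i\neq j$. Setting
\[
  r_S^{2}\;:=\;L^{2}+2L\,\alpha_S+1,\qquad \alpha_S\;:=\;\max_{i\in S,\,j\notin S}\langle p_j,p_i\rangle,
\]
yields $\|o_j-p_i\|\leq r_S$ for every $i\in S$ and $j\notin S$, while $\|o_j-p_j\|=L+1>r_S$, since $\alpha_S<1$ strictly because the $p_i$ are pairwise distinct unit vectors. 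Letting $C_S$ be the list $\{o_j:j\notin S\}$ padded out to length exactly $k$ by duplicates of any $o_j$ (or by the origin, whose ball $\ball_{r_S}(0)$ already contains every $p_i$), we obtain
\[
  b_{\d_{dH}}(C_S,r_S)\cap\{P_1,\ldots,P_k\}\;=\;\{P_i:i\in S\}.
\]
Since every subset of $\{P_1,\ldots,P_k\}$ is realized, the set $\{P_1,\ldots,P_k\}$ is shattered, so the VC dimension is at least $k=\Omega(k)$.

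The main obstacle is that a Hausdorff ball is controlled by a \emph{single} radius, which forces the $k$ constituent Euclidean balls to share it; a naive ``one tailor-made radius per excluded point'' argument is unavailable. I overcome this by pushing each center $o_j$ far along the ray opposite to $p_j$: to first order $\|o_j-p_i\|\approx L+\langle p_j,p_i\rangle$, so all distances from $o_j$ to the various $p_i$ collapse into a narrow band around $L$, and the gap between the inclusion side ($i\in S$) and the exclusion side ($i=j$) is exactly $1-\max_{i\neq j}\langle p_j,p_i\rangle>0$. This uniform positive gap, inherent to distinct unit vectors, leaves room for a single threshold $r_S$ to simultaneously include every $p_i\in S$ and exclude every $p_j\notin S$.
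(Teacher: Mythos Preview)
Your proof is correct and follows essentially the same approach as the paper: place $k$ points on the unit circle and realize each subset as an intersection of $k$ equal-radius Euclidean balls whose centers lie far along the rays opposite to the points to be excluded, exploiting that $\d_{dH}(\{p\},C)=\max_j\|p-c_j\|$. Two cosmetic points: the case $S=\emptyset$ leaves $\alpha_S$ a maximum over an empty index set (handle it separately with any small radius), and the parenthetical claim that $\ball_{r_S}(0)$ contains every $p_i$ requires $r_S\geq 1$, hence $L\geq 2$ rather than just $L>0$---but your primary suggestion of padding with a duplicate $o_j$ already avoids this.
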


\begin{proof}
We construct a set of $k$ points in $\RR^2$ that can be shattered by the ranges in $\RRR_{dH,k}$. The basic idea is that the ranges behave like convex polygons with $k$ facets. In particular, the set of points contained inside the range centered at a curve $q$, is equal to the intersection of a set of equal-size Euclidean balls centered at the vertices of $q$. For the construction we position a set $P$ of $k$ points on a unit circle, see Figure~\ref{fig:shattering_curve}. Let $R \gg 4$ be a parameter of the construction. For representing any subset of $P$ we construct $q$ using $k$ vertices (in any order) placed on the origin-centered circle of radius $R-1$. In particular, we can force any $p_0 \in P$ to be excluded from the metric ball under the Hausdorff distance and of radius $R-\eps$, for some $\eps>0$, by placing a vertex on the line through the origin that contains $p_0$ and by adding this vertex to the vertex set of $q$.  Using the $k$ vertices in $q$ we can specifically exclude any subset of up to $k$ points from $P$ by such a construction, and by placing a vertex of $q$ at the origin it will not exclude any points.  Hence any set $P$ on the unit circle of size $k$ can be shattered.  
\begin{figure}
    \centering
    \includegraphics{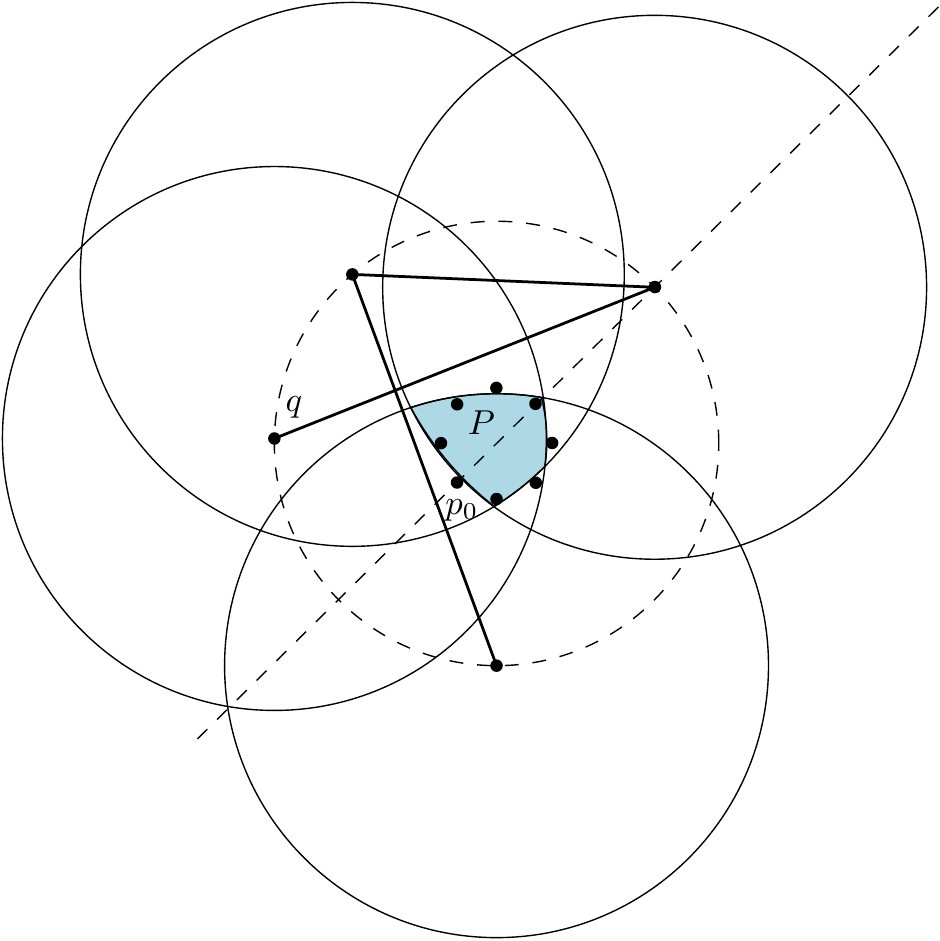}
    \caption{A curve $q$ with metric ball of radius $R-\eps$ containing a subset of $P$. The shaded area is the set of points that are contained inside the metric ball.}
    \label{fig:shattering_curve}
\end{figure}
\end{proof}

\begin{lemma}\label{lem:lbHaus2}
Let $\RRR_{dH,k}$ be the set of all $\nnms$, under the Hausdorff distance, centered at discrete point sets in $\XX^2_{k}$. The VC dimension of the range space $(\XX^2_m,\RRR_{dH,k})$ is $\Omega(\log m)$.
\end{lemma}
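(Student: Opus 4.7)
The plan is to exhibit $t = \lfloor \log_2 m \rfloor$ point sets $S_1, \ldots, S_t \in \XX^2_m$ that are shattered by Hausdorff balls in $\RRR_{dH,k}$. As in Lemma~\ref{lem:lbHaus1}, it suffices to argue the case $k = 1$ (ranges are Euclidean disks in $\RR^2$), since a single-vertex ball center can always be realized in $\XX^2_k$ by repeating the vertex $k$ times.

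The strategy is to fix in advance a family of $N = 2^t$ witness disks $\{B_I\}_{I \subseteq \{1,\ldots,t\}}$ in $\RR^2$, one per subset, and then to construct each $S_i$ so that $S_i \subseteq B_I$ exactly when $i \in I$. Concretely, for each $i$ and each $J$ with $i \notin J$ I will pick a witness point
\[
  q_{i,J} \in \Bigl(\bigcap_{I \ni i} B_I\Bigr) \setminus B_J,
\]
and define $S_i = \{q_{i,J} : J \not\ni i\}$, padded by duplicating one of its elements to size exactly $m$. This is feasible since $|\{J : J \not\ni i\}| = 2^{t-1} \leq m/2$. Verification is then immediate: for $i \in I$, every $q_{i,J} \in S_i$ lies in $\bigcap_{I' \ni i} B_{I'} \subseteq B_I$, so $S_i \subseteq B_I$; for $i \notin I$, the point $q_{i,I} \in S_i$ lies outside $B_I$ by construction, so $S_i \not\subseteq B_I$. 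Thus all $2^t$ patterns are realized, giving VC dimension at least $t = \Omega(\log m)$.

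The main obstacle is choosing the disks so that the witness points exist, i.e., the non-inclusion property
\[
  \forall\, i \in \{1,\ldots,t\},\ \forall\, J \subseteq \{1,\ldots,t\} \setminus \{i\}:\quad \bigcap_{I \ni i} B_I \not\subseteq B_J
\]
holds uniformly over all $(i,J)$ pairs. A concrete construction takes $B_I = B(c_I, R)$ with $c_I = \sum_{i \in I} v_i$, where $v_1, \ldots, v_t$ are unit vectors in generic directions in $\RR^2$ and $R$ is a moderate common radius. Setting $V = \sum_j v_j$, the centroid of the positive centers $\{c_I : I \ni i\}$ is $\bar c_i = \tfrac12 (v_i + V)$, while the centroid of the negative centers is $\bar d_i = \tfrac12 (V - v_i)$, offset by $v_i$. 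For generic $v_i$'s and an appropriately chosen $R$ (neither so small that the positive intersection is empty, nor so large that it is swallowed by every negative disk), the positive intersection extends in a direction associated with $v_i$ that evades each specific $B_J$ with $J \not\ni i$, providing the required witness point. The delicate part of the argument is verifying this non-inclusion simultaneously across all $(i,J)$ pairs, which is where I expect most of the technical work to lie.
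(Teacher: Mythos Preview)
Your framework is sound---fixing $2^t$ disks in advance and building each $S_i$ from witness points $q_{i,J}$ is exactly the right shape for a direct shattering argument---but the proposal stops precisely where the actual content would begin. You yourself flag that verifying the non-inclusion
\[
\bigcap_{I \ni i} B_I \not\subseteq B_J \qquad \text{for all } i,\ J \not\ni i
\]
is ``where most of the technical work lies,'' and indeed nothing you have written establishes it. Worse, your specific choice $c_I=\sum_{i\in I} v_i$ with a common radius $R$ runs into a concrete obstruction: if $c_J$ lies in the convex hull of $\{c_I : I \ni i\}$, then by convexity of $\|\cdot\|$ every $p$ with $\max_{I \ni i}\|p-c_I\|\le R$ also satisfies $\|p-c_J\|\le R$, and the non-inclusion fails outright. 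Since $\{c_I:I\ni i\}=v_i+\{c_{I'}:I'\subseteq[t]\setminus\{i\}\}$ and the $c_J$ are exactly the $c_{I'}$, you would need the unit translation by $v_i$ to push the zonotope $\sum_{j\ne i}[0,v_j]$ entirely off every one of its own vertices. For large $t$ this zonotope has diameter growing with $t$ in every direction, so a unit shift need not achieve this; your heuristic about ``extending in a direction associated with $v_i$'' does not address this. So as written the proposal has a genuine gap, and the proposed construction may simply be wrong for large $t$.

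The paper avoids all of this with a one-line argument: by Lemma~\ref{lem:lbHaus1} (applied with the roles of $m$ and $k$ exchanged, using the symmetry of $\d_H$) the ``swapped'' range space has VC dimension $\Omega(m)$, and the standard primal--dual bound (if a range space has VC dimension $\nu$, its dual has VC dimension at least $\lfloor\log_2\nu\rfloor$) immediately yields $\Omega(\log m)$ here. If you want a direct construction instead, the cleanest route is to \emph{unwind} that duality: take the $2^t$ disk centers to be the points $p_j$ on the unit circle from Lemma~\ref{lem:lbHaus1} (with $k$ there set to $2^t$), all with the common radius $R-\varepsilon$, and let $S_i$ be the center-set $q_{A_i}$ from that construction with $A_i=\{j: i\text{-th bit of }j=1\}$. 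This inherits correctness from Lemma~\ref{lem:lbHaus1} with no new geometric verification needed, and it shows that your \emph{framework} was fine---only the choice of disks was the problem.
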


\begin{proof}
Theorem \ref{lem:lbHaus1} and \cite[Lemma 5.18]{H11}, which bounds the VC dimension of the dual range space as a function of the VC dimension of the primal space, imply the theorem.
\end{proof}

\begin{theorem}\label{ThmLbs}
The VC-dimension of the range spaces $(\XX^2_m,\RRR_{dF,k})$, $(\XX^2_m,\RRR_{dH,k})$, $(\XX^2_m,\RRR_{wF,k})$, $(\XX^2_m,\RRR_{F,k})$, and $(\XX^2_m,\RRR_{H,k})$ is $\Omega(\max(k,\log m))$.
\end{theorem}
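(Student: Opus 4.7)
The plan is to extend the constructions of Lemmas~\ref{lem:lbHaus1} and~\ref{lem:lbHaus2} from the discrete Hausdorff distance to each of the other four distance measures in the statement. The main leverage is Lemma~\ref{lem:allEqual}: whenever one of the two objects is a single point in $\RR^d$, all five of the distance measures considered coincide with the Euclidean distance from that point to the nearest vertex of the other curve. This will let me replay the Hausdorff shattering argument essentially verbatim for each of the other distances, and then use duality to pass from the $\Omega(k)$ bound to the $\Omega(\log m)$ bound.

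For the $\Omega(k)$ part, I would revisit the shattering construction of Lemma~\ref{lem:lbHaus1}. That construction shatters $k$ points in $\RR^2$ by metric balls whose centers are curves of complexity $k$, using only that the metric ball, intersected with $\RR^2$ viewed as single-point curves, equals the intersection of $k$ Euclidean balls, one per vertex of the center. I would embed each such shattered point into $\XX^2_m$ as the degenerate element consisting of $m$ copies of that point. By Lemma~\ref{lem:allEqual}, the distance from this degenerate element to any center $q \in \XX^2_k$ is identical under the discrete Hausdorff, discrete Fr\'echet, weak Fr\'echet, continuous Fr\'echet, and continuous Hausdorff distances, so the same subset of ground-set elements lies inside the metric ball irrespective of the distance measure. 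The entire shattering therefore carries over to each of $\RRR_{dH,k}$, $\RRR_{dF,k}$, $\RRR_{wF,k}$, $\RRR_{F,k}$, and $\RRR_{H,k}$.

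For the $\Omega(\log m)$ part, I would argue by duality, mirroring the proof of Lemma~\ref{lem:lbHaus2}. Swapping the roles of $m$ and $k$ in the previous paragraph yields that $(\XX^2_k, \RRR_{d', m})$ has VC dimension $\Omega(m)$ for each distance measure $d'$ appearing in the statement. Each of the five distance measures is symmetric in its two arguments, so the dual range space of $(\XX^2_m, \RRR_{d', k})$ agrees (up to the usual identification) with $(\XX^2_k, \RRR_{d', m})$, and therefore also has VC dimension $\Omega(m)$. Applying the standard fact that the primal VC dimension is at least the logarithm of the dual VC dimension --- exactly the tool used in Lemma~\ref{lem:lbHaus2} --- gives the desired $\Omega(\log m)$ bound on $(\XX^2_m, \RRR_{d', k})$. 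Combining the two bounds yields the claimed $\Omega(\max(k, \log m))$.

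The most delicate step is justifying the degenerate embedding of a single point as an $m$-copy sequence in $\XX^2_m$. This is immediate for the discrete variants by the definition of the discrete Fr\'echet and discrete Hausdorff distance. For the continuous variants it holds because a polygonal curve whose vertices all coincide parametrizes a single point, so its Fr\'echet or Hausdorff distance to any other curve $q$ reduces to the point-to-$q$ distance handled by Lemma~\ref{lem:allEqual}. Once this sanity check is in place, the theorem follows as a short composition of already-established ingredients.
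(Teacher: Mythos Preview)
Your proposal is correct and follows essentially the same route as the paper: the paper's proof is the one-line ``apply Lemmas~\ref{lem:lbHaus1} and~\ref{lem:lbHaus2} together with Lemma~\ref{lem:allEqual}'', and what you have written is precisely an unpacking of that sentence, making explicit the degenerate embedding of a point as an $m$-fold repeated sequence and the role-swap plus duality that underlies Lemma~\ref{lem:lbHaus2}. The only place to be slightly more careful is your identification of the dual of $(\XX^2_m,\RRR_{d',k})$ with $(\XX^2_k,\RRR_{d',m})$: the dual ground set consists of balls (center \emph{and} radius), not just centers, so strictly speaking you should say that the shattered configuration from the role-swapped Lemma~\ref{lem:lbHaus1}, which uses a single fixed radius, embeds into the dual ground set and is shattered there---but this is exactly the ``usual identification'' you allude to and is the same shortcut the paper takes.
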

\begin{proof}
The statement follows by applying Lemmas~\ref{lem:lbHaus1} and \ref{lem:lbHaus2} together with Lemma~\ref{lem:allEqual}.
\end{proof}

\begin{lemma}\label{lem:lbHaus3}
Let $\RRR_{dH,k}$ be the set of all \nnms, under the Hausdorff distance centered at point sets in $\XX^d_k$. 
For $d\geq 4$, the VC-dimension of the range space $(\XX^d_m,\RRR_{dH,k})$ is $\Omega(d k \log k)$.
\end{lemma}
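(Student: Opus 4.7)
The strategy is to reduce to the tight lower bound for $k$-fold intersections of halfspaces, then approximate each halfspace by a suitable equal-radius Euclidean ball. First, take the ground-set complexity $m = 1$, embedding each single point $p \in \RR^d$ into $\XX^d_m$ as the constant sequence $(p,\ldots,p)$. For $q = (q_1,\ldots,q_k) \in \XX^d_k$, one has $\d_{dH}(q,p) = \max_i \|q_i - p\|$, so a Hausdorff ball of radius $r$ centered at $q$ cuts the embedded image in exactly $\bigcap_{i=1}^{k} \ball_r(q_i)$. Hence shattering in $(\XX^d_m,\RRR_{dH,k})$ on the embedded image is equivalent to shattering by $k$-fold intersections of equal-radius Euclidean balls in $\RR^d$.

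Second, invoke the tight composition lower bound of Csik\'os, Kupavskii, and Mustafa~\cite{csikos2019tight}: for $d \geq 4$, the $k$-fold intersection of closed halfspaces in $\RR^d$ has VC dimension $\Omega(dk\log k)$. Let $P \subset \RR^d$ be a shattered set of this size. To transfer the shattering to equal-radius balls, fix any halfspace $H = \{x : \langle n,x\rangle \leq b\}$ with $\|n\|=1$ and consider the ball $\ball_R((b-R)n)$ tangent to $\partial H$ from inside. Writing $p = (b-\alpha)n + p'$ with $p' \perp n$, one gets $\|p - (b-R)n\|^2 = (R-\alpha)^2 + \|p'\|^2$, hence $p \in \ball_R((b-R)n)$ iff $\|p'\|^2 \leq 2R\alpha - \alpha^2$, which for $\alpha > 0$ holds once $R$ is large and fails for $\alpha < 0$.

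For each subset $S \subseteq P$, fix halfspaces $H_1^S,\ldots,H_k^S$ whose intersection meets $P$ in $S$ (perturbing slightly if necessary so that no $\partial H_i^S$ hits $P$), and choose a single radius $R_S$ large enough that all $k$ associated balls reproduce their halfspaces on $P$. The resulting $k$ centers form an element of $\XX^d_k$ whose Hausdorff ball of radius $R_S$ meets the embedded image exactly in $S$. Hence $P$ is shattered, yielding the claimed $\Omega(dk\log k)$ bound on $\XX^d_1$, which lifts to $\XX^d_m$ via the constant-sequence embedding.

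The main delicate point is the joint-radius constraint: within one Hausdorff ball all $k$ component balls must share a radius, whereas across different Hausdorff balls the radius is free. The approximation respects this because $R_S$ is common to all $k$ halfspaces in the range realising $S$, while $R_S$ itself may vary between ranges; the required lower bound on $R_S$ is simply the maximum, over the finite collection $H_1^S,\ldots,H_k^S$, of the $P$-dependent threshold implied by the inequality above.
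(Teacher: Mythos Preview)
Your proposal is correct and follows essentially the same approach as the paper: set $m=1$ so that Hausdorff balls become intersections of $k$ equal-radius Euclidean balls, invoke the Csik\'os--Kupavskii--Mustafa $\Omega(dk\log k)$ lower bound for $k$-fold halfspace intersections in $\RR^d$ (for $d\ge 4$), and replace each halfspace by a sufficiently large ball with the same inclusion pattern on the finite shattered set. Your write-up is in fact more careful than the paper's sketch, in that you make explicit the ball-approximation calculation, the perturbation to keep $P$ off the boundaries, and the equal-radius constraint across the $k$ balls within a single range.
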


\begin{proof}
As in the proof of Lemma~\ref{lem:lbHaus1}, our construction is set in the simplified setting where $m=1$, i.e., the ground set corresponds to points in $\RR^d$.
We now show the theorem by reducing to a recent lower bound of Csikos \etal~\cite{CKM18} which gave an $\Omega(d k \log k)$ lower bound for a related range space for $d\geq 4$.  This is defined on a ground set $P \subseteq \RR^d$ with ranges $\RRR_k$ defined so each range $R \in \RRR_k$ is the intersection of $k$ halfspaces.    
Recall that the construction in the proof of Lemma~\ref{lem:lbHaus1} used the fact that for $d=2$ the ranges behave like convex polygons. We can observe a similar behaviour in higher dimensions. In particular, Lemma~\ref{lem:allEqual} implies that any range in $\RRR_{dH,k}$ corresponds to the intersection of $k$ balls in $\RR^d$ (centered at vertices of $q$). Observe that for a sufficiently large fixed radius $R$, for any point set $P \subseteq \RR^d$ and for any halfspace $H$, we can find a ball of radius $R$ which has the same inclusion properties as $H$. 
Finally, the lower bound by Csikos \etal~\cite{CKM18} shows that there exist a set $P$ of $\kappa = \Omega(d k \log k)$ points  which can be shattered by such ranges.   
\end{proof}

\begin{lemma}\label{lem:lbHaus4}
Let $\RRR_{dH,k}$ be the set of all $\nnms$, under the Hausdorff distance, centered at point sets in $\XX^d_{k}$. For $d \geq 4$ The VC dimension of the range space $(\XX^d_m,\RRR_{dH,k})$ is $\Omega(\log dm)$.
\end{lemma}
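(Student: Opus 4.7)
The plan is to mirror the very short proof of Lemma~\ref{lem:lbHaus2}, with Lemma~\ref{lem:lbHaus3} playing the role that Lemma~\ref{lem:lbHaus1} played there. First, I would apply Lemma~\ref{lem:lbHaus3} with the parameters $m$ and $k$ exchanged: this yields that for $d\geq 4$ the primal range space $(\XX^d_k,\RRR_{dH,m})$ has VC dimension $\Omega(dm\log m)$. No new construction is needed; Lemma~\ref{lem:lbHaus3} is stated symbolically and applies equally well to either parameter ordering.

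Next, I would invoke \cite[Lemma~5.18]{H11}, which bounds the VC dimension $\nu_d$ of a dual range space by $2^{\nu_p+1}$ in terms of the primal VC dimension $\nu_p$. The contrapositive reads $\nu_p \geq \log_2 \nu_d - 1$, so it suffices to exhibit a shattered set of size $\Omega(dm\log m)$ in the \emph{dual} of $(\XX^d_m,\RRR_{dH,k})$; taking logarithms then produces the promised $\Omega(\log(dm\log m)) = \Omega(\log dm)$ bound.

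To translate the primal shattering of $(\XX^d_k,\RRR_{dH,m})$ into a dual shattering of $(\XX^d_m,\RRR_{dH,k})$, the key observation is that the construction of Lemma~\ref{lem:lbHaus3} (inherited from the one in Lemma~\ref{lem:lbHaus1}) witnesses every subset with a metric ball of the same large radius $R-\eps$. Hence each shattered curve $q_i \in \XX^d_k$ can be re-packaged as a ball $b_{dH}(q_i,R-\eps) \in \RRR_{dH,k}$, and by the symmetry $d_{dH}(q_i,S_T) = d_{dH}(S_T,q_i)$ the witness family $\{S_T\}_T \subseteq \XX^d_m$ that realizes the primal shattering also shatters these balls in the dual sense. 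Plugging this into Lemma~5.18 closes the argument.

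The main obstacle I expect is this last translation step: one must verify that the Csik\'os--Kupavskii--Mustafa construction invoked in Lemma~\ref{lem:lbHaus3} can really be instantiated with a common radius $R$ across all $2^{\Omega(dk\log k)}$ subsets of the shattered set, rather than using a subset-dependent radius. This is highly plausible, since the reduction there replaces each halfspace by a ball of the same sufficiently large fixed radius, so all ranges that appear are intersections of $k$ balls of that radius; still, this uniform-radius property is the one place where the construction has to be opened up rather than used as a black box.
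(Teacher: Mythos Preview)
Your proposal is correct and follows the same approach as the paper's one-sentence proof, which simply cites Lemma~\ref{lem:lbHaus3} together with \cite[Lemma~5.18]{H11}. The only substantive addition you make is to surface the fixed-radius issue needed to pass from a primal shattering in $(\XX^d_k,\RRR_{dH,m})$ to a dual shattering in $(\XX^d_m,\RRR_{dH,k})$; the paper leaves this implicit, but your reading is accurate---the proof of Lemma~\ref{lem:lbHaus3} explicitly replaces every halfspace by a Euclidean ball of one sufficiently large fixed radius $R$, so every Hausdorff ball witnessing the shattering has the same radius, and the translation goes through exactly as you describe.
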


\begin{proof}
Theorem \ref{lem:lbHaus3} and \cite[Lemma 5.18]{H11}, which bounds the VC dimension of the dual range space as a function of the VC dimension of the primal space, imply the theorem.
\end{proof}

\begin{theorem}\label{ThmLbsExt}
For $d\geq 4$, the VC-dimension of the range spaces $(\XX^d_m,\RRR_{dF,k})$, $(\XX^d_m,\RRR_{dH,k})$, $(\XX^d_m,\RRR_{wF,k})$, $(\XX^d_m,\RRR_{F,k})$, and $(\XX^d_m,\RRR_{H,k})$ is $\Omega(\max(d k \log k,\log dm))$.
\end{theorem}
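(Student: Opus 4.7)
The plan is to mirror the strategy used for Theorem~\ref{ThmLbs}, but to replace the two-dimensional shattering lower bounds by their higher-dimensional counterparts that were just established. Concretely, the two ingredients I will combine are Lemma~\ref{lem:lbHaus3}, which gives an $\Omega(dk\log k)$ lower bound on the VC dimension of $(\XX^d_m,\RRR_{dH,k})$ in the regime $m=1$, and Lemma~\ref{lem:lbHaus4}, which gives an $\Omega(\log(dm))$ lower bound in the dual regime. Together they already yield the bound $\Omega(\max(dk\log k,\log(dm)))$ for the discrete Hausdorff range space.

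The second step is to transfer this lower bound to each of the other four range spaces $\RRR_{dF,k}$, $\RRR_{wF,k}$, $\RRR_{F,k}$, and $\RRR_{H,k}$. Here I will invoke Lemma~\ref{lem:allEqual}, which asserts that whenever one side has complexity one, the discrete Hausdorff, directed Hausdorff, discrete Fr\'echet, (continuous) Fr\'echet, weak Fr\'echet and symmetric Hausdorff distances between a single point and a curve $q\in\XX^d_k$ all coincide and equal $\min_{s\in V(q)}\|s-p\|$. The first lower bound of Lemma~\ref{lem:lbHaus3} was constructed in the $m=1$ regime, and the dual construction underlying Lemma~\ref{lem:lbHaus4} is in the $k=1$ regime; in both regimes the distance functions agree pointwise, so the shattered configurations carry over unchanged to every range space listed in the theorem.

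Finally, taking the maximum of the two lower bounds, which is valid because each holds individually for the same range space, yields the claimed $\Omega(\max(dk\log k,\log(dm)))$ bound for every distance measure in the statement. No new shattering construction is needed; the proof is a one-line combination of the three cited lemmas, in direct parallel with Theorem~\ref{ThmLbs}. The only place where care is required is checking that Lemma~\ref{lem:allEqual} indeed applies to each of the five range spaces in both simplified regimes, which I expect to be the main (and only mild) obstacle, since the lemma as stated treats $p$ as a single point in $\RR^d$ and $q$ as an element of $\XX^d_k$; the symmetric roles in the dual construction of Lemma~\ref{lem:lbHaus4} are handled by the symmetric distance measures and by the explicit inclusion of $d_{dH}(p,q)$ in the list of equal quantities.
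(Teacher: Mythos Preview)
Your proposal is correct and follows essentially the same approach as the paper: the paper's proof is literally the one-line invocation of Lemmas~\ref{lem:lbHaus3}, \ref{lem:lbHaus4}, and \ref{lem:allEqual}, exactly as you outline. Your additional remark about verifying that Lemma~\ref{lem:allEqual} applies in both the $m=1$ and $k=1$ regimes is appropriate and matches the implicit reasoning behind the paper's citation of that lemma.
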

\begin{proof}
The statement follows by applying Lemmas~\ref{lem:lbHaus3} and \ref{lem:lbHaus4} together with Lemma~\ref{lem:allEqual}. 
\end{proof}

\section{Implications}
\label{sec:implications}
In this section we demonstrate that bounds on the VC-dimension for the range space defined by metric balls on curves immediately implies various results about prediction and statistical generalization over the space of curves.   
In the following consider a range space $(X,\RRR)$ with a ground set $X$ of curves, where $\RRR$ are the ranges corresponding to metric balls for some distance measure we consider, and the VC-dimension is bounded by $\nu$.

This section discusses accuracy bounds that depend directly on the size $n = |X|$ and the VC-dimension $\nu$.  They will assume that $X$ is a random sample of some much larger set $X_{\textsf{big}}$ or an unknown continuous generating distribution $\mu$.  Under the randomness in this assumed sampling procedure, there is a probability of failure $\delta$ that often shows up in these bounds, but is minor since it shows up as $\log(1/\delta)$.  

These bounds take two closely-linked forms.  First, given a limited set $X$ from an unknown $\mu$, then how accurate is a query or a prediction made using only $X$.  Second, given the ability to draw samples (at a cost) from an unknown distribution $\mu$, how many are required so the prediction on the set of samples $X$ has bounded prediction error.  

Such large data sets of curves are now common place in many structured data applications.  For instance, the millions of ride-sharing trips taken every day, or the GPS traces Apple and Google and others collect on users' phones, or the tracking of migrating animals.  Because this data has a complex structure, and each associated curve may be large (i.e., $m$ is large), it is not clear how well analyses on families of such curves can provably generalize to predict new data.  The theme of the following results, as implied by our above VC-dimension results, is that if these families of curves are only inspected with or queried with curves with a small number of segments (i.e., $k$ is small), then the VC-dimension of the associated range space $\nu = O(k \log km)$ or $O(k^2 \log km)$ is small, and that such analyses generalize well.  We show this in several concrete examples.

%\begin{itemize}
%\item By Sauer's Lemma~\cite{Sau72}, there are at most $n^\nu$ distinct ranges described by $\c{A}$.  

\subparagraph*{Approximate range counting on curves.}
Given a large set of curves $X$ (of potentially very large complexity $m$), and a query curve $q$ (with smaller complexity $k$) we would like to approximate the number of curves nearby $q$.  For instance, we restrict $X$ to historical queries at a certain time of day, and query with the planned route $q$, and would like to know the chance of finding a carpool.  VC-dimension $\nu$ of the metric balls shows up directly in two analyses.  First, if we assume $X \sim \mu$ where $\mu$ is a much larger unknown distribution (but the real one), then we can estimate the accuracy of the fraction of all curves in this range within additive error $O(\sqrt{(1/|X|)(\nu + \log(1/\delta))})$.  
On the other hand, if $X$ is too large to conveniently query, we can sample a subset $S \subset X$ of size $O((1/\eps^2)(\nu + \log (1/\delta)))$ and know that the estimate for the fraction of curves from $S$ in that range is within additive $\eps$ error of the fraction from $X$.  Such sampling techniques have a long history in traditional databases~\cite{Olk93}, and have more recently become important when providing online estimates during a long query processing time as incrementally increasing size subsets are considered~\cite{BlinkDB}.  Ours provides the first formal analysis of these results for queries over curves.  

\subparagraph*{Density estimation of curves.}
A related task in generalization to new curves is density estimation.  Consider a large set of curves $X$ which represent a larger unknown distribution $\mu$ that models a distribution of curves; we want to understand how unusual a new curve $q$ would be, given we have not yet seen exactly the same curve before.  One option is to use the distance to the ($k$th) nearest neighbor curve in $X$, or a bit more robust option is to choose a radius $r$, and count how many curves are within that radius (e.g., the approximate range counting results above). 

Alternatively, for $X \subset \MM$, consider now a kernel density estimate $\kde_X : \MM \to \RR$ defined by $\kde_X(p) = \frac{1}{n} \sum_{p \in P} K(x,p)$ with kernel $K(x,p) = \exp(-\d(x,p)^2)$ (where $\d$ is some distance of choice among curves, e.g., $\d_F$).  The kernel is defined such that each superlevel set $K_x^\tau = \{p \in \MM \mid K(x,p) \geq \tau\}$ corresponds with some range $R \in \RRR$ so that $R \cap X = K_x^\tau \cap X$.  
%Also assume that for any $x$ that $\max_{p \in \MM} K(x,p) = K(x,x) = \kappa$, as is standard in our example $\kappa=1$.  
Then a random sample $S \subset X$ of size $O((1/\eps^2) (\nu + \log \frac{1}{\delta}))$ satisfies that $\|\kde_X - \kde_s\|_\infty \leq \eps$~\cite{JKPV11}.  Thus, again the VC-dimension $\nu$ of the metric balls directly influences this estimates accuracy, and for query curves with small complexity $k$, the bound is quite reasonable.

%\subparagraph*{Covering Curves.}

\subparagraph*{Sample complexity for classification of curves.}
Now consider the problem of classifying curves representing trajectories of people or animals.  For instance with individuals who enable GPS on their cell phone they can label some work-to-home trajectories (as $\chi(x) =+1$) or as other trips ($\chi(x)=-1$).  Then on unlabeled trips we can potentially predict which are work-to-home trajectories to build traffic and commute time models without manually labeling all routes.  Similar tasks may be useful for normal ($\chi(x) = +1)$ versus nefarious ($\chi(x)=-1$) activities when tracking people in an airport or a hostile zone.
%; or in tracking animals with normal behavior ($\chi(x)=+1$) versus sick or aggressive behavior that may require intervention ($\chi(x)=-1$).  
In each of these cases we may either have a very large number of labeled instances, and may want to sample them to some manageable size, or we may only have a limited number of samples, and want to know how much accuracy to trust based on the sample size.  All of these bounds are controlled by the VC-dimension of the family of classifiers used to make the prediction.  For trajectories, a sensible family of classifiers would be the ranges $\RRR$ defined by metric balls.  

That is consider some labeling function $\chi : X \to \{-1, +1\}$; now we say a range $R \subset \RRR$ misclassifies an object $x \in X$ if $x \in R$ and $\chi(x) = -1$ or $x \notin R$ and $\chi(x) = +1$.  
If there exists a range $R \subset \RRR$ such that all $x \in X \cap R$ have $\chi(x) = +1$ and all $x' \in X \setminus R$ have $\chi(x') = -1$; we say such a range \emph{perfectly separates} $(X,\chi)$. Then a random sample $Y \subset X$ of size $O((\nu/\eps) \log (\nu/\eps \delta))$~\cite{HW87} ensures that, with probability at least $1-\delta$, any range $R' \subset \RRR$ which perfectly separates $(Y, \chi)$ misclassifies at most $\eps n$ points in $X$.

Consider a random sample $Y \subset X$ of size $O((1/\eps^2)(\nu + \log \frac{1}{\delta}))$.  For any range $R \subset \RRR$, if the fraction of points in $Y$ is $|R \cap Y|/|Y| = \eta$, then with probability at least $1-\delta$, the fraction of points in $X$ is $|R \cap X|/|X| \in [\eta - \eps, \eta+\eps]$; that is its off by at most and $\eps$-fraction~\cite{LLS01,HS11}.  
If there is a labeling $\chi : X \to \{-1,+1\}$, this notably includes the range $R \in \RRR$ which misclassifies the least points (there may not be a perfect separator).  Hence a random sampling ensures at most an $\eps$-fraction more misclassified points are included in an estimate derived from this sample.  
Indeed the RBF kernel $K(x,p) = \exp(-\d(x,p)^2)$ defined above implies standard mechanism like kernel SVM or kernel perceptron~\cite{SS02} can be used to build classifiers, and together these bounds induce misclassification~\cite{LLS01} and margin approximation bounds~\cite{JKPV11}.  The small VC dimension $\nu$ implies they will generalize well.

\section{Acknowlegements}

We thank Peyman Afshani for useful discussions on the topic of this paper. We also thank the organizers of the 2016 NII Shonan Meeting ``Theory and Applications of Geometric Optimization'' where this research was initiated.

 \bibliography{biblio}

\begin{thebibliography}{10}

\bibitem{arxAD17}
Peyman Afshani and Anne Driemel.
\newblock On the complexity of range searching among curves.
\newblock {\em CoRR}, arXiv:1707.04789v1, 2017.

\bibitem{AD18}
Peyman Afshani and Anne Driemel.
\newblock On the complexity of range searching among curves.
\newblock In {\em Proceedings of the 28th Annual {ACM-SIAM} Symposium on
  Discrete Algorithms, {SODA} 2018, New Orleans, LA, USA, January 7-10, 2018},
  pages 898--917, 2018.

\bibitem{BlinkDB}
S.~Agarwal, B.~Mozafari, A.~Panda, H.~Milner, S.~Madden, and I.~Stoica.
\newblock Blink{DB}: queries with bounded errors and bounded response times on
  very large data.
\newblock In {\em EuroSys}, 1993.

\bibitem{AIKU10}
Yohji Akama, Kei Irie, Akitoshi Kawamura, and Yasutaka Uwano.
\newblock V{C} dimension of principal component analysis.
\newblock {\em Discrete \& Computational Geometry}, 44:589--598, 2010.

\bibitem{Alt-HausdorffDistance-1995}
Helmut Alt, Bernd Behrends, and Johannes Bl{\"o}mer.
\newblock Approximate matching of polygonal shapes.
\newblock {\em Annals of Mathematics and Artificial Intelligence},
  13(3):251--265, Sep 1995.

\bibitem{altgodau-95}
Helmut Alt and Michael Godau.
\newblock Computing the {F}r\'{e}chet distance between two polygonal curves.
\newblock {\em International Journal of Computational Geometry \&
  Applications}, 05:75--91, 1995.

\bibitem{AB99}
Martin Anthony and Peter~L. Bartlett.
\newblock {\em Neural Network Learning: Theoretical Foundations}.
\newblock Cambridge University Press, 1999.

\bibitem{astefanoaei2018multi}
Maria Astefanoaei, Paul Cesaretti, Panagiota Katsikouli, Mayank Goswami, and
  Rik Sarkar.
\newblock Multi-resolution sketches and locality sensitive hashing for fast
  trajectory processing.
\newblock In {\em International Conference on Advances in Geographic
  Information Systems (SIGSPATIAL 2018)}, volume~10, 2018.

\bibitem{BaldusB2017}
Julian Baldus and Karl Bringmann.
\newblock A fast implementation of near neighbors queries for {F}r\'{e}chet
  distance {(GIS Cup)}.
\newblock In {\em Proceedings of the 25th ACM SIGSPATIAL International
  Conference on Advances in Geographic Information Systems}, SIGSPATIAL'17,
  pages 99:1--99:4, 2017.

\bibitem{BEHW89}
Anselm Blumer, A.~Ehrenfeucht, David Haussler, and Manfred~K. Warmuth.
\newblock Learnability and the {V}apnik-{C}hervonenkis dimension.
\newblock {\em Journal of the ACM}, 36:929--965, 1989.

\bibitem{bringmannKN19}
Karl Bringmann, Marvin K{\"{u}}nnemann, and Andr{\'{e}} Nusser.
\newblock Walking the dog fast in practice: Algorithm engineering of the
  {F}r{\'{e}}chet distance.
\newblock In {\em 35th International Symposium on Computational Geometry, SoCG
  2019, June 18-21, 2019, Portland, Oregon, {USA}}, pages 17:1--17:21, 2019.

\bibitem{BG95}
Herv{\'e} Br\"{o}nnimann and Michael~T. Goodrich.
\newblock Almost optimal set covers in finite {VC}-dimension.
\newblock {\em Discrete \& Computational Geometry}, 1995.

\bibitem{BuchinDDM17}
Kevin Buchin, Yago Diez, Tom van Diggelen, and Wouter Meulemans.
\newblock Efficient trajectory queries under the {F}r{\'{e}}chet distance {(GIS
  Cup)}.
\newblock In {\em Proc. 25th Int. Conference on Advances in Geographic
  Information Systems (SIGSPATIAL)}, pages 101:1--101:4, 2017.

\bibitem{ceccarelloD019}
Matteo Ceccarello, Anne Driemel, and Francesco Silvestri.
\newblock {FRESH:} {F}r{\'{e}}chet similarity with hashing.
\newblock In {\em Algorithms and Data Structures - 16th International
  Symposium, {WADS} 2019, Edmonton, AB, Canada, August 5-7, 2019, Proceedings},
  pages 254--268, 2019.

\bibitem{CW89}
Bernard Chazelle and Emo Welzl.
\newblock Quasi-optimal range searching in spaces of finite {VC}-dimension.
\newblock {\em Discrete and Computational Geometry}, 4:467--489, 1989.

\bibitem{CKM18}
M{\'o}nika Csik{\'o}s, Andrey Kupavskii, and Nabil~H. Mustafa.
\newblock Optimal bounds on the {VC}-dimension.
\newblock {\em arXiv:1807.07924}, 2018.

\bibitem{csikos2019tight}
M{\'o}nika Csik{\'o}s, Nabil~H. Mustafa, and Andrey Kupavskii.
\newblock Tight lower bounds on the {VC}-dimension of geometric set systems.
\newblock {\em Journal of Machine Learning Research}, 20(81):1--8, 2019.

\bibitem{deBerg-2013fast}
Mark De~Berg, Atlas~F Cook, and Joachim Gudmundsson.
\newblock Fast {F}r{\'e}chet queries.
\newblock {\em {Computational Geometry}}, 46(6):747--755, 2013.

\bibitem{Berg-spq-15}
Mark {de Berg} and Ali~D. Mehrabi.
\newblock Straight-path queries in trajectory data.
\newblock In {\em {WALCOM:} Algorithms and Computation - 9th Int. Workshop,
  {WALCOM} 2015, Dhaka, Bangladesh, February 26-28, 2015. Proceedings}, pages
  101--112, 2015.

\bibitem{DriemelKS16}
Anne Driemel, Amer Krivo\v{s}ija, and Christian Sohler.
\newblock Clustering time series under the {Fr{\'{e}}chet} distance.
\newblock In {\em Proceedings of the 27th Annual {ACM-SIAM} {S}ymposium on
  {D}iscrete {A}lgorithms, {SODA}}, pages 766--785, 2016.

\bibitem{Driemel-lshc-17}
Anne Driemel and Francesco Silvestri.
\newblock Locally-sensitive hashing of curves.
\newblock In {\em 33st International Symposium on Computational Geometry,
  {SoCG} 2017}, pages 37:1--37:16, 2017.

\bibitem{DutschV17}
Fabian D{\"{u}}tsch and Jan Vahrenhold.
\newblock A filter-and-refinement- algorithm for range queries based on the
  {{F}r{\'{e}}chet} distance {(GIS Cup)}.
\newblock In {\em Proc. 25th Int. Conference on Advances in Geographic
  Information Systems (SIGSPATIAL)}, pages 100:1--100:4, 2017.

\bibitem{emiris2018}
Ioannis~Z. Emiris and Ioannis Psarros.
\newblock Products of {E}uclidean metrics and applications to proximity
  questions among curves.
\newblock In {\em Proc. 34th Int. Symposium on Computational Geometry (SoCG)},
  volume~99 of {\em LIPIcs}, pages 37:1--37:13, 2018.

\bibitem{GK14}
Alexander Gilbers and Rolf Klein.
\newblock A new upper bound for the {VC}-dimension of visibility regions.
\newblock {\em Computational Geometry: Theory and Applications}, 74:61--74,
  2014.

\bibitem{GJ95}
Paul~W. Goldberg and Mark~R. Jerrum.
\newblock Bounding the {Vapnik-Chervonenkis} dimension of concept classes
  parameterized by real numbers.
\newblock {\em Machine Learning}, 18:131--148, 1995.

\bibitem{Gudmundsson-stats-2017}
Joachim Gudmundsson and Michael Horton.
\newblock Spatio-temporal analysis of team sports.
\newblock {\em ACM Comput. Surv.}, 50(2):22:1--22:34, April 2017.

\bibitem{Gudmundsson-fqt-15}
Joachim Gudmundsson and Michiel Smid.
\newblock Fast algorithms for approximate {F}r\'echet matching queries in
  geometric trees.
\newblock {\em Computational Geometry}, 48(6):479 -- 494, 2015.

\bibitem{H11}
Sariel Har-Peled.
\newblock {\em Geometric Approximation Algorithms}.
\newblock American Mathematical Society, Boston, MA, USA, 2011.

\bibitem{HS11}
Sariel Har-Peled and Micha Sharir.
\newblock Relative (p,eps)-approximations in geometry.
\newblock {\em Discrete \& Computational Geometry}, 45(3):462--496, 2011.

\bibitem{HW87}
David Haussler and Emo Welzl.
\newblock Epsilon nets and simplex range queries.
\newblock {\em Discrete \& Computational Geometry}, 2:127--151, 1987.

\bibitem{huang2018epsilon}
Lingxiao Huang, Shaofeng Jiang, Jian Li, and Xuan Wu.
\newblock Epsilon-coresets for clustering (with outliers) in doubling metrics.
\newblock In {\em 2018 IEEE 59th Annual Symposium on Foundations of Computer
  Science (FOCS)}, pages 814--825. IEEE, 2018.

\bibitem{JKPV11}
Sarang Joshi, Raj~Varma Kommaraju, Jeff~M. Phillips, and Suresh
  Venkatasubramanian.
\newblock Comparing distributions and shapes using the kernel distance.
\newblock In {\em {ACM SoCG}}, 2011.

\bibitem{KM95}
Marek Karpinski and Angus Macintyre.
\newblock Polynomial bounds for {VC} dimension of sigmoidal neural networks.
\newblock In {\em STOC}, 1995.

\bibitem{LL17}
Elmar Langetepe and Simone Lehmann.
\newblock Exact {VC}-dimension for {L1}-visibility of points in simple
  polygons.
\newblock {\em arXiv:1705.01723}, 2017.

\bibitem{Laurila_MDC_2012}
J.~K. Laurila, Daniel Gatica-Perez, I.~Aad, Blom J., Olivier Bornet,
  Trinh-Minh-Tri Do, O.~Dousse, J.~Eberle, and M.~Miettinen.
\newblock The mobile data challenge: Big data for mobile computing research.
\newblock In {\em Pervasive Computing}, 2012.

\bibitem{LLS01}
Yi~Li, Phil Long, and Aravind Srinivasan.
\newblock Improved bounds on the samples complexity of learning.
\newblock {\em Journal of Computer and System Sciences}, 62:516--527, 2001.

\bibitem{Olk93}
Frank Olken.
\newblock {\em Random Sampling in Databases}.
\newblock PhD thesis, University of California at Berkeley, 1993.

\bibitem{Sau72}
Norbert Sauer.
\newblock On the density of families of sets.
\newblock {\em Journal of Combinatorial Theory Series A}, 13:145--147, 1972.

\bibitem{SS02}
Bernhard Sch\"olkopf and Alexander~J. Smola.
\newblock {\em Learning with Kernels: Support Vector Machines, Regularization,
  Optimization, and Beyond}.
\newblock MIT Press, 2002.

\bibitem{She72}
Saharon Shelah.
\newblock A combinatorial problem; stability and order for models and theories
  in infinitary languages.
\newblock {\em Pacific Journal of Mathematics}, 41(1), 1972.

\bibitem{Val98}
Pavel Valtr.
\newblock Guarding galleries where no point sees a small area.
\newblock {\em Israel Journal of Mathematics}, 104:1--16, 1998.

\bibitem{VC71}
Vladimir Vapnik and Alexey Chervonenkis.
\newblock On the uniform convergence of relative frequencies of events to their
  probabilities.
\newblock {\em Theory of Probability and its Applications}, 16:264--280, 1971.

\bibitem{Vap98}
Vladimir~N. Vapnik.
\newblock {\em Statistical Learning Theory}.
\newblock John Wiley \& Sons, 1998.

\bibitem{giscup2017}
Martin Werner and Dev Oliver.
\newblock {ACM SIGSPATIAL GIS} {C}up 2017: Range queries under {F}r\'{e}chet
  distance.
\newblock {\em SIGSPATIAL Special}, 10(1):24--27, June 2018.

\bibitem{Zheng-rfid-2016}
Feng Zheng and Thomas Kaiser.
\newblock {\em Digital Signal Processing for RFID}.
\newblock Wiley, 2016.

\end{thebibliography}
\bibliographystyle{plainurl}

\end{document}